\pdfoutput=1
\PassOptionsToPackage{table}{xcolor}
\documentclass[acmsmall,nonacm]{acmart}




\usepackage{stmaryrd}
\setcitestyle{numbers,sort&compress}

\AtEndPreamble{%
\theoremstyle{acmdefinition}
}


\usepackage{amssymb}
\usepackage{amsthm}
\usepackage{enumitem}
\usepackage{xspace}
\newtheorem{theorem}{Theorem}[section]

\newtheorem{lemma}[theorem]{Lemma}
\theoremstyle{definition}
\newtheorem{example}[theorem]{Example}

\usepackage{adjustbox}

\usepackage[binary-units]{siunitx}
\DeclareSIUnit{\txn}{txn}
\DeclareSIUnit{\batch}{batch}
\sisetup{per-mode=symbol}

\graphicspath{{./DrawIO/}}

\newcommand{\Replica}[1][r]{\MakeUppercase{#1}}

\newcommand{\ID}[1]{\mathop{\textsf{id}}(#1)}

\newcommand{\n}{\mathbf{n}}
\newcommand{\f}{\mathbf{f}}

\newcommand{\Primary}[1]{\mathcal{L}_{#1}}


\newcommand{\SignMessage}[2]{\langle#1\rangle_{#2}}

\newcommand{\Message}[2]{\textsc{#1},#2}

\newcommand{\Name}[1]{\textnormal{\textsc{#1}}}
\newcommand{\BFT}{BFT}
\newcommand{\PBFT}{\Name{PBFT}}

\newcommand{\HS}{\Name{HotStuff}}
\newcommand{\HSTwo}{\Name{HotStuff-2}}
\newcommand{\HSOne}{\Name{HotStuff-1}}
\newcommand{\sysname}{\Name{HotStuff-1}}

\newcommand{\SHSOne}{\Name{Streamlined HotStuff-1}}
\newcommand{\BHSOne}{\Name{Basic HotStuff-1}}
\newcommand{\BHSTwo}{\Name{Basic HotStuff-2}}
\newcommand{\PHSOne}{\Name{Streamlined HotStuff-1}}

\newcommand{\RCC}{\Name{RCC}}

\newcommand{\RDB}{\Name{Apache ResilientDB}}

\newcommand{\early}{early finality confirmation}
\newcommand{\strongspec}{Prefix Speculation}



\newcommand{\Certificate}[1]{\mathcal{P}(#1)}
\newcommand{\CCertificate}[1]{\mathcal{C}(#1)}
\newcommand{\Pending}{\mathcal{T}}



\newcommand{\abs}[1]{\lvert #1 \rvert}


\newcommand{\MName}[1]{\textsc{#1}}
\newcommand{\PP}{\mathbb{P}}

\newcommand{\Share}[2]{\delta_{#1}^{\mathcal{#2}}}


\usepackage[noend]{algorithmic}
\usepackage{algorithm,float}
\makeatletter

\makeatother

\usepackage{algorithmic}

\newenvironment{myprotocol}{
    \hrule
    \footnotesize
    \smallskip
    \algsetup{linenosize=\footnotesize}
    \begin{algorithmic}[1]
        
        \newcommand{\SPACE}{\item[]}
        \newcommand{\TITLE}[2]{\item[] \textbf{\underline{##1}} (##2) \textbf{:}\\[2pt]}
        \makeatletter
            \newcommand{\EVENT}[1]{\STATE \textbf{event} ##1 \textbf{do}\begin{ALC@g}}
            \newcommand{\ENDEVENT}{\end{ALC@g}}
        \makeatother
        
        \makeatletter
            \newcommand{\FUNCTION}[2]{\STATE \textbf{function} \Name{##1}(##2) \textbf{do}\begin{ALC@g}}
            \newcommand{\ENDFUNCTION}{\end{ALC@g}}
        \makeatother
}{
    \end{algorithmic}%
    \hrule
}

\usepackage{tikz,pgfplots,pgfplotstable}

\usetikzlibrary{shapes,arrows,automata,positioning,cd}
\tikzset{
  spotlessedge/.style   = {black, ->, >=stealth},
}
\usepackage{xcolor}

\usetikzlibrary{arrows.meta,decorations.pathreplacing}
\tikzset{
    >=Stealth,
    smalltext/.append style={scale=0.7},
    dot/.style={circle,scale=0.35,draw=black,fill=black}
}

\usetikzlibrary{automata, positioning, arrows}
\usetikzlibrary{arrows.meta}
\tikzset{
    >=Stealth,
    plot/.append style={baseline,scale=0.475},
    label/.append style={font=\strut\footnotesize},
    dot/.style={circle,scale=0.35,draw=black,fill=black},
}
\pgfplotscreateplotcyclelist{mycyclelist}{
    black   ,every mark/.append style={solid,fill=\pgfplotsmarklistfill},mark=*\\ 
    red     ,every mark/.append style={solid,fill=\pgfplotsmarklistfill},mark=square*\\
    blue    ,every mark/.append style={solid,fill=\pgfplotsmarklistfill},mark=triangle*\\
    teal    ,every mark/.append style={solid,fill=\pgfplotsmarklistfill},mark=pentagon*\\
    brown   ,every mark/.append style={solid,fill=\pgfplotsmarklistfill},mark=halfsquare*\\ 
    orange  ,every mark/.append style={solid,fill=\pgfplotsmarklistfill},mark=halfcircle*\\
    violet  ,every mark/.append style={solid,fill=\pgfplotsmarklistfill,rotate=180},mark=halfdiamond*\\
}
\pgfplotscreateplotcyclelist{mycyclelistex}{
    black   ,every mark/.append style={solid,fill=\pgfplotsmarklistfill},mark=*\\ 
    red     ,every mark/.append style={solid,fill=\pgfplotsmarklistfill},mark=square*\\
}
\pgfplotsset{
    tick label style={font=\large},
    legend style={font=\Large,cells={anchor=west}},
    title style={font=\Large},
    label style={font=\Large},
    width=262.5pt,
    height=185pt,
    every axis/.append style={
        ylabel near ticks,
        xlabel near ticks,
        mark size=2.5pt,
        cycle list name=mycyclelist,
        font=\Large,
        y tick label style={
                        /pgf/number format/precision=1,
                        /pgf/number format/fixed,
                        /pgf/number format/fixed zerofill
                    }
    },
    barstyle/.append style={
        ybar,
        bar width={0.5cm},
        enlarge x limits=0.4,
        enlarge y limits={upper=0.025},
        ymin=0,
        xtick=data
    }
}

\input{plotdata}

\newcommand{\DM}[1]{{\color{orange}{Dahlia: {#1}}}}

\newcommand{\Arxiv}[1]{#1}

\newcommand{\Add}[1]{#1}

\title{\textsc{HotStuff-1}: Linear Consensus with One-Phase Speculation}

\author{Dakai Kang}
\email{dakang@ucdavis.edu}
\orcid{0002-7867-3681}
\affiliation{%
  \institution{University of California, Davis}
  \country{USA}
}

\author{Suyash Gupta}
\email{suyash@uoregon.edu}
\orcid{0002-3240-1840}
\affiliation{%
  \institution{University of Oregon}
  \country{USA}
}

\author{Dahlia Malkhi}
\email{dahliamalkhi@ucsb.edu}
\orcid{0002-7038-7250}
\affiliation{%
  \institution{University of California, Santa Barbara}
  \country{USA}
}

\author{Mohammad Sadoghi}
\email{msadoghi@ucdavis.edu}
\orcid{0003-2779-6080}
\affiliation{%
  \institution{University of California, Davis}
  \country{USA}
}

\begin{document}

\begin{abstract}
This paper introduces \HSOne, a BFT consensus protocol that improves the latency of \HSTwo{} by two network hops while maintaining linear communication complexity against faults. 
Furthermore, \HSOne{} incorporates an incentive-compatible leader rotation design that motivates leaders to propose transactions promptly.
\HSOne{} achieves a reduction of two network hops by \textit{speculatively} sending clients \early{}s, after one phase of the protocol.
Introducing speculation into streamlined protocols is challenging because, unlike stable-leader protocols, these protocols cannot stop the consensus and recover from failures.
Thus, we identify {\em prefix speculation dilemma} in the context of streamlined protocols; 
\HSOne{} is the first streamlined protocol to resolve it.
\HSOne{} embodies an additional mechanism, \textit{slotting}, that thwarts delays caused by (1) rationally-incentivized leaders and
(2) malicious leaders inclined to sabotage other's progress. 
The slotting mechanism allows leaders to dynamically drive as many decisions as allowed by network transmission delays before view timers expire, thus mitigating both threats.



\end{abstract}




\maketitle

\section{Introduction}
\label{sec:intro}
This paper introduces \sysname{}, a \BFT{} consensus protocol designed to reduce latency while simultaneously maintaining scalability.
\sysname{} is primarily motivated by blockchains and online platforms that support digital asset payments and marketplaces~\cite{bitcoin,ethereum,hyperledger}. 
These systems employ a \BFT{} consensus protocol because it enables 
them to provide their clients access to a verifiable immutable ledger managed by multiple 
distrusting nodes, some of which may be malicious.
In these systems, especially financial platforms, response latency is crucial for user engagement and satisfaction. 
Moreover, the demands for low response latency are posed not only by the market but also by regulation. 
A manuscript detailing regulatory technical requirements for Financial Market Infrastructure (FMI) states $12$ key standards for 
operating an FMI, among which are performance requirements such as meeting peak throughput demand and timely responsiveness~\cite{bis-pfmi}.

In this paper, we are 
interested in \BFT{} consensus protocols for a {\em partially-synchronous} setting, due to 
their safety against temporary network delays.  
Pioneering \BFT{} consensus protocols belonging to the \PBFT{} family~\cite{pbftj,blockchain-book}
employ a {\em stable-leader} design, where one replica designated as the leader initiates 
a two-phase consensus algorithm that determines the ledger. 
Unfortunately, the stable-leader design has some drawbacks. 

\textbf{D1:} a dedicated leader increases {\em censorship opportunities}, as the leader decides what transactions to propose~\cite{hotstuff}. 

\textbf{D2:} when the leader fails, these protocols 
switch to a {\em view-change} algorithm that incurs quadratic communication complexity to replace the leader (or change the view) and
drops the system throughput to zero, as consensus on new transactions can start only after the view-change~\cite{aardvark,prime}. 

\textbf{D3:} it {\em inhibits load and reward balancing} among the replicas~\cite{rcc}.

\textbf{D4:} a malicious leader can {\em keep the system throughput at the lowest level} and prevent detection by proposing transactions just before the timeout period~\cite{aardvark,prime,rbft}.
 
Some recent protocols that follow the stable-leader design attempt to solve \textbf{D3} and \textbf{D4} by requiring all the replicas to act 
as the leader and/or track the leader's performance~\cite{aardvark,prime,rbft,mirbft,rcc,fairledger}. 
However, these works require several redundant rounds of consensus that track the leader's performance (e.g. RBFT~\cite{rbft} and Fairledger~\cite{fairledger}) and 
face collusion attacks by multiple malicious leaders (e.g. MirBFT~\cite{mirbft} and RCC~\cite{rcc}). 

Alternatives to the stable-leader design emerged in the blockchain world. 
First, Tendermint introduced a design that proactively replaces the leader at the end of each 
consensus decision~\cite{tendermint}. 
Later, HotStuff~\cite{hotstuff} reduced view-change communication costs to linear, and additionally \textit{streamlined} protocol 
phases to (at least) double throughput (solving \textbf{D1} to \textbf{D4}). 
Thus, streamlined linear protocols in the \HS{} 
family mitigate the drop in system throughput by allowing regular leader replacement at 
(essentially) no communication cost. 
However, these protocols face the following three additional challenges: 

{\bf D5: Increased latency.}
Despite recent improvements~\cite{hs2}, streamlined protocols incur higher latency 
than the stable-leader protocols that employ optimizations like {\em speculative-execution}~\cite{sbft,poe}.

{\bf D6: Leader-slowness phenomenon.} 
In blockchain systems, regular leader replacement creates an undesirable incentive structure: 
a leader may be inclined to delay proposing a block of transactions as close as possible to the end of its
view expiration period in order to pick the transactions that offer the highest fees. 
Similarly, block-builders participating in a proposer-builder auction will wait as long as possible to maximize MEV 
(maximal extractable value) exploits~\cite{Daian2019FlashB2,pbs2024,time-role-mev}. 
Thus, rational leaders/builders may slow down progress and cause clients to suffer increased latency.

{\bf D7: Tail-forking attack.}
BeeGees~\cite{beegees} exposed another vulnerability of streamlined protocols, 
where faulty leaders prevent proposals by correct leaders from being committed unless 
there are consecutive correct leaders. This attack surfaces when faulty leaders 
are interjected between correct leaders as leaders are rotated. While they may not succeed 
in completely censoring transactions, faulty leaders may cause specific clients to suffer 
increased latency and overall, slow down progress.

Thus, we are facing a conundrum: 
on the one hand, stable-leader protocols yield optimal latency under no-failure cases through speculative execution and do not face \textbf{D5} to \textbf{D7}.
However, they have yet to solve \textbf{D1} and \textbf{D2}, and solving \textbf{D3} and \textbf{D4} introduces new challenges.
On the other hand, streamlined protocols resolve \textbf{D1} to \textbf{D4} but have yet to solve 
\textbf{D5} to \textbf{D7}. 

\sysname{} resolves these seeming trade-offs by introducing a \BFT{} consensus solution 
that embodies two principal contributions:
\begin{enumerate}
    \item  
    A novel algorithmic core that combines regular leader rotation with linear communication, streamlining and speculative execution. 
    \HSOne{} acts as an optimist by speculatively executing client requests and serving the clients with the results 
    of uncommitted transactions.
    
    \item An {\em adaptive slotting} algorithm that provides each leader with multiple slots 
    to propose transactions. 
    \HSOne{} uses slotting to maintain consistent high performance by mitigating the impacts of leader-slowness and tail-forking.

\end{enumerate}

\paragraph{Early Finality Confirmation through Speculation} 
The notion of applying speculative execution to \BFT{} protocols is not new. 
In his PhD thesis~\cite{miguel-thesis}, Miguel Castro  presented the idea of applying {\em tentative execution} to \PBFT{}, which was later expanded/evaluated by PoE~\cite{poe}.
Several other flavors of speculative execution also exist (Zyzzyva~\cite{zyzzyvaj} and SBFT~\cite{sbft}).
These papers illustrate that speculative execution can reduce the latency of \BFT{} consensus in the no-failure case.
Unfortunately, applying speculative execution to streamlined protocols is not a straightforward extension.
 

These stable-leader protocols {\bf stop} speculative execution during the recovery/view change phases because they need to run an explicit view-change protocol (\textbf{D2}).
At the end of the view-change protocol, all replicas start the new view when they receive from the new leader a {\em state}.
This state starts from the last agreed-upon checkpoint, and for each sequence number that some replica claims to have observed since the last checkpoint, this state includes a prepare-certificate (if available) or a proposal from the previous leader.%
\footnote{Alternatively, if the new leader does not have access to any prepare-certificate for a sequence number, it can leave 
that sequence number as empty~\cite{pbftj}.}
However, before a replica can add any of these sequence numbers/proposals to its log, the leader needs to rerun consensus on each of them.

Streamlined protocols {\bf do not have} the option of stopping consensus and rerunning consensus on past transactions, which makes introducing speculation challenging. 
Thus, we identify the existence of a conundrum when applying speculation to the streamlined protocols; 
we term this conundrum as the {\em prefix speculation dilemma}.
\HSOne{} is the first streamlined protocol to employ speculative execution and resolve this conundrum by dictating when it is safe for a replica to speculatively execute a proposal.

Consequently, \HSOne{} treats clients as first-class citizens of consensus by serving them with \textit{\early}. 
\HSOne{} builds streamlining and speculation over \HSTwo~\cite{hs2}. 
Unlike \HSTwo{}, which forces replicas to wait until they learn whether a transaction has committed, \HSOne{} allows replicas to send commit-votes on transactions directly to clients when a transaction is prepared and highly likely to commit, which also allows replicas to speculate on the execution results and send responses to clients.
On collecting responses from a quorum of $\n-\f$ replicas, clients learn two things at once: a commit decision and its execution result, which enables an \early{}.
Thus, \HSOne{} meets the challenges \textbf{D1} to \textbf{D5}.

\paragraph{Low latency through slotting.}
\HSOne{} resolves a subset of the challenges we listed earlier in this section, 
but challenges like leader slowness (\textbf{D6}) and tail-forking attacks (\textbf{D7}) remain.
Therefore, we incorporate a novel {\em slotting} mechanism into \HSOne{}. 
Slotting allows each leader to propose multiple successive blocks of transactions; each leader has access to multiple slots and can propose one block of transactions per slot. 
Assigning more than one slot to a leader motivates a rational leader 
to ensure that its blocks commit quickly, opening the opportunity to propose more new blocks.
%
However, fixing the number of slots per leader/view does not eliminate the slowness attack; a fast leader will slow down its last slot.
Therefore, we devise an \textit{adaptive} slotting mechanism that allows a leader to propose as many slots 
as it can during the time span allotted to its view.
Permitting adaptive slotting in a streamlined consensus protocol unravels a new challenge:
how can the subsequent leader determine if it has received the certificates corresponding to the last slot of the preceding leader? 
We introduce the notion of \emph{trusted/distrusted previous leaders} to enable a correct leader to propose its first slot at the network speed between itself and the previous leader if the previous leader is correct.

\paragraph{Resilience to tail-forking attacks.} \HSOne{} with slotting guarantees that in each view $v$, if $\Primary{v}$ proposed at least two slots, at most one could remain uncertified, and it could only be tail-forked if fewer than $\f+1$ correct replicas voted for it. This is achieved via \emph{carry} blocks and a dual-certificate mechanism: \emph{New-View} and \emph{New-Slot} certificates, which enforce the slot’s inclusion in the well-formed first-slot proposal sent by the next leader.


We illustrate the practicality of our design by implementing \HSOne{} (with and without slotting) in \RDB{} (incubating)~\cite{apacheresdb} 
and evaluating it against two baselines: \HS{} and \HSTwo{}. 
Our results affirm that \HSOne{} yields lower latency than the baselines; 
in the no-failure case, \HSOne{} (with and without slotting) yields up to $41.5\%$ and $24.2\%$ lower latency.
Additionally, we illustrate the resistance of \HSOne{} (with slotting) against leader-slowness and tail-forking attacks.
In summary, we make the following contributions:
\begin{enumerate}[wide,nosep]
    \item We introduce \HSOne{}, the first speculative, streamlined and linear \BFT{} consensus protocol that serves clients with \early{s} for their transactions.

    \item We expose a prefix speculation dilemma that exists in the context of streamlined \BFT{} protocols that employ speculation 
    and present a solution tailored for \HSOne{}.


    \item We introduce slotting in \HSOne{} to mitigate leader-slowness and tail-forking attacks.
   Our slotting mechanism is adaptive, yet guarantees no delay for subsequent leaders. 
    
\end{enumerate}

\section{Background and System Model}
\label{sec:model}
Modern databases require replication to guarantee availability to their clients; 
consensus protocols help keep these replicas consistent~\cite{paxossimple,raft}.
As consensus can quickly bottleneck system performance, a large body of existing work attempts to
optimize these protocols~\cite{distbook}.
A majority of existing databases~\cite{spanner,tidb,cockroachdb} employ {\em crash fault-tolerant} 
consensus protocols to guarantee consistent replication despite crash failures~\cite{paxossimple,raft}.
This crash-failure threat model is sufficient for these databases as they are managed by a single organization.
In this paper, we focus on designing efficient Byzantine Fault-Tolerant (\BFT{}) consensus protocols that can guard against arbitrary malicious failures. Such protocols are necessary for databases managed by multiple parties; commonly used in financial trading and blockchain applications~\cite{apacheresdb,caper,basil}.

We assume the system model adopted by existing partially synchronous \BFT{} 
consensus protocols~\cite{pbftj,zyzzyvaj,hotstuff,poe,sbft}. We assume a system 
of $\n$ replicas, of which at most $\f$ are faulty (malicious or crash-failed), and 
the remaining $\n-\f$ replicas are correct; $\n \geq 3\f + 1$. Correct replicas 
follow the protocol: on the same input, produce the same output. This system receives 
requests from a set of clients; any number of clients can be faulty. We use $R$ 
and $c$ to denote a replica and a client, and each replica is assigned a unique 
identifier in the range $[1,\n]$ using function $\ID{R}$.

{\bf Authenticated communication}:
each client/replica uses digital signatures to sign a message~\cite{cryptobook}.
Additionally, replicas make use of the BLS threshold signature scheme~\cite{bls} 
to form $(\n,t)$ threshold signatures. Each replica $R$ has access to a private 
signature key, which it uses to create a signature share $\Share{R}{}$. An aggregator 
needs only $t$ shares out of $\n$ to create the threshold signature. A receiver can use 
the corresponding public key to verify whether at least $t$ replicas contributed to 
this signature. We use the notation $\SignMessage{m}{R}$ to denote a signature or a 
threshold signature share on message $m$ by replica $R$. Correct replicas only 
accept {\em well-formed} messages that have a valid signature.
Further, we assume the existence of a collision-resistant hash function $H(x)$, where it is 
impossible to find a value $x'$, such that $H(x) = H(x')$~\cite{cryptobook}.

{\bf Adversary model}:
Faulty replicas can delay, drop, and duplicate any message and 
collude with each other. However, a faulty replica cannot forge the 
identity/messages of a correct replica. 

{\bf Synchrony}:
We assume a partial synchrony model~\cite{dworkmodel} where there is a known bound 
$\Delta$ on message transmission delays, such that after an unknown time called \emph{GST} 
all transmissions arrive at their destinations within $\Delta$ bounds. 

{\bf System Guarantees}:
The goal is for replicas to form an agreement on a global ledger of transactions requested by clients and respond to clients with the outcome of executing transactions in sequential order.  
There are two requirements; {\em safety} is required under asynchrony and {\em liveness} 
is required under synchrony/\emph{GST}:

\begin{enumerate}[nosep]
    \item \textbf{Safety}: If two correct replicas $R$ and $R'$ commit two transactions $T$ and $T'$ at sequence number $k$ then $T = T'$.
    
    \item \textbf{Liveness}: Each correct replica will eventually commit a transaction $T$. 

\end{enumerate}

\section{Speculation in Streamlined Protocols}
\label{s:motivation}
Our primary goal is to reduce the latency for partially-synchronous streamlined consensus 
protocols. That is, we aim to bridge the gap between the latency of streamlined protocols and optimized 
stable-leader consensus protocols without losing a vital tenet: linearity. 
An additional goal of this work is to mitigate the slowness attacks and tail-forking attacks from streamlined protocols.

To this extent, we design \sysname{}, which uses two popular system design principles, speculation and slotting, to guarantee 
(1) low latency while maintaining linearity, (2) freedom from the slowness attack, and (3) resilience to tail-forking attack.

In the rest of this section, we discuss \HSOne{} (speculation) and defer discussion on slotting until \S\ref{s:slotting}.
To illustrate the challenges in introducing speculation to streamlined consensus protocols, 
we first briefly recap the skeleton of the \HSTwo~\cite{hs2} protocol.

\begin{figure*}[t]
    \centering
    \begin{tikzpicture}[scale=0.8, yscale=0.4]

        \draw[thick,draw=blue] (1.25,   4) edge ++(4.75, 0)
                               (7.25,   4) edge ++(4.75, 0)
                               (13.25,   4) edge ++(4.75, 0);
        
        \draw[thick,draw=black!75] (1.25,   0) edge ++(4.75, 0)
                                   (1.25,   1) edge ++(4.75, 0)
                                   (1.25,   2) edge ++(4.75, 0)
                                   (1.25,   3) edge ++(4.75, 0);
        
        \draw[thick,draw=black!75] (7.25,   0) edge ++(4.75, 0)
                                   (7.25,   1) edge ++(4.75, 0)
                                   (7.25,   2) edge ++(4.75, 0)
                                   (7.25,   3) edge ++(4.75, 0);

        \draw[thick,draw=black!75] (13.25,   0) edge ++(4.75, 0)
                                   (13.25,   1) edge ++(4.75, 0)
                                   (13.25,   2) edge ++(4.75, 0)
                                   (13.25,   3) edge ++(4.75, 0);

        \node[left,font=\tiny] at (1.3, 0) {$\Replica_3$};
        \node[left,font=\tiny] at (1.3, 1) {$\Replica_2$};
        \node[left,font=\tiny] at (1.3, 2) {$\Replica_1$};
        \node[left,font=\tiny] at (1.3, 3) {$\Replica_0$};
        \node[left,blue,font=\tiny] at (1.3, 4) {$C$};

        \node[left,font=\tiny] at (7.3, 0) {$\Replica_3$};
        \node[left,font=\tiny] at (7.3, 1) {$\Replica_2$};
        \node[left,font=\tiny] at (7.3, 2) {$\Replica_1$};
        \node[left,font=\tiny] at (7.3, 3) {$\Replica_0$};
        \node[left,blue,font=\tiny] at (7.3, 4) {$C$};

        \node[left,font=\tiny] at (13.3, 0) {$\Replica_3$};
        \node[left,font=\tiny] at (13.3, 1) {$\Replica_2$};
        \node[left,font=\tiny] at (13.3, 2) {$\Replica_1$};
        \node[left,font=\tiny] at (13.3, 3) {$\Replica_0$};
        \node[left,blue,font=\tiny] at (13.3, 4) {$C$};

        \node[left,font=\tiny] at (3.6, 4.8) {(i)};
        \node[left,font=\tiny] at (2.3, 3.5) {$\PP_0$};


        \node[left,font=\tiny] at (9.6, 4.8) {(ii)};
        \node[left,font=\tiny] at (8.3, 3.5) {$\PP_0$};
        \node[left,font=\tiny] at (11.3, 3.5) {$\PP_1$};


        \node[left,font=\tiny] at (15.6, 4.8) {(iii)};
        \node[left,font=\tiny] at (14.3, 3.5) {$\PP_0$};
        \node[left,font=\tiny] at (15.7, 3.5) {$\PP_1$};
        \node[left,font=\tiny] at (17.3, 3.5) {$\PP_2$};


        \draw[thin,draw=black!75] (1.5, 0) edge ++(0, 4)
                                  (3, 0) edge ++(0, 4)
                                  (4.5, 0) edge ++(0, 4);

        \draw[thin,draw=black!50] (2.25, 0) edge ++(0, 4)
                                  (3.75, 0) edge ++(0, 4)
                                  (5.25, 0) edge ++(0, 4)
                                  (3.75, 0) edge ++(0, -0.6)
                                  (5.25, 0) edge ++(0, -0.6);

        \draw[thin,draw=black!75] (7.5, 0) edge ++(0, 4)
                                  (9, 0) edge ++(0, 4)
                                  (10.5, 0) edge ++(0, 4);

        \draw[thin,draw=black!50] (8.25, 0) edge ++(0, 4)
                                  (9.75, 0) edge ++(0, 4)
                                  (11.25, 0) edge ++(0, 4)
                                  (9.75, 0) edge ++(0, -0.6)
                                  (11.25, 0) edge ++(0, -0.6);

        \draw[thin,draw=black!75] (13.5, 0) edge ++(0, 4)
                                  (15, 0) edge ++(0, 4)
                                  (16.5, 0) edge ++(0, 4);

        \draw[thin,draw=black!50] (14.25, 0) edge ++(0, 4)
                                  (15.75, 0) edge ++(0, 4)
                                  (17.25, 0) edge ++(0, 4)
                                  (15.75, 0) edge ++(0, -0.6)
                                  (17.25, 0) edge ++(0, -0.6);

        \path[->] (1.5, 3) edge ++(0.75, -1)
                         edge ++(0.75, -2)
                         edge ++(0.75, -3);

        \path[->] (3, 3) edge ++(0.75, -1)
                         edge ++(0.75, -2)
                         edge ++(0.75, -3);

        \path[->] (4.5, 3) edge ++(0.75, -1)
                         edge ++(0.75, -2)
                         edge ++(0.75, -3);
        


        \path[->] (7.5, 3) edge ++(0.75, -1)
                         edge ++(0.75, -2)
                         edge ++(0.75, -3);

        \path[->] (9, 3) edge ++(0.75, -1)
                         edge ++(0.75, -2)
                         edge ++(0.75, -3);

        \path[->] (10.5, 2) edge ++(0.75, 1)
                         edge ++(0.75, -1)
                         edge ++(0.75, -2);

        \path[->] (13.5, 3) edge ++(0.75, -1)
                         edge ++(0.75, -2)
                         edge ++(0.75, -3);

        \path[->] (15, 2) edge ++(0.75, 1)
                         edge ++(0.75, -1)
                         edge ++(0.75, -2);

        \path[->] (16.5, 1) edge ++(0.75, 1)
                         edge ++(0.75, 2)
                         edge ++(0.75, -1);

        \path[->] (2.25, 2) edge ++(0.75, 1)
                  (2.25, 1) edge ++(0.75, 2)
                  (2.25, 0) edge ++(0.75, 3);

        \path[->] (3.75, 2) edge ++(0.75, 1)
                  (3.75, 1) edge ++(0.75, 2)
                  (3.75, 0) edge ++(0.75, 3);

        \path[->] (5.25, 0) edge ++(0.75, 2)
                  (5.25, 1) edge ++(0.75, 1)
                  (5.25, 3) edge ++(0.75, -1);



        \path[->] (8.25, 2) edge ++(0.75, 1)
                  (8.25, 1) edge ++(0.75, 2)
                  (8.25, 0) edge ++(0.75, 3);

        \path[->] (9.75, 0) edge ++(0.75, 2)
                  (9.75, 1) edge ++(0.75, 1)
                  (9.75, 3) edge ++(0.75, -1);
                  
        \path[->] (11.25, 0) edge ++(0.75, 2)
                  (11.25, 1) edge ++(0.75, 1)
                  (11.25, 3) edge ++(0.75, -1);

        \path[->] (14.25, 3) edge ++(0.75, -1)
                  (14.25, 1) edge ++(0.75, 1)
                  (14.25, 0) edge ++(0.75, 2);

        \path[->] (15.75, 3) edge ++(0.75, -2)
                  (15.75, 2) edge ++(0.75, -1)
                  (15.75, 0) edge ++(0.75, 1);

        \path[->] (17.25, 3) edge ++(0.75, -3)
                  (17.25, 2) edge ++(0.75, -2)
                  (17.25, 1) edge ++(0.75, -1);


        \path[->,blue] (5.25, 3) edge ++(0.75, 1)
                  (5.25, 2) edge ++(0.75, 2)
                  (5.25, 1) edge ++(0.75, 3)
                  (5.25, 0) edge ++(0.75, 4);

        \path[->,blue] (9.75, 3) edge ++(0.75, 1)
                  (9.75, 2) edge ++(0.75, 2)
                  (9.75, 1) edge ++(0.75, 3)
                  (9.75, 0) edge ++(0.75, 4);

        \path[->,blue] (15.75, 3) edge ++(0.75, 1)
                  (15.75, 2) edge ++(0.75, 2)
                  (15.75, 1) edge ++(0.75, 3)
                  (15.75, 0) edge ++(0.75, 4);

        \path[->,blue] (17.25, 3) edge ++(0.75, 1)
                  (17.25, 2) edge ++(0.75, 2)
                  (17.25, 1) edge ++(0.75, 3)
                  (17.25, 0) edge ++(0.75, 4);

        \node[below,smalltext] at (1.85, 0) {\fontsize{5}{12}\selectfont\MName{Propose}};
        \node[below,smalltext] at (3.35, 0) {\fontsize{5}{12}\selectfont\MName{Prepare}};
        \node[below,smalltext] at (4.85, 0) {\fontsize{5}{12}\selectfont\MName{Commit}};
        \node[below,smalltext] at (2.6, 0) {\fontsize{5}{12}\selectfont\MName{Vote}};
        \node[below,smalltext] at (4.1, 0) {\fontsize{5}{12}\selectfont\MName{Vote2}};
        \node[below,smalltext] at (5.65, 0) {\fontsize{5}{12}\selectfont\MName{NewView}};

        \node[below,smalltext] at (7.85, 0) {\fontsize{5}{12}\selectfont\MName{Propose}};
        \node[below,smalltext] at (9.35, 0) {\fontsize{5}{12}\selectfont\MName{Prepare}};
        \node[below,smalltext] at (10.85, 0) {\fontsize{5}{12}\selectfont\MName{Propose}};
        \node[below,smalltext] at (8.6, 0) {\fontsize{5}{12}\selectfont\MName{Vote}};
        \node[below,smalltext] at (10.15, 0) {\fontsize{5}{12}\selectfont\MName{NewView}};
        \node[below,smalltext] at (11.6, 0) {\fontsize{5}{12}\selectfont\MName{Vote}};

        \node[below,smalltext] at (13.85, 0) {\fontsize{5}{12}\selectfont\MName{Propose}};
        \node[below,smalltext] at (15.35, 0) {\fontsize{5}{12}\selectfont\MName{Propose}};
        \node[below,smalltext] at (16.85, 0) {\fontsize{5}{12}\selectfont\MName{Propose}};
        \node[below,smalltext] at (14.65, 0) {\fontsize{5}{12}\selectfont\MName{NewView}};
        \node[below,smalltext] at (16.15, 0) {\fontsize{5}{12}\selectfont\MName{NewView}};

        \node[font=\tiny] at (3.75, -1) {Lock $\PP_0$};
        \node[font=\tiny] at (5.25, -1) {Execute $\PP_0$};

        \node[font=\tiny] at (9.75, -1) {Speculatively};
        \node[font=\tiny] at (9.75, -1.7) {Execute $\PP_0$};
        \node[font=\tiny] at (11.25, -1) {Commit $\PP_0$};

        \node[font=\tiny] at (15.75, -1) {Speculatively};
        \node[font=\tiny] at (15.75, -1.7) {Execute $\PP_0$};
        \node[font=\tiny] at (17.25, -1) {Commit $\PP_0$,};
        \node[font=\tiny] at (17.25, -1.7) {Speculatively};
        \node[font=\tiny] at (17.25, -2.4) {Execute $\PP_1$};
        
    \end{tikzpicture}     
\caption{Workflows of (i) \BHSTwo, (ii) \BHSOne, and  (iii) \SHSOne}
\label{fig:normal_case}
\end{figure*}


\begin{flushleft}
    \textbf{Recap of HotStuff-2}
\end{flushleft}

\HSTwo{} optimizes \HS{} 
by reducing commit latency by one phase (or two half-phases).
\HSTwo{} operates in a succession of {\em views} (Figure~\ref{fig:normal_case}(i)). In each view, 
a leader proposes a transaction $T$ and forms consecutive certificates on the initial proposal over two-and-half phases. 
In the first half-phase, the leader proposes the transaction $T$. In each subsequent phase:

\begin{enumerate}[nosep]
    \item Replicas generate threshold signature shares to ensure that at least $\n-\f$ replicas 
    accept the leader’s proposal and send it to the leader. 

    \item The leader aggregates threshold shares from $\n-\f$ replicas into a threshold signature, 
    which we refer to as a {\em certificate}, and broadcasts it to all the replicas.
\end{enumerate}

This chain of certificates guarantees safety as follows: The first certificate ({\em prepare-certificate}) 
guarantees non-equivocation by proving that it chains to a correct previous certificate and has the support 
of at least $\n-\f$ replicas. The second is a {\em commit-certificate}, a certificate-of-certificate, 
guaranteeing that $\n - \f$ replicas have received the prepare-certificate, and despite 
any $\f$ failures, $T$ will be committed. Replicas that learn the commit-certificate can mark 
$T$ {\em committed}, execute it, and return responses to the client; $T$ becomes committed to 
the immutable ledger.
These responses to the clients are often referred to as {\em finality confirmations}, as the corresponding transactions will never get revoked.
\\
\begin{flushleft}
\textbf{Sending \early{s}.}
\end{flushleft}

In the good case (no-failures), a \HSTwo{} client receives finality confirmations after two and half-phases (excluding the two network hops to receive client requests and send a response to the client).
With \HSOne{}, we want to cut down this delay to one and a half-phases.
\HSOne{} achieves this goal by making clients the first-class citizens of the consensus process--direct \textit{learners} of consensus decisions.
\HSOne{} requires replicas to employ speculative execution to serve clients with \early{s}.

Rather than requiring replicas to wait until they learn whether a transaction {\em has committed}, 
\HSOne{} allows replicas to speculate precisely when a transaction is prepared and highly likely to be committed by a quorum in \HSTwo{}. 
More specifically, 
replicas are allowed to speculate on a proposal in the second phase of the protocol, upon voting to commit a prepare-certificate.
%
Replicas execute a transaction $T$ as soon as they have the prepare-certificate for $T$ and send a response to the clients.
Thus, clients directly receive commit-votes and the result of executing $T$, which enables an \early.
When a client receives responses from a quorum of $\n-\f$ replicas, it learns two things: a transaction has been committed, and the execution result has been computed in advance. 
Safety follows from the commit-safety of \HSTwo{} because a client can determine if a commit-certificate will form.

In a non-speculative protocol, a client needs to collect only $\f+1$ execution responses to determine the finality because correct replicas execute a transaction once the commit decision is reached; response from just one correct replica guarantees commitment. 
However, in \HSOne{}, clients need to collect $\n-\f$ responses because $\f+1$ speculative responses only guarantee that one correct replica prepares the transaction. 
Commitment is guaranteed only when at least $\f+1$ correct replicas prepare the transaction.
Thus, a client learns that a transaction will finalize only upon collecting $\n-\f$ responses. 
Figure~\ref{fig:normal_case} (ii) depicts our \HSOne{} protocol.

Although clients of \HSOne{} wait for $\f$ additional messages (temporarily increasing memory footprint) compared to clients of \HS{} and \HSTwo{}, we argue that \HSOne{} clients do not incur higher latency because they receive early finality confirmations. 
In \S\ref{sec:eval}, we conduct several experiments to validate this claim. 
Moreover, a client can delay verifying and processing these additional responses as long as necessary while prioritizing other tasks. 
Such a delay would not impact the latency of \HSOne{} because replicas do not wait for any input from the client.

\begin{flushleft}
\textbf{The Prefix Speculation Dilemma.}
\end{flushleft}

In \HSOne{}, when a client receives a quorum of responses for a transaction $T$, it learns that $T$ will be committed and that finality has been reached for appending $T$ to the ledger in sequence order.
This decision also commits all transactions preceding $T$, and the result of executing $T$ reflects processing the {\em full prefix} of transactions up to and including $T$. 
However, responses for $T$ \textbf{must not} be combined with those for preceding transactions to form a quorum.
That is, say $T$ succeeds an earlier transaction $T'$ in sequence order, $T' \prec T$. 
The commit-votes (speculative responses) of $T$ must not be used as commit-votes of $T'$ in forming a commit-decision on $T'$.

This brings forth a challenging dilemma with respect to speculation \footnote{See Appendix~\ref{app:safe-spec1} for a detailed explanation of why the dilemma breaks {\em safety}.}: 
the responses from $T$ represent the execution of a full prefix ending with $T$.
When a replica $R$ speculatively executes $T$, it must execute all transactions that precede it. 
However, if $R$ did not commit the preceding transaction $T'$ prior to executing $T$, 
it {\bf must not} send responses for $T$ to clients because these responses represent commit-votes. Otherwise,
clients can mistakenly combine commit-votes from a partial quorum on $T'$ with commit-votes from another partial quorum on $T$
and assume that a decision has been reached on $T'$. 
On the other hand, $R$ must ensure there is \textbf{``no view gap''} between the view in which $T$ is prepared and its current view, to prevent speculative execution on a proposal that might be superseded by a higher certificate that is formed in the gap and unknown to $R$.

\begin{flushleft}
\textbf{Note on Speculation in Stable-Leader Protocols.}
\end{flushleft}
As stated in the introduction, the notion of applying speculative execution to \BFT{} protocols is not new~\cite{miguel-thesis,poe,zyzzyvaj,sbft}.
However, we argue that applying speculative execution to streamlined protocols is not a straightforward extension.

These stable-leader protocols stop speculative execution during the view change phases 
because they need to run an explicit view-change protocol.
At the end of the view-change protocol, all the replicas start the new view when they receive from the new leader a {\em state}.
This state starts from the last agreed-upon checkpoint, and for each sequence number that some replica claims to have 
observed since the last checkpoint, this state includes a prepare-certificate (if available) or a proposal from the previous leader.
However, before a replica can add any of these sequence numbers/proposals to their log, the leader needs to 
re-run consensus on each of them.

Even though stable-leader protocols require their clients to not combine votes on a transaction across views, 
the ability to stop consensus, change views, and re-run consensus on past transactions ensures that neither there is a situation
where a replica is executing a transaction $T$ but is yet to commit a preceding transaction $T'$ nor there is a ``view gap''.

\begin{flushleft}
\textbf{Tackling Prefix Speculation Dilemma.}
\end{flushleft}

Streamlined protocols {\bf do not have} the option of stopping the consensus and re-running consensus on past transactions. 
Thus, we state the following two rules to tackle the prefix speculation dilemma in streamlined protocols:

\begin{definition} \label{def:strongspec}
    \textbf{\strongspec{} Rule.} A replica $R$ can speculatively execute a transaction $T$ only if $T$ extends a prefix which is already known to commit.
\end{definition} 

\begin{definition} \label{def:nogap}
    \textbf{No-Gap Rule.} A replica $R$, currently in view $v$, may speculatively execute a transaction $T$ only if $T$ is proposed in view $v-1$ and a prepare-certificate of $T$ is formed in view $v$.
\end{definition}
\begin{flushleft}
    \textbf{Rollback.}
\end{flushleft}

Finally, we need to address the possibility that speculation does not succeed.
Upon speculatively executing $T$, a replica $R$ cannot commit $T$ to the (global) ledger yet as it does not know if $T$ will commit. 
Instead, each replica maintains a {\em local-ledger}, where it marks $T$ prepared and executed. 
If in a succeeding view, $R$ is about to speculatively execute a transaction $T'$ that conflicts with $T$ (See Definition ~\ref{def:conflicting}), then $R$ must perform a {\em rollback} operation in the local-ledger. 
$R$ can observe that at least $\n-\f$ replicas prepared $T'$ and then $T$ cannot commit.
Specifically, the replica should now
fetch the transaction $T'$ from other replicas, erase $T$ from its local-ledger, execute $T'$, 
add an entry for $T'$ to its global-ledger, and respond to the client.
We discuss this in more detail in \S\ref{ss:failures}.

\section{Speculative Core} 
\label{sec:design}
We first describe the variant of basic (non-streamlined) \HSOne{} variant; in \S\ref{ss:pipe-onephase}, we describe the streamlined \sysname{}.

\begin{figure}[t]

\begin{myprotocol}
            \TITLE{Local state}{replica $R$}
        \STATE $\Certificate{v_{lp}}$, $v_{lp}$: stores the highest known prepare certificate and its view number
        \STATE $\CCertificate{v_{lc}}$, $v_{lc}$: stores the highest known commit certificate and its view number
        \STATE $v$: current view
        \STATE $\Pending$: pending, uncommitted blocks of transactions
        \STATE local-ledger, global-ledger
        \SPACE

        \TITLE{Leader role}{running at leader $\Primary{v}$}

        \EVENT{Upon \MName{pacemaker.EnterView($v$)}}
            \STATE Wait until received $\n-\f$ \MName{NewView} messages for view $v$ \label{bhs1:recv-newview}
            \STATE Wait until received $\Certificate{v-1}$ \textbf{or} received $\n$ \MName{NewView} messages \textbf{or} \MName{pacemaker.ShareTimer}($v$)\label{bhs1:proposecondition}
            \STATE Let $B_v$ be a block of client transaction yet to be proposed \label{bhs1:upd-cert} 
            \STATE Broadcast $m = \SignMessage{\Message{Propose}{B_v, v, \Certificate{v_{lp}}, \CCertificate{v_{lc}}}}{\Primary{v}}$ \label{bhs1:send-propose}
        \ENDEVENT
        \SPACE

        \EVENT{Received $\n-\f$ \MName{NewView} messages with shares for $\Certificate{v-1}$} 
            \STATE $\CCertificate{v-1} \gets$ \MName{CreateThresholdSign}$(\n-\f ~\text{distinct} ~\Share{R}{C} ~\text{shares})$ \label{bhs1:create-commit-threshold}
        \ENDEVENT
        \SPACE

        \EVENT{Received $\n-\f$ \MName{ProposeVote} messages} \label{bhs1:recv-propvote}
            \STATE $\Certificate{v} \gets$ \MName{CreateThresholdSign}$(\n-\f ~\text{distinct}~\Share{R}{P}~\text{shares})$ \label{bhs1:create-threshold}
            \STATE Broadcast $\SignMessage{\Message{Prepare}{v,\Certificate{v}}}{\Primary{v}}$ \label{bhs1:send-prep}
        \ENDEVENT
        \SPACE

        \TITLE{Backup role}{running at each replica $R$ (including leader)}
        \EVENT{Received $\SignMessage{\Message{Propose}{B_v, \Certificate{w}, \CCertificate{x}}}{\Primary{v}}$} \label{bhs1:recv-propose}
            \STATE Execute all transactions up to (incl.) $B_x$, add result to global-ledger and respond to clients \COMMENT{traditional-commit rule}\label{bhs1:traditional-commit}
            \SPACE
            
            \IF{$w \ge v_{lp}$ \COMMENT{vote to prepare $B_v$}}
                \STATE $\Share{R}{P} \gets$ \MName{CreateThresholdShare}$(\Certificate{w}, v, Hash(B_v))$  \label{bhs1:create-threshold-share}
                \STATE Send $\SignMessage{\Message{ProposeVote}{v,\Share{R}{P}}}{R}$ to $\Primary{v}$ \label{bhs1:send-propvote}
            \ENDIF
        \ENDEVENT
        \SPACE

        \EVENT{Received $\SignMessage{\Message{Prepare}{v,\Certificate{v}}}{\Primary{v}}$} \label{bhs1:recv-prep}
                \IF{$\Certificate{v}$ extends $\Certificate{v-1}$ \COMMENT{prefix-commit rule}} \label{bhs1:prefix-commit-rule}
                    \STATE Execute all transactions up to (incl.) $B_{v-1}$, add result to global-ledger and respond to clients  \label{bhs1:spec-exec}
                \ENDIF
                \SPACE

                \IF{predecessor of $B_v$ is in global-ledger \COMMENT{\strongspec{} rule}}
                    \IF{local-ledger state conflicts with $B_v$} \label{bhs1:conflict-cert}
                        \STATE Roll local-ledger back to the common ancestor \label{bhs1:rollback}
                    \ENDIF
                    \STATE Execute all transactions in $B_v$ speculatively, add result to local-ledger and send client a response \COMMENT{speculatively execute $B_v$} \label{bhs1:spec-exec}
                \ENDIF
                \SPACE

                \STATE $\Share{R}{C} \gets$ \MName{CreateThresholdShare}$(\Certificate{v})$ \COMMENT{vote to commit $B_v$} \label{bhs1:commit-share}
                \STATE Send $\SignMessage{\Message{NewView}{v+1,\Certificate{v},\Share{R}{C}}}{R}$ to $\Primary{v+1}$   \label{bhs1:send-newview}   
                \STATE Call \MName{exitView}() \label{bhs1:call-exitview}
        \ENDEVENT
        \SPACE

        \EVENT{Upon timeout} \label{bhs1:timeout}
           \STATE Send $\SignMessage{\Message{NewView}{v+1,\Certificate{v_{lp}},\bot}}{R}$ to $\Primary{v+1}$.            
           \STATE Call \MName{exitView}() 
        \ENDEVENT
        \SPACE

        \FUNCTION{exitView}{}
            \STATE $v \gets v+1$. \label{bhs1:switch-view} \COMMENT{disable voting and speculative execution for view $v$}
            \STATE Call \MName{pacemaker.completedView}$()$\label{bhs1:pacemaker-completeview}
        \ENDFUNCTION

    \end{myprotocol}
    \caption{Basic \HSOne{}.}
    \label{alg:basic-hs1}
\end{figure}


\subsection{Non-Streamlined Speculation}\label{ss:basic-oneshot}
As we treat clients as first-class citizens, we start by describing the client's behavior.

{\bf Client Request.}
When a client $c$ wants the replicas to process its transaction $T$, it creates a 
\MName{Request} message including $T$ and sends it to one replica. 

{\bf Client Response.}
When a client $c$ receives identical \MName{Response} messages from $\n-\f$ replicas 
for its transaction $T$, it records this set of responses as an \early{} for $T$,
marks $T$ as {\em executed} and accepts the result of execution.

\paragraph{\bf Replica pseudocode.}
In Figure~\ref{alg:basic-hs1}, we present the pseudo-code for basic \HSOne{}.
Prior to describing the algorithm in detail, we lay down some useful definitions.

\begin{definition}
    {\em Prepare and Commit Certificates.} 
    A prepare-certificate $\Certificate{v}$ for a proposal $m$ aggregates $\n-\f$ threshold signature-shares for $m$ in view $v$. 
    A commit-certificate $\CCertificate{v}$ for a proposal $m$ aggregates $\n-\f$ threshold signature-shares for $\Certificate{v}$ in view $v$.
\end{definition}

\begin{definition}
    {\em Highest Known Certificate.}
    A certificate $\Certificate{v_{lp}}$ for view $v_{lp}$ is the highest prepare-certificate, known to 
    replica $R$.
    For brevity, we omit from the code explicitly updating $v_{lp}$ every time $R$ learns a new certificate.
\end{definition}

\begin{definition}
    {\em Extending Certificates.}
    Given two certificates $\Certificate{v}$ and $\Certificate{w}$, for views $v$ and $w$, 
    at a replica $R$, $\Certificate{v}$ extends $\Certificate{w}$ if $v > w$ 
    and $\Certificate{v}$'s construction includes $\Certificate{w}$.
    Further, if a certificate $\Certificate{k}$ extends $\Certificate{v}$ and 
    $\Certificate{v}$ extends $\Certificate{w}$, then transitively $\Certificate{k}$ 
    extends $\Certificate{w}$.
\end{definition}

\begin{definition}\label{def:conflicting}
    {\em Conflicting Certificates.} 
    Given two certificates, $\Certificate{v}$ and $\Certificate{w}$, for views $v$ and $w$, 
    at a replica $R$, $\Certificate{v}$ conflicts with $\Certificate{w}$ if neither 
    $\Certificate{v}$ extends $\Certificate{w}$, nor $\Certificate{w}$ extends $\Certificate{v}$.
\end{definition}

{\bf Local state} at a replica includes: 
(1) highest prepare-certificate, $\Certificate{v_{lp}}$, formed in view $v_{lp}$,
(2) highest commit-certificate, $\CCertificate{v_{lc}}$, formed in view $v_{lc}$,
(3) current view $v$,
(4) set of pending, uncommitted blocks of transactions $\Pending$, and
(5) the local-ledger and the global-ledger.

{\bf Propose.}
When the leader $\Primary{v}$ for view $v$--a replica $R$ with $v = \ID{R} \bmod \n$--enters 
view $v$, it waits to receive \MName{NewView} messages from at least $\n-\f$ replicas.
Each message carries the highest certificate known to its sender, which helps the leader learn the highest known certificate among them and update its $v_{lp}$.
Additionally, if these $\n-\f$ \MName{NewView} 
messages contain threshold signature-shares for $\Certificate{v-1}$, the leader forms a commit-certificate 
$\CCertificate{v-1}$ (Line~\ref{bhs1:create-commit-threshold}) and updates $\CCertificate{v_{lc}}$.
After learning the highest certificate across all correct replicas (Line~\ref{bhs1:proposecondition}, see further explanations in \S\ref{ss:failures}), the leader aggregates client transactions (yet to be proposed) into a 
block $B_v$ and creates a \MName{Propose} message $m$ that includes the view number $v$, $B_v$, $\Certificate{v_{lp}}$, and $\CCertificate{v_{lc}}$.
Then, $\Primary{v}$ broadcasts $m$ to all replicas (Lines~\ref{bhs1:upd-cert}-\ref{bhs1:send-propose}). 
{\em Note.} The \MName{Propose} message for view $0$, the genesis view, extends a hard-coded certificate that all replicas assume to be valid. 


{\bf ProposeVote.}
On receiving a \MName{Propose} message $m$ from $\Primary{v}$ (Line~\ref{bhs1:recv-propose}), 
a replica $R$ checks if the prepare certificate $\Certificate{w}$ in $m$ is not lower than its highest prepare-certificate $\Certificate{v_{lp}}$, i.e., $w \ge v_{lp}$.

If $w > v_{lp}$, then $R$ updates its $v_{lp}$ to $w$, sets $\Certificate{w}$ as the highest known prepare-certificate and fetches the block corresponding to $\Certificate{w}$ from other replicas. (\S\ref{ss:failures}).



If $w\ge v_{lp}$, $R$ creates a \MName{ProposeVote} message, which includes a threshold signature-share $\Share{R}{P}$ for $m$,
and sends this message to $\Primary{v}$ (Lines~\ref{bhs1:recv-propose}-\ref{bhs1:send-propvote}). Otherwise, $R$ ignores the message.

{\bf Prepare.}
When $\Primary{v}$ receives $\n-\f$ well-formed \MName{ProposeVote} messages for its 
proposal $m$, it combines their signature shares into a threshold signature to 
create a prepare-certificate $\Certificate{v}$ (Lines~\ref{bhs1:recv-propvote}-\ref{bhs1:create-threshold}). 
Then, $\Primary{v}$ creates a \MName{Prepare} message including $\Certificate{v}$ and broadcasts it (Line~\ref{bhs1:send-prep}).

{\bf Vote and Speculate on Prepare.}
On receiving a \MName{Prepare} message from the leader, a replica $R$ checks if the 
certificate $\Certificate{v}$ is a valid threshold signature for the leader's proposal $m$. If it is valid, $R$ updates its highest known prepare-certificate $\Certificate{v_{lp}}$ to $\Certificate{v}$.

If $B_v$'s predecessor is already in the global-ledger (i.e., meets the \strongspec{} rule) and 
$B_v$ was prepared in view $v$ (i.e., meets the No Gap rule\footnote{
We defined No Gap rule for streamlined protocols in \S~\ref{s:motivation}. However, it also implicitly applies to the non-streamlined versions: A replica $R$, currently in \MName{ProposeVote} phase of view $v$, may speculatively execute a transaction $T$ only if $T$ was proposed in the preceding \MName{Propose} phase of view $v$ and a prepare-certificate of $T$ is formed in view $v$.
}),
$R$ does the following (Lines~\ref{bhs1:recv-prep}-\ref{bhs1:spec-exec}):

\begin{enumerate}[nosep,wide]
    \item {\em Speculatively executes} the transactions in block $B_v$ of $m$.
    \item Send speculative responses with execution results to the respective clients. 
    \item Adds result of executing $B_v$ to its local-ledger.
\end{enumerate}

\Arxiv{{\em Note on execution model.}
Once the transactions are ordered, they are executed sequentially.
This paper focuses on reducing client latency caused by consensus.
Thus, we assume the simplest execution model: sequential execution of the ordered transactions.
Alternatively, other execution designs, such as parallel transaction execution, can be employed, but these require detecting and resolving conflicts among transactions.
}

{\bf ExitView and NewView.}
A replica $R$ exits view $v$ in two cases: upon receiving a prepare message from the leader and upon a timer expiration.
Prior to calling the \MName{exitView()} function, $R$ constructs a \MName{NewView} message, which includes $\Certificate{v_{lp}}$, and 
forwards it to the leader $\Primary{v+1}$ of view $v+1$. 
It then invokes the pacemaker to orchestrate view-synchronization as needed (Line~\ref{bhs1:pacemaker-completeview}). 

{\bf Commit.}
There are two commit rules in basic \HSOne{} ({\em traditional commit} and {\em prefix commit}), 
which dictate when a replica can write a block of transactions to the global-ledger. 

\begin{definition}
    {\em Traditional Commit Rule.}
    A replica marks a block $B_{v-1}$ as committed when it receives a commit-certificate $\CCertificate{v-1}$ 
    for $B_{v-1}$. 
\end{definition}

\begin{definition} \label{def:consecutive-commit-rule}
    {\em Prefix Commit Rule.}
    A replica marks a block $B_{v-1}$ as committed when it receives a prepare-certificate $\Certificate{v}$ that extends $\Certificate{v-1}$.
\end{definition}

As the name suggests, the traditional commit rule is common to any consensus protocol and has been 
used by all the protocols of the \HS{} family. 
Post speculatively executing the transaction, each replica creates a threshold share ($\Share{R}{C}$)
for the prepare-certificate and forwards this threshold share with the \MName{NewView} message to the leader of the next view (Lines~\ref{bhs1:commit-share}-\ref{bhs1:send-newview}).
Next, each replica calls the \MName{ExitView} procedure.
On receiving $\n-\f$ threshold shares for the same prepare-certificate, the leader of the next 
view combines them into a commit-certificate (Line~\ref{bhs1:create-commit-threshold}) and forwards it to all the replicas.
Upon receiving a commit-certificate $\CCertificate{v}$, a replica $R$ adds the block $B_v$ to the 
global-ledger and marks it committed (Line~\ref{bhs1:traditional-commit}).
{\em Note:} on receiving the commit-certificate, $R$ sends a response to a client if $R$ had not sent a speculative response for this transaction.


The prefix commit rule is an important optimization that allows correct replicas to commit 
blocks when \HSOne{} is experiencing replica failures, which we will expand on in \S~\ref{ss:failures}.

\subsection{Failures and Recovery Design}
\label{ss:failures}

A malicious replica can impact the consensus in various ways if it is the leader of an ongoing view:
(1) drop, delay, or prevent sending messages and/or certificates to prevent replicas from making progress, and
(2) equivocate by creating two proposals that extend the same certificate to prevent replicas from having the same state.
\HSOne{} should quickly detect these failures and resolve them to prevent performance degradation.

\begin{flushleft}
 {\bf Detecting lack of progress: Timeouts}   
\end{flushleft}

Like other protocols in the partial synchrony setting, \HSOne{} requires replicas to 
set timers. A replica $R$ starts a timer following the rules defined by the {\em pacemaker} protocol (\S\ref{ss:pacemaker}).
Upon timeout, a replica $R$ assumes that the leader of the current view (say $v$) has failed and 
thus sends a \MName{NewView} message to the leader of view $v+1$.
Post this, $R$ calls the \MName{ExitView} procedure  to move to the next view (Lines~\ref{bhs1:timeout}-\ref{bhs1:pacemaker-completeview}).

\begin{flushleft}
{\bf Lack of certificates from the last view}
\end{flushleft}

Leader $\Primary{v}$ of view $v$ may fail to receive the prepare-certificate $\Certificate{v-1}$ due to an unreliable network or faulty behaviors of the preceding leader. 
If it extends some lower certificate, its new proposal will get ignored by correct replicas that received $\Certificate{v-1}$. 
To ensure that the new proposal will be voted by all correct replicas, $\Primary{v}$ should wait for sufficiently long to receive the highest certificates known to all the correct replicas. Following the rules defined by the {\em pacemaker} protocol (\S\ref{ss:pacemaker}), after \emph{GST}, it is guaranteed that by \MName{pacemaker.ShareTimer}($v$), which is $3\Delta$ after $\Primary{v}$ enters view $v$, $\Primary{v}$ will receive \MName{NewView} messages including known certificates from all correct replicas. Thus, if $\Primary{v}$ did not receive $\Certificate{v-1}$, it should wait until either it received $\n$ \MName{NewView} messages or \MName{pacemaker.ShareTimer}($v$) (Line~\ref{bhs1:proposecondition}).

\begin{flushleft}
 {\bf Conflict Resolution: Rollback}   
\end{flushleft}
%
When a replica $R$ receives a prepare-certificate $\Certificate{v}$, 
\HSOne{} allows $R$ to set $\Certificate{v}$ as the highest known certificate and speculatively 
execute transactions of block $B_v$. A faulty leader may not send $\Certificate{v}$ to other replicas, 
in which case $B_v$ may not get committed.
To ensure replicas have a common state (global-ledger), \HSOne{} supports state rollback (or \textit{erasing local-ledger}). 

When a replica $R$ receives a prepare-certificate $\Certificate{w}$ in 
view $w$ for a proposal $m$, it speculatively executes $m$'s transactions and only 
updates its local-ledger; $R$ does not add $m$ to the global-ledger as it has only received a prepare-certificate for $m$ and has no guarantee that $m$ will commit in the future. 
Thus, $R$ can erase its local-ledger when it needs to roll back the effects of $m$'s transactions. Below is the condition for rollback:

\begin{definition}\label{def:rollback}
    {\em Rollback Condition.} Given two conflicting blocks $B_w$ and $B_v$ such that $w<v$, 
    if a replica $R$ speculatively executed transactions in $B_w$ with prepare-certificate $\Certificate{w}$, 
    $R$ will roll back $B_w$ when $R$ is about to speculatively execute the conflicting $B_v$ with prepare-certificate $\Certificate{v}$ (Lines~\ref{bhs1:conflict-cert}-\ref{bhs1:rollback}). 
\end{definition}

See Appendix~\ref{app:rollbackscenes} for a scenario illustrating rollback in \sysname{}.

\begin{flushleft}
 {\bf Prefix Commit: Processing Delayed Certificates}  
\end{flushleft}

Due to failures, replicas may vote on a proposal in a view but not receive a prepare-certificate for that proposal in the same view.
For example, the leader of view $v$ fails before broadcasting the prepare-certificate $\Certificate{v}$ for its proposal $m$ to at least $\n-\f$ replicas.
If such is the case, neither the client will receive an \early{} for $m$, nor the replicas will receive a commit-certificate for $m$ in view $v+1$.
So, how can we decide the fate of $m$?

If $m$ conflicts with another proposal $m'$ speculated in a view $w, w > v$, then it will be rolled back as described in \S~\ref{s:motivation}.
However, if there are no conflicts, that is, the leader of some view $x, x > v$ observes $\Certificate{v}$ and extends $\Certificate{v}$ in its proposal $m'$, 
a replica $R$ will execute transactions in $B_v$ and reply to the client once $R$ receives a commit-certificate $\CCertificate{x}$ for $m'$ (Line~\ref{bhs1:traditional-commit}).

Fortunately, we have an {\em optimization} that allows replicas to commit and execute $B_v$ at least one phase earlier; 
if $x = v+1$ and $\Certificate{v+1}$ is received, then a replica $R$ can commit $B_v$, execute transactions, add them to the global-ledger, and reply to their clients (Line~\ref{bhs1:prefix-commit-rule}), which we refer to as the prefix-commit rule.

\begin{flushleft}
    {\bf Recovery Mechanism}
\end{flushleft}

A faulty leader can skip broadcasting a
certificate to all the replicas. If any future leader has access to this valid certificate, 
it can extend its new proposal from this certificate. Such scenarios can occur in any protocol 
of the \HS{} family and are not limited to just malicious attacks; for example, 
a leader can crash before broadcasting the certificate to all replicas.


If the leader $\Primary{v}$ of view $v$ extends its proposal $m$ from the certificate $\Certificate{w}$, $w <v$, then each  
replica $R$ that receives the proposal needs to validate $\Certificate{w}$ and requires 
access to the corresponding proposal (say $m'$) of view $w$. If $R$ does not have access to $m'$, then it should fetch $R$ from other correct replicas, at least $\f+1$ of which should have it because they voted for $m'$.

\Arxiv{

\begin{figure}[t!]
    \begin{myprotocol}

    \FUNCTION{ShareTimer}{$v$}
    \RETURN $StartTime[v] + 3\Delta$
    \ENDFUNCTION
    \SPACE

    \FUNCTION{CompletedView}{}
    \IF{$v \mod \f+1 \neq 0$}
    \STATE Call \MName{EnterView($v$)}
    \ELSE
    \STATE Call \MName{SynchronizeEpoch($v$)}
    \ENDIF
    \ENDFUNCTION
    \SPACE

    \FUNCTION{SynchronizeEpoch}{$v$}
    \STATE $\Share{R}{} \gets$ \MName{CreateThresholdShare}$(v)$
    \STATE Send $\left<\MName{Wish}(v, \Share{R}{}) \right>_R$ to leaders $\Primary{v+k}, k = 0,1,2,...,\f$.
    \ENDFUNCTION

    \SPACE
    \TITLE{Epoch Leader role}{running at leader $\Primary{v+k}$, $k = 0,1,2,...,\f$.}
        \EVENT{Upon receiving $\n-\f$ \MName{Wish} messages of view $v$}
            \STATE $TC_v \gets$ \MName{CreateThresholdSignature}$(\n-\f ~\text{distinct} ~\Share{r} ~\text{ shares})$
            \STATE Broadcast $TC_v$.
        \ENDEVENT

    \SPACE
    \TITLE{Epoch Backup role}{running at each replica $R$}
        \EVENT{Upon receiving $TC_v$ at time $t$}
            \STATE Relay $TC_v$ to the leaders $\Primary{v+k}, k = 0,1,2,...,\f$
            \FOR{$k \gets 0, 1, 2, ..., \f$}
            \STATE $StartTime[v+k] \gets t+k\tau$ 
            \ENDFOR
            \STATE Call \MName{EnterView($v$)}
        \ENDEVENT

    \end{myprotocol}
    \caption{Pseudocode of Pacemaker Protocol}\label{alg:pacemaker}
\end{figure}
}

\subsubsection{\bf Pacemaker}
\label{ss:pacemaker}
For a system to make {\em progress}, at least $\n-\f$ correct replicas should be in the same view. 
Otherwise, a leader cannot collect enough votes to make progress and to generate a prepare-certificate (\S\ref{ss:pipe-onephase}). 
Specifically, under an unreliable network or when the leader is faulty, correct replicas can diverge: 
some replicas may have progressed to higher views, while others are stuck on an old view.
To prevent this divergence among correct replicas, we adopt the \emph{pacemaker} designs of prior works~\cite{raresync,lewis};
group views into {\em epochs}, each of which contains $\f+1$ consecutive views, and conduct view synchronization at the beginning of every epoch. 

\Arxiv{
In Figure~\ref{alg:pacemaker}, we illustrate the pseudocode for pacemaker. 
Every time a replica $R$ reaches at the end of a view, it calls the function \MName{CompletedView} (Lines 3-7) to check if the next view (say $v$) is part of the current epoch.
If this is the case, $R$ enters view $v$.
Otherwise, $v$ is the first view of the next epoch ($v\!\mod\!(\f+1) = 0$) and $R$ must synchronize its view with the other replicas. 
$R$ calls the function \MName{SynchronizeView}($v$) (Lines 8-10) and delays entering the view $v$ until the view synchronization is complete. 

The function \MName{SynchronizeView}($v$) requires $R$ to send a \MName{Wish} message to the $\f+1$ leaders of the next epoch; $\Primary{v+k}$, where $k = 0,1,2,...,\f$.
When a  leader of the next epoch receives $\n-\f$ \MName{Wish} messages for view $v$, it creates a \emph{Timeout Certificate} $TC_v$ and broadcasts it to all the replicas (Lines 14-15). 
Any non-leader replica $R$ that receives $TC_v$ forwards this certificate to all the $\f+1$ leaders for the next epoch. 
Next, $R$ sets the {\em starting time} for each of the next $\f+1$ views $v+k$, $k=0,1,2,...,\f$.
Say, $R$ received $TC_v$ at time $t$, then view $v+k$ starts at time $t+k\tau$, where $\tau$ is a predetermined timer length that is sufficiently long for a non-faulty leader to reach a consensus on the proposal of its view.
{\em Note:} the starting time for view $v+k$ is also the timeout for view $v+k-1$.
Post this, $R$ enters the next view $v$ (Lines 16-18).
} 

The \emph{pacemaker} guarantees that, after \emph{GST}, once the first synchronization is done at view $v_s$, if a correct replica enters view $v, v\ge v_s$ at time $t$ and sets its timer for view $v$ to expire at time $t'$, then all correct replicas will enter view $v$ before $t+2\Delta$ and no correct replica will time out and enter view $v+1$ before $t'-2\Delta$, where $\Delta$ is the transmission delay bound.
Post $t+2\Delta$, if the leader $L_{v}$ for view $v$ waits for an additional message delay, $\Delta$ then it is guaranteed to receive \MName{NewView} messages from all the correct replicas and learn the highest known certificate. 
Thus, the function \MName{ShareTimer}($v$) returns after $t+3\Delta$.

\section{Streamlined Speculation}
\label{ss:pipe-onephase}

Basic \HSOne{} (\S\ref{ss:basic-oneshot}) processes only one proposal every two phases. Like \HS{}, we can {\em streamline the phases} of \HSOne{} to ensure that 
we rotate leaders and inject a new proposal every phase.
This has the potential to increase throughput by $2 \times$. 

Borrowing from the streamlined variant of \HS{}, streamlined \HSOne{} works as follows:
it overlaps the second phase of view $v$, consisting of \emph{Prepare} and \emph{NewView} 
steps, with the first phase of view $v+1$, namely, \emph{Propose} and \emph{ProposeVote} 
steps. Each view (or leader) lasts for only one phase. The leader of each view waits 
for $\n-\f$ \MName{NewView} messages from the preceding view. The leader first attempts 
to create a prepare-certificate from the threshold shares it received from the replicas.  
It then selects the highest prepare-certificate it knows and references it in a new 
proposal with a new batch of client transactions. 

{\bf Commit Rule.}
Unlike the basic \HSOne{}, the streamlined design has only one commit rule: 
replicas follow the prefix commit rule (Definition \ref{def:consecutive-commit-rule}) to add a transaction to the global-ledger. 
As each view consists of one phase,
there is no explicit opportunity to create a commit-certificate.
In view $v$, a replica $R$ commits a block $B_{w-1}$, proposed in view $w-1$, if the proposal 
of view $v$ includes the certificate $\Certificate{w}$ that extends the certificate $\Certificate{w-1}$.
{\em Note:} We no longer distinguish between prepare and commit certificates as in basic \HSOne{}.

{\bf \strongspec{} Rule and No-Gap Rule.} 
As in the basic variant, 
rules guaranteeing safe speculation are 
needed in streamlined \HSOne{} to tackle the Prefix Speculation dilemma described in \S\ref{s:motivation}. The enforcement of the \strongspec{} rule is similar to the basic 
regime: \textit{a replica $R$ can speculate on a block $B_v$ provided that the prefix of $\Certificate{B_v}$ is committed.} 
See Appendix~\ref{app:safe-spec2} for examples of not following the \strongspec{} rule and the No Gap in streamlined \HSOne{}. Similarly, enforcement of the No-Gap rule (Definition~\ref{def:nogap}) is necessary, that is, $w = v-1$.

\begin{figure}[t]
\begin{myprotocol}
        \TITLE{Leader role}{running at leader $\Primary{v}$}
        \EVENT{Upon \MName{pacemaker.EnterView()}}
            \STATE Wait until received $\n-\f$ \MName{NewView} messages  for view $v$ \label{shs1:recv-newview}
            \STATE Wait until $\Primary{v}$ forms a certificate $\Certificate{v-1}$ \textbf{or}  received $\n$ \MName{NewView} messages \textbf{or} \MName{pacemaker.ShareTimer}($v$) \label{shs1:wait} 
            \STATE Let $B_v$ be a block of client transaction yet to be proposed 
            \STATE Broadcast $m = \SignMessage{\Message{Propose}{B_v, v, \Certificate{v_{lp}}}}{\Primary{v}}$ \label{shs1:send-propose}
        \ENDEVENT
        \SPACE

        \EVENT{Received $\n-\f$ \MName{NewView} messages with shares for the same proposal of view $v-1$} 
            \STATE $\Certificate{v-1} \gets$ \MName{CreateThresholdSign}$(\n-\f ~\text{distinct}~\Share{R}{}~\text{shares})$
            \label{shs1:create-threshold} 
        \ENDEVENT
        \SPACE

        \TITLE{Backup role}{running at each replica $R$ (including leader)}
        \EVENT{Received $\SignMessage{\Message{Propose}{B_v, \Certificate{w}}}{\Primary{v}}$} \label{shs1:recv-propose}
            \IF{$\Certificate{w}$ extends $\Certificate{w-1}$ \COMMENT{commit-rule}} \label{shs1:commit-rule}
                \STATE Execute all transactions up to (incl.) $B_{w-1}$, add result to global-ledger and respond to clients  \label{shs1:exec}
            \ENDIF
            \SPACE

            \IF{$w = v-1$\COMMENT{No-Gap rule}}\label{shs1:nogap}
            \IF{predecessor of $\Certificate{v-1}$ is in global-ledger\COMMENT{\strongspec{} rule}}\label{shs1:prefix-spec}
                \IF{local-ledger state conflicts with $B_{v-1}$} \label{shs1:conflict-cert}
                    \STATE Rollback local-ledger to the common ancestor \label{shs1:rollback}
                \ENDIF
                \STATE Execute all transactions in $B_{v-1}$ speculatively, add result to local-ledger and send client a response \COMMENT{speculatively execute $B_{v-1}$} \label{shs1:spec-exec}
            \ENDIF
            \ENDIF
            \SPACE

            \IF{$w \ge v_{lp}$}\label{shs1:highest-cert}
                \STATE $\Share{R}{} \gets$ \MName{CreateThresholdShare}$(\Certificate{w},v,Hash(B_{v}))$  \label{shs1:create-threshold-share}
                \STATE Send $\SignMessage{\Message{NewView}{v+1,\Certificate{w},\Share{R}{}}}{R}$ to $\Primary{v+1}$ \label{shs1:send-newview}
            \ENDIF
            \STATE Call \MName{exitView}() \label{shs1:call-exit}
        \ENDEVENT
        \SPACE

        \EVENT{Upon timeout} \label{shs1:timeout}
            \STATE Send $\SignMessage{\Message{NewView}{v+1,\Certificate{v_{lp}},\bot}}{R}$ to $\Primary{v+1}$. \label{shs1:timeout-newview}
            \STATE Call \MName{exitView}() \label{shs1:call-exit-view}
        \ENDEVENT
        \SPACE

        \FUNCTION{exitView}{}
            \STATE $v \gets v+1$. \label{shs1:switch-view} \COMMENT{disable voting for view $v$}
            \STATE Call \MName{pacemaker.completedView}$()$\label{shs1:pacemaker-completeview}
        \ENDFUNCTION

    \end{myprotocol}
    \caption{Streamlined \HSOne{}.}
    \label{alg:steamlined-hs1}
\end{figure}


\subsection{Streamlined HotStuff-1 Protocol} 
The streamlined protocol is reduced into a single phase of (1) \textit{propose} and 
(2) \textit{vote} that includes the speculative execution as demonstrated in 
Figure~\ref{fig:normal_case} (iii).

{\bf Propose.}
When the leader $\Primary{v}$ for view $v$ receives 
well-formed \MName{NewView} messages from at least $\n-\f$ replicas (Figure~\ref{alg:steamlined-hs1} Line~\ref{shs1:recv-newview}), it tries
to combine their threshold signature-shares into a threshold signature to create 
a certificate $\Certificate{v-1}$ for view $v-1$ (Line~\ref{shs1:create-threshold}).
If $\Primary{v}$ fails, it keeps waiting for more \MName{NewView} messages until it forms a certificate $\Certificate{v-1}$ or it receives $\n$ \MName{NewView} messages or \MName{pacemaker.ShareTimer($v$)} (Line~\ref{shs1:wait}).
Then, the leader extends its highest certificate to form its new proposal 
as a $\MName{Propose}$ message $m$ and broadcasts it to all replicas. This proposal includes the view number $v$, 
a block $B_v$ of client transactions yet to be proposed, and $\Certificate{v_{lp}}$ (Line~\ref{shs1:send-propose}).

{\bf Execute and Ledger Update.}
On receiving a \MName{Propose} message (let's call it $m$) from the leader (Line~\ref{shs1:recv-propose}), $R$ does the following (Lines~\ref{shs1:commit-rule}-\ref{shs1:call-exit}):

\begin{enumerate}[nosep,wide]
    \item Following the commit-rule: if $\Certificate{w}$ extends $\Certificate{w-1}$, then $R$ executes transactions 
    for all blocks up to $B_{w-1}$ (blocks that $B_{w-1}$ extends) if yet to be executed, adds them to the global-ledger and sends a reply to 
    respective clients (Lines~\ref{shs1:commit-rule}-\ref{shs1:exec}).
    
    \item If $w = v-1$ (meets the No-Gap rule), then following the \strongspec{} Rule: if the predecessor of $B_{v-1}$ is committed, then $R$ {\em speculatively executes} the transactions in blocks $B_{v-1}$, adds them to the local-ledger, and sends a reply to respective clients. Before speculatively executing $B_{v-1}$, $R$ first rolls back its local-ledger if it has speculatively executed a conflicting block (Lines~\ref{shs1:nogap}-\ref{shs1:spec-exec}). 

    \item Finally, $R$ checks if $w$, the view of the certificate $\Certificate{w}$ in $m$, is not lower than its $v_{lp}$. If $w \ge v_{lp}$, $R$ updates its highest known certificate $\Certificate{v_{lp}}$ with $\Certificate{w}$. Then, $R$ creates a \MName{NewView} message including $\Certificate{v_{lp}}$ and a threshold signature-share $\Share{R}{}$ for $m$, and sends it to the leader of the next view, $\Primary{v+1}$ (Lines~\ref{shs1:highest-cert}-\ref{shs1:send-newview}).
\end{enumerate}

{\bf Timer expiration.}
In case of timer expiration, the replica $R$ constructs a \MName{NewView} message, which includes an empty threshold signature-share and the highest known certificate $\Certificate{v_{lp}}$, and 
forwards it to the leader $\Primary{}$ for view $v+1$ (Lines~\ref{shs1:timeout}-\ref{shs1:call-exit-view}). 

{\bf ExitView and NewView.}
Like earlier, a replica $R$ is ready to exit view $v$ in two cases: upon receiving a \MName{Propose} message from the leader and upon a timer expiration.
\textsc{ExitView}$()$ invokes the pacemaker to orchestrate view-synchronization as needed (Line~\ref{shs1:pacemaker-completeview}).

{\bf Correctness Proof.}
See Appendix~\ref{app:proof} for the correctness proof.
\section{Slotting}
\label{s:slotting}

Rotating leaders in \BFT{} protocols leads to the following challenges:
\begin{enumerate}[wide,nosep]
    \item {\bf Leader-slowness phenomenon.}
    Rational leaders, who are not malicious but aim to maximize their gains, may delay proposing a block of transactions until as late as possible in their rotation, as they are incentivized to include transactions that yield higher fees.
    Similarly, block builders participating in a proposer-builder auction may also delay to maximize MEV (maximal extractable value) exploits~\cite{Daian2019FlashB2,pbs2024,time-role-mev}. If a leader/builder proposes its block too early, it risks filling the block with transactions that offer lower fees than those that may come in the future.
    Thus, rational leaders and builders may slow down progress, causing increased client latency.

\Arxiv{
\begin{example}\label{ex:slowleader}
Assuming that each block can include at most $100$ transactions and the maximum allowed time for a view to complete is $4s$, while it takes a leader approximately $1s$ to create a block and ensure that its proposal completes all phases of \HSOne{}.
In an ideal case, the latency for each transaction would be $\approx 1s$.
A rational leader will wait for four seconds to create the block in the hope of selecting the top $100$ highest fees paying transactions, 
which ensures the average latency to be $\approx 4s$.
\end{example}
}
    \item {\bf Tail-forking attack.} In streamlined protocols, the two protocol phases necessary 
    to commit a transaction are spread across the reign of two leaders. The second leader, if malicious, 
    may skip the proposal from the previous leader by pretending that it did not receive enough votes for it, 
    instead of helping drive it to a commit decision.  

\Arxiv{
\begin{example}\label{ex:slotting}
Assuming that $R_0$ and $R_1$ are the leaders for views $v$ and $v+1$, respectively, and $R_1$ is malicious.
In view $v$, $R_0$ broadcasts a \MName{Propose} message for $B_v$ containing $\Certificate{v-1}$, and all replicas send a \MName{ProposeVote} message for $B_v$ to $R_1$.
As $R_1$ is malicious, in view $v+1$, assume that $R_1$ initiates
the tail-forking attack by ignoring the \MName{NewView} messages for $B_v$ and broadcasts a \MName{Propose} message for $B_{v+1}$ 
that includes the certificate $\Certificate{v-1}$.
Since no replica has access to a higher known certificate than $\Certificate{v-1}$, all replicas accept $B_{v+1}$, create a threshold signature-share for $B_{v+1}$, and send it with a \MName{NewView} message to $R_2$. 
Consequently, all the work done during view $v$ is a waste.
\end{example}
}
    
\end{enumerate}

We address these challenges by introducing \textit{slotting} into the core of streamlined consensus protocols. Slotting enables each leader to propose multiple blocks—one per slot—within its view rotation period. An adaptive slotting mechanism allows leaders to propose as many slots as possible before the view timer expires.
Assigning multiple slots per leader/view offers two key benefits:
(1) It incentivizes timely proposal of available transactions, as proposing more blocks yields greater rewards; and
(2) It eliminates tail-forking attacks for all but the final slot, since a leader can extend its own slot to prevent forking.
\Add{Additionally, we introduce \emph{carry blocks} in first-slot proposals to protect the last slot of the previous view, provided that at least $\f+1$ correct replicas have voted for it.}

\Arxiv{
With slotting, assuming Example~\ref{ex:slowleader}, we expect each leader to propose at least four blocks (one per slot) per view with latency $\approx 1s$; assuming Example~\ref{ex:slotting},
$R_1$ can only tail-fork the last slot of view $v$, but three out of four blocks will reach consensus.
}


\subsection{Slotting Design}
\label{ssec:slotting}
We proceed to describe how to incorporate a slotting design into streamlined \HSOne{}. 
{\em Note:} Our design of slotting is applicable to any protocol of the \HS{} family. 

We introduce two additional notations:

{\em First}, we enumerate leader proposals with a pair of numbers: a leader/view number and a slot 
number within the view. Blocks are ordered lexicographically: if $v < v'$, then block $B_{i, v}$ 
is ordered lower than $B_{i', v'}$. If $v = v'$ and $i < i'$, then block $B_{i, v}$ is ordered 
lower than $B_{i', v'}$. 
For instance, in Figure ~\ref{fig:snake}, we illustrate a chain of blocks generated under the slotting 
design. Each block extends a certificate of the preceding one, resulting in a \textit{snake-like} chain that threads blocks within each view and, at the end of each view, threads to the next view. 
In the figure, block $B_{2,1}$ includes a certificate for $B_{1,1}$, block $B_{1,2}$ includes a certificate for $B_{4,1}$, and so on.
%

{\em Second}, we introduce a new message type, \MName{NewSlot}, to distinguish a replica’s transition to a new slot within the same view from its transition to a new view. Both \MName{NewSlot} and \MName{NewView} messages contain threshold signature shares that serve as votes, enabling consensus over slot and view transitions, respectively.
To differentiate these votes, replicas sign not only the proposal but also distinct contextual parameters—\emph{New-Slot} and \emph{New-View}. As a result, we define two types of certificates: \emph{New-Slot} and \emph{New-View}. Each \emph{New-View} certificate is further annotated with a parameter ${fv}$, indicating the view in which it was formed, i.e., it is formed by $\Primary{fv}$.

Next, we describe the protocol modifications needed to support slotting. 
%
As before, a replica maintains pending, uncommitted blocks of transactions, a local-ledger and the global-ledger.
The local state at a replica (refer to Figure~\ref{alg:slot-steamlined-hs1-primary}) includes: 
(1) $\Certificate{s_{lp}, v_{lp}}$, the highest known certificate of view $v_{lp}$, slot $s_{lp}$,
(2) $s, v$, the current slot and view,
(3) $B_h$, the highest voted block with hash $H_h$.
%


\Add{
A well-formed first-slot proposal in view $v$ must provide a \emph{self-contained proof} of \emph{``no tail-forking''} in one of two ways:
(i) \textbf{form} and \textbf{extend} a \emph{New-View} certificate using votes in $\n-\f$ \MName{NewView} messages sent to $\Primary{v}$, e.g., in Figure~\ref{fig:snake}, $B_{1,2}$ extends $\Certificate{4,1}$; or
(ii) \textbf{extend} its highest certificate $\Certificate{s_{lp}, v_{lp}}$ and \textbf{carry} a block $B_u$. 

\begin{definition}
\label{def:carryblock} 
\textbf{Carry Block}: The \emph{lowest uncertified block} $B_u$ that extends the certificate $\Certificate{s_{lp}, v_{lp}}$.
\begin{itemize} \item If $\Certificate{s_{lp}, v_{lp}}$ is a \emph{New-View} certificate formed in view $fv$, then $B_u$ is $B_{1,fv}$. For example, $B_{1,3}$ extends the \emph{New-View} certificate $\Certificate{4,1}$ with $fv=2$ and carries the uncertified block $B_{1,2}$. 
\item If $\Certificate{s_{lp}, v_{lp}}$ is a \emph{New-Slot} certificate, then $B_u$ is $B_{s_{lp}+1,v_{lp}}$. For instance, $B_{1,1}$ extends the \emph{New-Slot} certificate $\Certificate{3,0}$ and carries the uncertified block $B_{4,0}$. \end{itemize} \end{definition}

The notions of the \emph{self-contained proof of ``no tail-forking''} and  \emph{carry block} guarantee that if a correct leader $\Primary{v}$ proposed at least two slots in view $v$ and $\f+1$ correct replicas have voted for its last slot, then view-$v$ slots are protected from \emph{tail-forking attacks}. See further explanations in \S~\ref{sec:tailfork}.

}

Figure ~\ref{alg:slot-steamlined-hs1-primary} and ~\ref{alg:slot-steamlined-hs1-replica} illustrate the pseudocode of streamlined \HSOne{} with slotting.
\begin{figure}[t]
    \begin{tikzpicture}[yscale=0.6]

        \node[align=center,anchor=north,smalltext] at (0.4,4) { $View 0$};
        \node[align=center,anchor=north,smalltext] at (2.4,4) { $View 1$};
        \node[align=center,anchor=north,smalltext] at (4.4,4) { $View 2$};
        \node[align=center,anchor=north,smalltext] at (6.4,4) { $View 3$};

        \node[align=center,anchor=north,smalltext] at (-1,3.4) { $Slot_1$};
        \node[align=center,anchor=north,smalltext] at (-1,2.4) { $Slot_2$};
        \node[align=center,anchor=north,smalltext] at (-1,1.4) { $Slot_3$};
        \node[align=center,anchor=north,smalltext] at (-1,0.4) { $Slot_4$};

        \draw[black, thick] (0,3) rectangle (0.8,3.5);
        \draw[black, thick] (2,3) rectangle (2.8,3.5);
        \draw[orange, thick, dashed] (4,3) rectangle (4.8,3.5);
        \draw[black, thick] (6,3) rectangle (6.8,3.5);
        \draw[black, thick] (0,1) rectangle (0.8,1.5);
        \draw[black, thick] (2,1) rectangle (2.8,1.5);
        \draw[black, thick] (6,1) rectangle (6.8,1.5);
        \draw[orange, thick, dashed] (0,0) rectangle (0.8,0.5);
        \draw[orange, thick] (2,0) rectangle (2.8,0.5);
        \draw[black, thick] (6,0) rectangle (6.8,0.5);
        \draw[black, thick] (0,2) rectangle (0.8,2.5);
        \draw[black, thick] (2,2) rectangle (2.8,2.5);
        \draw[black, thick] (6,2) rectangle (6.8,2.5);

        \node[align=center,anchor=north,smalltext] at (0.4,0.5) {$B_{4,0}$};
        \node[align=center,anchor=north,smalltext] at (0.4,1.5) {$B_{3,0}$};
        \node[align=center,anchor=north,smalltext] at (0.4,2.5) {$B_{2,0}$};
        \node[align=center,anchor=north,smalltext] at (0.4,3.5) {$B_{1,0}$};
        \node[align=center,anchor=north,smalltext] at (2.4,0.5) {$B_{4,1}$};
        \node[align=center,anchor=north,smalltext] at (2.4,1.5) {$B_{3,1}$};
        \node[align=center,anchor=north,smalltext] at (2.4,2.5) {$B_{2,1}$};
        \node[align=center,anchor=north,smalltext] at (2.4,3.5) {$B_{1,1}$};
        \node[align=center,anchor=north,smalltext] at (4.4,3.5) {$B_{1,2}$};
        \node[align=center,anchor=north,smalltext] at (6.4,0.5) {$B_{4,3}$};
        \node[align=center,anchor=north,smalltext] at (6.4,1.5) {$B_{3,3}$};
        \node[align=center,anchor=north,smalltext] at (6.4,2.5) {$B_{2,3}$};
        \node[align=center,anchor=north,smalltext] at (6.4,3.5) {$B_{1,3}$};

        \draw[->, blue] (0.4,0.5) -- (0.4,1);
        \draw[->, blue] (0.4,1.5) -- (0.4,2);
        \draw[->, blue] (0.4,2.5) -- (0.4,3);
        \draw[->, blue] (2.4,0.5) -- (2.4,1);
        \draw[->, blue] (2.4,1.5) -- (2.4,2);
        \draw[->, blue] (2.4,2.5) -- (2.4,3);
        \draw[->, blue] (6.4,0.5) -- (6.4,1);
        \draw[->, blue] (6.4,1.5) -- (6.4,2);
        \draw[->, blue] (6.4,2.5) -- (6.4,3);

        \draw[->] (1.4,0.3) -- (0.8,0.3);
        \draw[->] (3.4,0.3) -- (2.8,0.3);
        \draw[->] (5.4,3.3) -- (4.8,3.3);

        \draw (2,3.3) -- (1.4,3.3);
        \draw (4,3.3) -- (3.4,3.3);
        \draw (6,3.3) -- (5.4,3.3);

        \draw (1.4,3.3) -- (1.4,0.3);
        \draw (3.4,3.3) -- (3.4,0.3);

    \end{tikzpicture}
    \caption{Chain in HotStuff-1 with Slotting, in which solid black blocks are extended with \emph{New-Slot} certificates. To provide a \emph{self-contained proof} of \emph{``no tail-forking''} in first-slot proposals, solid orange blocks are \emph{extended} in way (i), with a \emph{New-View} certificate formed by the next leader; while shaded orange blocks are \emph{carried} in way (ii), without a certificate.}
    \label{fig:snake}
\end{figure}

{\bf Propose.}
At each slot $s$, the leader $\Primary{v}$ for view $v$ awaits messages from at least $\n{-}\f$ 
replicas of either of the following types: 

\begin{enumerate}[nosep]
    \item well-formed \MName{NewView} messages for view $v$, if $s = 1$,  or 
    \item well-formed \MName{NewSlot} messages for slot $(s-1, v)$ if $s > 1$.
\end{enumerate}
Thus, $\Primary{v}$ administers two types of transitions.

{\em NewView:} The first is entering a new view. The leader awaits well-formed \MName{NewView} 
messages for view $v$ from at least $\n{-}\f$ replicas. 
%
$\Primary{v}$ delays proposing its first-slot block $B_{1,v}$ until any of the following conditions is met:

\begin{enumerate}
    \item \label{condition1} A \emph{New-View} certificate $\Certificate{s_{w}, w}, w < v$, can be formed with $\n{-}\f$ \MName{NewView} messages containing \emph{New-View} threshold signature-shares for the same proposal of slot $(s_{w}, w)$.
    \item \label{condition2} $\Primary{v}$ received $\n$ \MName{NewView} messages.
    \item \label{condition3} \MName{pacemaker.ShareTime($v$)}.
    \item \label{condition4} For any slot higher than $\Certificate{s_{lp}, v_{lp}}$, $\Primary{v}$ received $\n{-}\f$ \MName{New-View} messages that do not vote for it.
\end{enumerate}

With these four conditions, it is guaranteed that after \emph{GST}, a correct $\Primary{v}$ can learn the highest certificate across all correct replicas, either by forming it by itself through (1) or learning it from others through (2)-(4). See further explanations in \S~\ref{sec:slot_responsiveness}.

If condition (1) is satisfied, $\Primary{v}$ proposes its first-slot proposal in way (i): it forms a \emph{New-View} certificate $\Certificate{s_w, w}$ and updates its local highest certificate $\Certificate{s_{lp}, v_{lp}}$ to $\Certificate{s_w, w}$. Then, $\Primary{v}$ broadcasts a \MName{Propose} message $m$ containing the block $B_{1,v}$, a batch of new transactions, and the updated certificate $\Certificate{s_{lp}, v_{lp}}$. For example, in Figure~\ref{fig:snake}, $\Primary{1}$ proposes $B_{1,1}$ extending the \emph{New-View} certificate $\Certificate{4,0}$ (see Figure~\ref{alg:slot-steamlined-hs1-primary}, Lines~\ref{slotleader:condition1}–\ref{slotleader:endcondition1}).

If (2)-(4) is satisfied, $\Primary{v}$  proposes its first-slot proposal in way (ii): $\Primary{v}$ broadcasts a \MName{Propose} message $m$ that contains $B_{1,v}$, $\Certificate{s_{lp}, v_{lp}}$, and $H_{u}$, hash of the \emph{lowest uncertified block} $B_{u}$ that it carries. For example, in Figure~\ref{fig:snake}, $\Primary{2}$ proposes $B_{1,2}$ extending a \emph{New-Slot} certificate $\Certificate{3,1}$ and carrying $B_{4,1}$ ($B_u$) (Lines~\ref{slotleader:condition2}-\ref{slotleader:endcondition2}).

\begin{figure}[t]
\begin{myprotocol}

        \TITLE{Local state}{replica $R$}
        \STATE $\Certificate{s_{lp}, v_{lp}}$: the highest known certificate formed in view $v_{lp}$, slot $s_{lp}$
        \STATE $s, v$: the current slot and view 
        \STATE $B_h$: the highest voted block with hash $H_h$.
        \SPACE

        \TITLE{Leader role}{running at leader $\Primary{v}$}

        \EVENT{Upon \MName{pacemaker.EnterView()}}
            \STATE Keep updating $\Certificate{s_{lp}, v_{lp}}$ while receiving $\n-\f$ \MName{NewView} messages

            \STATE Wait until (1) formed a \emph{NEW-VIEW} certificate $\Certificate{s_w, w}$ \textbf{or} 
            (2) received $\n$ \MName{NewView} messages \textbf{or} \\
            (3) \MName{pacemaker.ShareTimer($v$)} \textbf{or} 
            (4) received $\n-k, 1\le k\le \f$, \MName{NewView} messages, but there are fewer than $\f{+}1{-}k$ votes for any slot higher than $(s_{lp}, v_{lp})$ 

            \IF{$\Primary{v}$ has not proposed $B_{1,v}$}\label{slot:has-not-proposed}
            \STATE Let $B_{1,v}$ be a block of client transactions yet to be proposed 
            \IF{(1) is satisfied}\label{slotleader:condition1}
            \STATE Broadcast $m = \SignMessage{\Message{Propose}{B_{1,v}, 1, v, \Certificate{s_w, w}, \bot}}{\Primary{v}}$ \label{slotleader:endcondition1}
            \ELSE \label{slotleader:condition2}
            \STATE $B_u \gets$ the lowest uncertified block that extends $\Certificate{s_{lp}, v_{lp}}$
            \STATE Broadcast $m = \SignMessage{\Message{Propose}{B_{1,v}, 1, v, \Certificate{s_{lp}, v_{lp}}, H_u}}{\Primary{v}}$ \label{slotleader:endcondition2}
            \ENDIF
            \ENDIF
        \ENDEVENT
        \SPACE

        \EVENT{Received $\n-\f$ \MName{NewView} messages with \emph{NEW-VIEW} signature-shares of $\Certificate{s_w,w}, w < v$}
            \STATE $\Certificate{s_w,w} \gets$ \MName{CreateNewViewThresholdSign}$(\n{-}\f ~\text{distinct}~\Share{h}{}~\text{shares}, fv = v)$
            \label{slot:create-newview-threshold} 
        \ENDEVENT
        \SPACE

        \EVENT{Received $\n-\f$ \MName{NewSlot} messages with \emph{NEW-SLOT} signature-share of $\Certificate{s,v}$} \label{slotleader:newslot}
            \STATE $\Certificate{s,v} \gets$ \MName{CreateNewSlotThresholdSign}$(\n-\f ~\text{distinct}~\Share{R}{}~\text{shares})$
            \STATE Let $B_{s+1,v}$ be a block of client transaction yet to be proposed 
            \STATE Broadcast $m = \SignMessage{\Message{Propose}{B_{s+1,v}, s+1, v, \Certificate{s, v}, \bot}}{\Primary{v}}$
            \label{slot:create-newslot-threshold} 
        \ENDEVENT

        \SPACE

        \EVENT{Received from a \emph{trusted} leader $\Primary{v-1}$ a \MName{NewView} message with a certificate formed in view $v-1$} \label{trusted:receive-previous-leader-newview-1} 
            \STATE Propose $B_{1,v}$ as in Lines~\ref{slot:has-not-proposed}-\ref{slotleader:endcondition2}
            \label{trusted:receive-previous-leader-newview-2} 
        \ENDEVENT
        \SPACE

        \EVENT{Received a \MName{Reject} message with $\Certificate{s_{v-1}^*,v-1}$} 
            \IF{received from $\Primary{v-1}$ a \MName{NewView} message with a lower certificate that is formed in view $v-1$} \label{trusted:receive-reject}
            \STATE Mark $\Primary{v-1}$ as \emph{distrusted}.
            \ENDIF
            \label{slot:create-newslot-threshold} 
        \ENDEVENT

    \end{myprotocol}
    \caption{Additional Local State and Leader Role in Streamlined HotStuff-1 with Slotting.}
    \label{alg:slot-steamlined-hs1-primary}
\end{figure}

\begin{figure}[t]
\begin{myprotocol}
    \FUNCTION{SafeSlot}{$s, v, \Certificate{s_w,w}, H_u$}
    \STATE Fetch the carried block $B_u$ of \textbf{non-empty} hash $H_u$
    \IF{$s=1$ \textbf{and} $\Certificate{s_w,w}$ is a \emph{NEW-VIEW} certificate \textbf{and} $\Certificate{s_w,w}.fv = v\;$ \COMMENT{Case 1}}
    \label{slot:case1}
    
        \RETURN true
    
    \ELSIF{$s=1$ \textbf{and} $\Certificate{s_w,w}$ is a \emph{NEW-VIEW} certificate \textbf{and} $\Certificate{s_w,w}.fv < v$ \textbf{and} $B_u.slot=1$ \textbf{and} $B_u.view = \Certificate{s_w,w}.fv$ \COMMENT{Case 2}}
    \label{slot:case2}
    \RETURN true
        
    \ELSIF{$s=1$ \textbf{and} $\Certificate{s_w,w}$ is a \emph{NEW-SLOT} certificate \textbf{and} $B_u.slot=s_w+1$ \textbf{and} $B_u.view = w$\COMMENT{Case 3}}
    \label{slot:case3}
    \RETURN true

    \ELSIF{$s > 1$ \textbf{and} $\Certificate{s_w,w}$ is a \emph{NEW-SLOT} certificate \textbf{and} $s_w = s-1$ \textbf{and} $w=v$ \COMMENT{Case 4}}
    \label{slot:case4}
        \RETURN true
    \ENDIF
    \RETURN false \label{slot:endcheck}
    \ENDFUNCTION
    \SPACE
        \TITLE{Backup role}{running at each replica $R$ (including leader)}
        \EVENT{Received $\SignMessage{\Message{Propose}{B_{s,v}, \Certificate{s_w,w}, H_u}}{\Primary{v}}$} \label{slot:recv-propose}
            \IF{$\Certificate{s_w,w}$ extends $\Certificate{s_w-1,w}$ \COMMENT{commit-rule-case1}} \label{slot:commit-rule-case1}
                \STATE Execute all transactions up to (incl.) $B_{s_w-1,w}$, add result to global-ledger and respond to clients  
            \SPACE

            \ELSIF{$s_w = 1$ \AND $\Certificate{s_w,w}$ extends $\Certificate{s_{w-1},w-1}$
            \COMMENT{commit-rule-case2}} \label{slot:commit-rule-case2}
                \STATE Execute all transactions up to (incl.) $B_{s_{w-1},w-1}$, add result to global-ledger and respond to clients  
            \ENDIF
            \SPACE
            \IF{($s = s_w+1$ \AND $v = w$) \OR ($s = 1 $ \AND $v = w+1$) \COMMENT{No-Gap rule} \\\hspace{4mm}\AND predecessor of $\Certificate{s_w,w}$ is in global-ledger  \COMMENT{\strongspec{} rule}}\label{slot:prefixspeculation}
                \IF{local-ledger state conflicts with $B_{s_w,w}$} \label{slot:conflict-cert}
                    \STATE Rollback local-ledger to the common ancestor \label{slot:rollback}
                \ENDIF
                \STATE Execute all transactions in $B_{s_w,w}$ speculatively, add the result to local-ledger and send the client a response 
            \ENDIF
            \SPACE

            \IF{\textsc{SafeSlot($s,v,\Certificate{s_w, w}, H_u$)}\label{line:consecutiveslot}
            }
                \STATE $\Share{R}{} \gets$ \MName{CreateThresholdShare}$(\Certificate{s_{lp}, v_{lp}}, Hash(B_{s,v}), H_u, New{-}Slot)$  \label{slot:create-newslot-share}
                \STATE Send $\SignMessage{\Message{NewSlot}{s, v,\Certificate{s_{lp},v_{lp}},\Share{R}{}}}{R}$ to $\Primary{v}$ \label{slot:send-newslot}
            \ELSE
            \STATE Send $\SignMessage{\Message{Reject}{s, v,\Certificate{s_{lp},v_{lp}}}}{R}$ to $\Primary{v}$ \label{trusted:send-reject}
            \ENDIF
            
            \SPACE

            \STATE $s \gets s+1$  \COMMENT{disable voting for slot $s$} 
        \ENDEVENT
        \SPACE

        \EVENT{Upon timeout} \label{slot:timeout}  
            %
            \STATE $\Share{h}{} \gets$ \MName{CreateThresholdShare}$(\Certificate{s_{lp}, v_{lp}}, H_h, New{-}View)$  \label{slot:create-newview-share} 
            \STATE Send $\SignMessage{\Message{NewView}{v+1,\Certificate{s_{lp},v_{lp}}, H_h, \Share{h}{}}}{R}$ to $\Primary{v+1}$  \label{slot:send-newview}
            \STATE $v \gets v+1$, $s \gets 1$\COMMENT{disable voting for view $v$}
            \STATE Call \MName{pacemaker.completedView}$()$ \label{slot:switch-view} 
        \ENDEVENT

    \end{myprotocol}
    \caption{Backup Role in Streamlined \HSOne{} + Slotting.}
    \label{alg:slot-steamlined-hs1-replica}
\end{figure}


{\em NewSlot:} For slot $(s, v)$, where $s > 1$, the leader $\Primary{v}$ awaits well-formed \MName{NewSlot} 
messages from at least $\n{-}\f$ replicas voting for $B_{s-1, v}$. Once it collects $\n{-}\f$ votes, it combines the \emph{New-Slot} signature-shares to create a \emph{New-Slot} certificate $\Certificate{s-1,v}$. After forming $\Certificate{s-1,v}$, $\Primary{v}$ proceeds 
to propose slot $B_{s, v}$ including $\Certificate{s-1,v}$ (Lines~\ref{slotleader:newslot}-\ref{slot:create-newslot-share}).



{\bf ProposeVote.}
Upon receiving a proposal $B_{s,v}$, a replica $R$ checks whether any of the following cases are satisfied—based on the slot $s$, the certificate $\Certificate{s_w, w}$, and the block $B_u$ with hash $H_u$—before voting (see Figure~\ref{alg:slot-steamlined-hs1-replica}, Lines~\ref{slot:case1}–\ref{slot:endcheck}). \Add{For first-slot proposals, Case 1 provides a valid proof of “no tail-forking” in way (i), while Cases 2 and 3 provide a valid proof in way (ii).}

Case 1: $s=1$; $B_{1,v}$ extends a \emph{New-View} certificate $\Certificate{s_w, w}$ such that $\Certificate{s_w, w}.fv = v$.

Case 2: $s=1$; $B_{1,v}$ extends a \emph{New-View} certificate $\Certificate{s_w, w}$ such that $\Certificate{s_w, w}.fv < v$; and $B_{1,v}$ carries $B_u$ such that $B_u.slot=1$ and $B_u.view = \Certificate{s_w,w}.fv$.

Case 3: $s=1$; $B_{1,v}$ extends a \emph{New-Slot} certificate $\Certificate{s_w, w}$ and carries $B_{s_w+1, w}$.

Case 4: $s>1$; $B_{s,v}$ extends a \emph{New-Slot} certificate $\Certificate{s_w, w}$ such that $s_w = s-1$ and $w = v$.

Then, $R$ checks if its highest certificate $\Certificate{s_{lp}, v_{lp}}$ is lexicographically not greater than $\Certificate{s_w, w}$. If so, $R$ sends a \MName{NewSlot} message containing a \emph{New-Slot} signature-share of $B_{s,v}$ (Line~\ref{slot:create-newslot-share}).




{\bf Slot-change.}
There is no timer for individual slots within a view: given a view $v$, a replica exits slot $s$ upon receiving a well-formed leader proposal for slot $(s, v)$, which extends $\Certificate{s-1, v}$.

{\bf View-change.}
A lack of progress is detected at the view level (not at the slot level). When the timer for view 
$v-1$ expires, a replica $R$ exits view $v-1$; 
$R$ uses the pacemaker to synchronize entering to view $v$ and sends a \MName{NewView} message containing $\Certificate{s_{lp},v_{lp}}$, its highest certificate, $H_h$, hash of its highest voted block, and $\delta_h$, a \emph{New-View} signature share for $\Certificate{s_{lp},v_{lp}}$ and $H_h$ (Lines~\ref{slot:timeout}-\ref{slot:switch-view}).

{\bf Commit Rule.}
The same as the streamlined design without \emph{slotting}, \PHSOne{} with \emph{slotting} has only one commit rule: 
replicas follow the prefix commit rule (Definition \ref{def:consecutive-commit-rule}) to add a transaction to the global-ledger. 

However, as we form a two-dimensional chain with \emph{slotting}, there are two different cases when a replica $R$ learns a new certificate $\Certificate{s_w,w}$ and commits the block extended by $\Certificate{s_w,w}$:
(1) $s_w > 1$: commits block $B_{s_w-1,w}$ if $\Certificate{s_w,w}$ extends $\Certificate{s_w-1,w}$. (Line~\ref{slot:commit-rule-case1})  (2) $s_w = 1$: commits block $B_{s_{w-1},w-1}$ if $\Certificate{s_w,w}$ extends $\Certificate{s_{w-1},w-1}$ (Line~\ref{slot:commit-rule-case2}).

Of special note are the \emph{uncertified carry blocks} in the first-slot blocks, that are viewed as a part of the first-slot blocks. If a first-slot block $B_{1,v}$ contains $H_u$ of a carry block $B_{u}$, then $B_{u}$ gets committed only when $B_{1,v}$ is committed.

{\bf Speculation.} 
Replicas may speculate on a block $B_{s_w,w}$ when it satisfies
the Prefix Speculation Rule and No-Gap Rule. That is, a replica $R$ can speculate on a block $B_{s_w,w}$ upon receiving a proposal $B_{s,v}$ carrying $\Certificate{s_w,w}$ if the prefix of $B_{s_w,w}$ is committed and $B_{s_w,w}$ is from the immediately preceding slot (Line~\ref{slot:prefixspeculation}), i.e.,
(1) $s = s_w+1, v = w$; or 
(2) $s = 1, v = w+1$. 



\subsection{Tolerance to Tail-Forking} \label{sec:tailfork}

Now we show how the \HSOne{} with slotting mitigates tail-forking attacks. We denote by $B_{s-1,v}, B_{s,v}$, the last two slots of view $v$ with a correct leader $\Primary{v}$, where $B_{s, v}$ extends $B_{s-1,v}$. It is guaranteed that if $\Primary{v}$ proposed at least two slots and at least $\f+1$ correct replicas have voted for its last slot $B_{s,v}$, then $B_{s,v}$ and all preceding slots in view $v$ are protected from \emph{tail-forking attacks}.


That is, if at least $\f{+}1$ correct replicas have voted for $B_{s,v}$, it becomes impossible to form a \emph{New-View} certificate $\Certificate{s{-}1,v}$, as those $\f{+}1$ correct replicas will vote for $B_{s,v}$ rather than $B_{s{-}1,v}$ in their \MName{NewView} messages. Consequently, $B_{1,v{+}1}$ can extend either a \emph{New-Slot} certificate $\Certificate{s{-}1,v}$ or a \emph{New-View} certificate $\Certificate{s,v}$. If it extends the \emph{New-Slot} $\Certificate{s{-}1,v}$, then by Case 3 of \MName{ProposeVote} phase, it must \emph{carry} $B_{s,v}$; otherwise, it extends the \emph{New-View} certificate $\Certificate{s,v}$. In either case, $B_{s,v}$ and all preceding slots in view $v$ are not \emph{tail-forked}.

\subsection{Advancing at Network Speed with Trusted Previous Leaders}\label{sec:slot_responsiveness}

Generally, leaders of BFT consensus must guarantee they extend a highest certificate that all honest replicas will accept (for liveness).
A hallmark of protocols in the \HS{} family, often referred to as \textit{(optimistic) responsiveness}, is allowing the protocol to advance \textit{at network speed} unless there are faults. In particular, in \HS/\HSTwo, the leader replacement regime ensures that (after GST), leaders learn the highest certificate without waiting for the pre-determined maximal network delay $\Delta$, unless there is a fault.
%


\PHSOne{} with \emph{slotting} introduces a new challenge: $\Primary{v}$ does not know in advance the highest slot $s$ proposed in view $v{-}1$, since each leader attempts to propose as many slots as possible before its view expires, and the number of slots per view is adaptive. If a correct leader $\Primary{v-1}$ fails to broadcast its final slot to at least $\n{-}\f$ well-behaving replicas before their view timers expire, the next leader $\Primary{v}$ may be unable to form a \emph{New-View} certificate and must wait for an $O(\Delta)$ delay to receive \MName{NewView} messages from all correct replicas.

To avoid this unintended $O(\Delta)$ delay between two correct leaders, we introduce the notion of \emph{trusted} and \emph{distrusted} previous leaders. Initially, each leader trusts its previous leader in the rotation. Upon receiving a \MName{NewView} message from a \emph{trusted} previous leader $\Primary{v{-}1}$ that includes a certificate formed in view $v-1$ (Figure~\ref{alg:slot-steamlined-hs1-primary}, Line~\ref{trusted:receive-previous-leader-newview-1}), i.e., a \emph{New-Slot} certificate of view $v-1$ or a \emph{New-View} certificate with $fv=v-1$, $\Primary{v}$ immediately proposes its first-slot extending $\Certificate{s, v{-}1}$. This is safe because no correct replica can hold a higher certificate than that of a correct $\Primary{v{-}1}$ when exiting view $v{-}1$, assuming a certificate was formed in view $v-1$. The trusted/distrusted mechanism thus enables $\Primary{v}$ to propose its first-slot block at network speed, avoiding unnecessary delays.

However, a Byzantine previous leader $\Primary{v{-}1}$, initially \emph{trusted} by $\Primary{v}$, may conceal the highest certificate it has formed and sent to other correct replicas. This can cause $\Primary{v}$'s first-slot proposal to be rejected by a correct replica $R$ that has already received the higher certificate. Upon rejection, $R$ sends a \MName{Reject} message to $\Primary{v}$ containing its highest certificate (Figure~\ref{alg:slot-steamlined-hs1-replica}, Line~\ref{trusted:send-reject}). If $\Primary{v}$ had previously received a \MName{NewView} message from $\Primary{v{-}1}$ containing a lower certificate formed in view $v-1$ (Figure~\ref{alg:slot-steamlined-hs1-primary}, Line~\ref{trusted:receive-reject}), it then marks $\Primary{v{-}1}$ as \emph{distrusted}.
In future views where $\Primary{v}$ becomes leader again, it no longer trusts $\Primary{v{-}1}$ and follows the four conditions described in \S\ref{ssec:slotting} when entering the view. As a result, each malicious leader $\Primary{v{-}1}$ can conceal its highest certificate at most once after \emph{GST}, without compromising liveness.


\Add{
If the previous leader is \emph{distrusted}, the four conditions for proposing the first slot ensure that, after \emph{GST}, a correct leader $\Primary{v}$ can learn the highest certificate known to any correct replica.
If condition~(\ref{condition1}) holds, then at least $\f{+}1$ correct replicas did not vote for any slot higher than the formed certificate $\Certificate{s_w, w}$, implying that no higher certificate could exist.
If condition~(\ref{condition2}) or~(\ref{condition3}) holds, the Pacemaker guarantees that $\Primary{v}$ receives \MName{NewView} messages from all correct replicas, thereby acquiring the highest certificate.
Under condition~(\ref{condition4}), the highest votes contained in the \MName{NewView} messages reveal that no higher certificate could have been formed.
}

\Arxiv{
If some correct replica holds a certificate $\Certificate{s^*,v^*}$ higher than the leader's $\Certificate{s_{lp},v_{lp}}$, then at least $\f+1$ correct replicas have voted for $B_{s^*,v^*}$ and will vote for a block not lower than $B_{s^*,v^*}$ in the \MName{NewView} messages sent to the leader. While processing the \MName{NewView} messages, if condition (\ref{condition1}) or (\ref{condition4}) is satisfied, then no block higher than $\Certificate{s_{lp}, v_{lp}}$ can get more than $\f$ correct-replica votes through the \MName{NewView} messages, thus such $\f+1$ correct replicas do not exist; If condition (\ref{condition2}) or (\ref{condition3}) is satisfied, then the higher certificate will be received by the leader, as the pacemaker protocol guarantees that after \emph{GST}, all \MName{NewView} messages from correct replicas can arrive at the leader by \MName{ShareTimer($v$)
}.
}

\if{false}

{\bf Adaptive Slotting Semantics.}
The leader in each view may adopt count-based or time-based window semantics when proposing slots. 
In the count-based variant, each leader is expected to propose a predetermined set of slots, while 
in the time-based variant, a predetermined timeout period is set in which the leader may propose as many 
slots as possible adaptively. Since the number of slots is not known \textit{a priori} when adopting 
the time-based semantics, the correct leader will tag its last slot as the final slot to invoke the 
pacemaker in correct replicas to advance their views.

\DM{omit the rest below here.}
$\Primary{v}$ uses its highest locked certificate (say $\Certificate{j, v_{lp}}$) to extend 
its new proposal. In either case, the leader broadcasts its new proposal as a $\MName{Propose}$ message 
to all the replicas. This proposal includes the view number $v$, block $B_{i,v}$ of client transactions 
yet to be proposed, hash $h_{i,v}$ of $B_{i,v}$, and the certificate it extends.

{\bf Execute and Ledger Update.}
On receiving the \MName{Propose} message $m$ from $\Primary{v}$ for slot $i$, a replica $R$ 
compares the views of the certificate $\Certificate{k,w}$ in $m$ and and $R$'s highest locked certificate 
$\Certificate{j, v_{lp}}$. If $w < v_{lp}$, $R$ ignores the message. Otherwise, $R$ performs as follows.

\begin{enumerate}[nosep,wide]
    \item Updates $\Certificate{j,v_{lp}}$ to $\Certificate{k,w}$; sets $v_{lp} = w$.
    
    \item If $\Certificate{k,w}$ conflicts with $\Certificate{j,v_{lp}}$ and $R$ had speculatively 
    executed transaction at $v_{lp}$, then $R$ rollbacks its state. 
    
    \item If $\Certificate{k,w}$ is a strong certificate: $\Certificate{k,w}$ extends a 
    certificate $\Certificate{k-1,w}$ if $k > 1$ or $\Certificate{s,w-1}$ if $k = 1$, where $w=v-1$, 
    then $R$ {\em speculatively executes} the transactions corresponding to $\Certificate{k,w}$ 
    once it has executed transactions (yet to be executed) for certificates that $\Certificate{k,w}$ 
    extends. The order for executing these transactions is the same as described earlier in this section.

    \item Replies to the respective clients.
    
    \item Adds an entry for $\Certificate{k,w}$ in its local log.
\end{enumerate}

{\bf Commit.}
Post this, $R$ {\em adds} an entry for the certificates that $\Certificate{k,w}$ extends 
to the global ledger. 

{\bf NewView.}
Concurrently, $R$ creates a \MName{ProposeVote} message, which includes a threshold signature 
share $\Share{R}{P}$ for $m$ and hash of $\Certificate{j, v_{lp}}$; $R$ sends this message to 
the leader for the next slot.

\fi

%
%

\section{Evaluation}\label{sec:eval}
Our evaluation aims to answer the following:
\begin{enumerate}[wide,nosep]
    \item Scalability of \sysname{}: throughput and latency with a varying number of replicas and number of transactions in a batch.
    \item Impact of $\f$ additional required responses on \sysname{}.
    \item Impact of leader-slowness, tail-forking, and rollbacks.
\end{enumerate}

{\bf Setup.}
We use c3.4xlarge AWS machines: $16$-core Intel Xeon E5-2680 v2 (Ivy Bridge) processor, 
$\SI{2.8}{\giga\hertz}$ and $\SI{30}{\giga\byte}$ memory. We deploy up to $64$ machines for 
replicas. Each experiment runs for $120$ seconds.
We employ {\em batching} in all our experiments with a default batch size of $100$ and 
mention specific sizes when necessary.

{\bf Implementation.}
We implement all the protocols in \RDB{} (incubating)~\cite{apacheresdb}; C++20 code with Google 
Protobuf v$3.10.0$ for serialization and NNG v$1.5.2$ for networking. \RDB{} is an optimized 
blockchain framework that provides APIs to implement a new consensus protocol. As threshold signature 
algorithms are expensive and can quickly bottleneck the computational resources, the leader sends a 
list of $\n-\f$ digital signatures (from distinct replicas) as a certificate.

{\bf Baselines.}
We compare streamlined \sysname{} against two other comparable streamlined protocols:
\begin{enumerate}[wide,nosep]
    \item {\bf \HS{}.} 
    First streamlined \BFT{} consensus protocol; requires $7$ half-phases to reach consensus 
    on a client transaction (total $9$ half-phases including client request and response).

    \item {\bf \HSTwo{}.}
    Optimized \HS{} variant that requires $5$ half-phases for consensus (total $7$ half-phases).
\end{enumerate}

As for \sysname{}, we implement two versions of it:
\begin{enumerate}[wide,nosep]
    \item {\bf \HSOne{}.}
    Streamlined \BFT{} consensus protocol with speculative execution that 
    requires $3$ half-phases for speculative response (total $5$ half-phases).

    \item {\bf \HSOne{} (with Slotting).}
\end{enumerate}

{\bf Workloads.}
We use two workloads: YCSB~\cite{blockbench} and TPC-C~\cite{tpcc}:
\begin{enumerate}[wide,nosep]
    \item {YCSB.} Key-value store write operations that access a database of 600k records.

    \item {TPC-C.} Online transaction processing (OLTP) operations that access a database of 260k records, simulating a complex warehouse and order management environment.
\end{enumerate}
Unless explicitly stated, we use YCSB as the default workload.

{\bf Metrics.}
We focus on two metrics:

(1) {\em Throughput} -- the maximum number of transactions per second for which the system completes consensus.

%
(2) {\em Client Latency} -- the average duration between the time a client sends a transaction to the time 
the client receives a matching quorum of responses ($\f{+}1$ for \HS{}/\HSTwo{} and $\n{-}\f$ for \sysname{}) for that transaction. 

\begin{figure*}[t]
    \centering
    \scalebox{0.5}{\ref{mainlegend4}}\\[5pt]
    \setlength{\tabcolsep}{4pt}
    \renewcommand{\arraystretch}{1.0}
    \begin{tabular}{cccc}
        \begin{tabular}{@{}c@{}}
            \throughputgraphb{\dataTputb}{(a) Scalability}{\axisnodes}{Throughput (txn/s)}{\axisticksnodesb} \\[4pt]
            \geothroughputgraph{\datageo}{(e) Geo-Scale + YCSB}{\axisregions}{Throughput (txn/s)}{\axisticksregions}
        \end{tabular}
        &
        \begin{tabular}{@{}c@{}}
            \latencygraph{\dataClientLat}{(b) Scalability}{\axisnodes}{Client Latency (ms)}{\axisticksnodesb} \\[4pt]
            \geolatencygraph{\responselatgeo}{(f) Geo-Scale + YCSB}{\axisregions}{Client Latency (s)}{\axisticksregions}
        \end{tabular}
        &
        \begin{tabular}{@{}c@{}}
            \batchsizegraph{\dataBatchSize}{(c) Batching}{\axisbatches}{Throughput (txn/s)}{100, 1000, 2000, 5000, 10000} \\[4pt]
            \geothroughputgraph{\datageotpcc}{(g) Geo-Scale + TPC-C}{\axisregions}{Throughput (txn/s)}{\axisticksregions}
        \end{tabular}
        &
        \begin{tabular}{@{}c@{}}
            \batchsizeLatgraph{\ClientLatBatchSize}{(d) Batching}{\axisbatches}{Client Latency (ms)}{100, 1000, 2000, 5000, 10000} \\[4pt]
            \geolatencygraph{\responselatgeotpcc}{(h) Geo-Scale + TPC-C}{\axisregions}{Client Latency (s)}{\axisticksregions}
        \end{tabular}
    \end{tabular}
    \caption{Scalability Plots.}
    \label{fig:throughput}
\end{figure*}

\subsection{Scalability}
\label{sssec:tput_rep}

{\bf Impact of the number of replicas} In Figures~\ref{fig:throughput} (a) and (b), we present various system metrics as a function of the 
number of replicas; we increase the number of replicas from $\n=4$ to $\n=64$.

As expected, an increase in the number of replicas causes a proportional decrease in the throughput for 
all the protocols due to an $O(\n)$ increased message complexity, which decreases available bandwidth 
and increases the computational work at each replica. \HSOne{}, with or without slotting, yields the same throughput as \HS{}/\HSTwo{}
because the message complexity remains the same for all the streamlined protocols. 

An increase in the number of replicas also causes a proportional increase in the client 
latency for all the protocols due to an $O(\n)$ increased message complexity, which increases the time 
duration for a leader to collect a quorum of threshold shares and to form a certificate. 
Moreover, each client needs to wait longer for a larger quorum of messages to arrive.
This implies that \HSOne{} clients should incur higher latency as they must wait for $\f$ more 
responses. However, \HSOne{} yields lower latency because speculation guarantees 
an \early{}. \HSOne{}, with or without slotting, yields $41.5 \%$ and $24.2\%$ (for small setups) and 
$38.5\%$ and $22.7\%$ (for large setups) less client latency in comparison to \HS{} and \HSTwo{}.

{\bf Impact of Batch Size}
Next, in Figures~\ref{fig:throughput} (c) and (d), we increase the number of transactions per batch 
(batch size) from $100$ to $10000$ and run consensus among $\n = 32$ replicas.

For all protocols, increasing the batch size improves throughput until either bandwidth or compute resources are saturated, beyond which throughput tapers off.
The throughput gain at smaller batch sizes is due to reduced consensus overhead and fewer messages being processed.
At larger batch sizes (around $5000$), all protocols become compute-bound before reaching bandwidth saturation, as the benefits of reduced consensus overhead are offset by the increased cost of proposing (for leaders) and processing (for replicas) larger batches.
In contrast, client latency increases with batch size, as proposing and processing larger batches takes more time in each view.

%
%

{\bf Geo-Scale Scalability}
In Figures~\ref{fig:throughput} (e–h), we deploy replicas across the globe, varying the number of geographical regions from $2$ to $5$—North Virginia, Hong Kong, London, São Paulo, and Zurich—and uniformly distribute $\n=32$ replicas across these regions. These experiments use both the YCSB and TPC-C benchmarks. We observe that all protocols exhibit similar trends across both benchmarks, as high inter-regional round-trip times limit throughput and increase latency.

As the number of regions increases from $2$ to $5$, all protocols experience up to a $59\%$ drop in throughput and a $159.4\%$ increase in latency. Nevertheless, the general trend remains consistent: \HSOne{} matches the throughput of other protocols while achieving the lowest latency.


\subsection{Impact of the $\f$ Additional Responses}
We now experimentally validate our claim: although \HSOne{} clients wait for $\f$ additional responses compared to \HS/\HSTwo{} clients, \HSOne{} always yields the lowest latency for clients.

\begin{figure*}[t]
    \centering
    \scalebox{0.5}{\ref{mainlegend4}}\\[3pt]
    \setlength{\tabcolsep}{2pt}  
    \renewcommand{\arraystretch}{1.0}
    \begin{tabular}{@{}c@{}c@{}c@{}c@{}c@{}}  
        \begin{tabular}{@{}c@{}}
            \networkdelaytesttput{\networkdelayonemstput}{(a) Inject $1$ms Delay}{Number of Impacted Replicas}{Throughput (txn/s)} \\
            \networkdelaytestlat{\networkdelayonemslat}{(f) Inject $1$ms Delay}{Number of Impacted Replicas}{Latency (ms)}
        \end{tabular}
        &
        \begin{tabular}{@{}c@{}}
            \networkdelaytesttput{\networkdelayfivemstput}{(b) Inject $5$ms Delay}{Number of Impacted Replicas}{Throughput (txn/s)} \\
            \networkdelaytestlat{\networkdelayfivemslat}{(g) Inject $5$ms Delay}{Number of Impacted Replicas}{Latency (ms)}
        \end{tabular}
        &
        \begin{tabular}{@{}c@{}}
            \networkdelaytesttput{\networkdelayfiftymstput}{(c) Inject $50$ms Delay}{Number of Impacted Replicas}{Throughput (txn/s)} \\
            \networkdelaytestlat{\networkdelayfiftymslat}{(h) Inject $50$ms Delay}{Number of Impacted Replicas}{Latency (ms)}
        \end{tabular}
        &
        \begin{tabular}{@{}c@{}}
            \networkdelaytesttput{\networkdelayfivehundredmstput}{(d) Inject $500$ms Delay}{Number of Impacted Replicas}{Throughput (txn/s)} \\
            \networkdelaytestlat{\networkdelayfivehundredmslat}{(i) Inject $500$ms Delay}{Number of Impacted Replicas}{Latency (ms)}
        \end{tabular}
        &
        \begin{tabular}{@{}c@{}}
            \networkdelaytesttput{\tworegiontput}{(e) Geographical Deployment}{Number of London Replicas}{Throughput (txn/s)} \\
            \networkdelaytestlat{\tworegionlat}{(j) Geographical Deployment}{Number of London Replicas}{Latency (ms)}
        \end{tabular}
    \end{tabular}
    \caption{Performance with Varying Network Conditions.}
    \label{fig:networkdelaytest}
\end{figure*}

{\bf Injecting Message Delay.}
We begin by evaluating the impact of delayed messages on client latency. This experiment demonstrates that even when more than $\f+1$ replicas experience high message delays, \HSOne{} clients do not incur increased latencies.
The setup is as follows:
(1) We deploy $\n=31$ replicas.
(2) Based on prior experiments, the client latencies for \HSOne{}/\HSTwo{}/\HS{} are approximately $5\ \mathrm{ms}$/$7\ \mathrm{ms}$/$9\ \mathrm{ms}$, respectively. We inject increasing message delays $\delta \in \{1\ \mathrm{ms},\ 5\ \mathrm{ms},\ 50\ \mathrm{ms},\ 500\ \mathrm{ms}\}$.
(3) We vary the number of impacted replicas $k \in \{0,\ \f,\ \f{+}1,\ \n{-}\f{-}1,\ \n{-}\f,\ \n\}$, i.e., $k = 0,\ 10,\ 11,\ 20,\ 21,\ 31$.
Figures~\ref{fig:networkdelaytest} (a–d) and (f–i) present the results of these experiments.

For all protocols, as the number of impacted replicas increases, latency increases and throughput decreases due to the delayed message transmission to and from these replicas. The impact is most pronounced when increasing from $k = \f$ (10) to $k = \f{+}1$ (11), as every certificate formed by the leader must now include at least one signature share from an impacted replica (since certificates require $\n{-}\f$ signatures). These results further support our claim that the primary bottleneck in these protocols lies in achieving consensus, rather than in responding to clients.

As the number of impacted replicas increases from $k = \n{-}\f{-}1$ (20) to $k = \n{-}\f$ (21), client latencies in \HS{} and \HSTwo{} increase sharply, whereas \HSOne{} shows only a moderate increase. This is because, when $k \ge \n{-}\f$, clients can receive at most $\f$ responses from non-impacted replicas, causing latency to be dominated by the slower, impacted replicas.

When $k \le \f$, \HSOne{} with \emph{slotting} yields better performance than all other protocols because slotting allows the non-impacted replicas to propose more blocks during their views.

{\bf Geographical Deployment.}
Next, we deploy $\n=31$ replicas across two geographically distant regions: North Virginia and London, with all clients located in North Virginia. We vary the number of replicas placed in London, denoted by $k \in \{0, \f, \f{+}1, \n{-}\f{-}1, \n{-}\f, \n\}$. The results, shown in Figures~\ref{fig:networkdelaytest}(k) and (l), illustrate the impact of increasing geographic separation among replicas.

When $k \le \f$ (10) or $k \ge \n{-}\f$ (21), leaders in North Virginia and London, respectively, can form certificates using votes from $\n{-}\f$ replicas within their own region. In contrast, when $k$ is between $\f+1$ (11) and $\n-\f-1$ (20), forming a certificate requires at least one vote from the remote region, leading to degraded throughput and latency. Performance is better when $k \le \f$ than when $k \ge \n{-}\f$ because most leaders are co-located with clients in North Virginia. When $k \le \f$ or $k \ge \n-\f$, \HSOne{} with \emph{slotting} outperforms other protocols, as slotting enables leaders with $\n{-}\f$ co-located replicas to propose more blocks per view.

\subsection{Failure Resiliency}
\label{ss:eval-failures}

\begin{figure*}[t]
    \centering
    \scalebox{0.5}{\ref{falegend}}\\[5pt]
    \setlength{\tabcolsep}{4pt}
    \renewcommand{\arraystretch}{1.0}
    \begin{tabular}{cccc}
        \begin{tabular}{@{}c@{}}
            \nonresponsivegraph{\dataNR}{(a) Leader slowness (timer 10ms)}{Number of Slow Leaders}{Throughput (txn/s)}{\axistickslows} \\[4pt]
            \forktailgraph{\dataFA}{(e) Tail-forking}{Number of Faulty Leaders}{Throughput (txn/s)}{\axistickslows}
        \end{tabular}
        &
        \begin{tabular}{@{}c@{}}
            \nonresponsivelatgraph{\latNR}{(b) Leader slowness (timer 10ms)}{Number of Slow Leaders}{Client Latency (ms)}{\axistickslows} \\[4pt]
            \forktailgraph{\latFA}{(f) Tail-forking}{Number of Faulty Leaders}{Client Latency (ms)}{\axistickslows}
        \end{tabular}
        &
        \begin{tabular}{@{}c@{}}
            \nonresponsivegraphb{\dataNRB}{(c) Leader slowness (timer 100ms)}{Number of Slow Leaders}{Throughput (txn/s)}{\axistickslows} \\[4pt]
            \forktailgraph{\dataRB}{(g) Rollback}{Number of Faulty Leaders}{Throughput (txn/s)}{\axistickslows}
        \end{tabular}
        &
        \begin{tabular}{@{}c@{}}
            \nonresponsivelatgraphb{\latNRB}{(d) Leader slowness (timer 100ms)}{Number of Slow Leaders}{Client Latency (ms)}{\axistickslows} \\[4pt]
            \forktailgraph{\latRB}{(h) Rollback}{Number of Faulty Leaders}{Client Latency (ms)}{\axistickslows}
        \end{tabular}
    \end{tabular}
    \caption{Impact of varying the number of faulty replicas (leader slowness, tail-forking, and rollback).}
    \label{fig:fault}
\end{figure*}

{\bf Leader slowness phenomenon.}

We now study the impact of leader slowness (\S\ref{s:slotting}) on streamlined protocols by varying the number of slow leaders from $0$ to $\f$, with $\n=32$ replicas, a batch size of $100$, and two timeout settings: \SI{10}{ms} and \SI{100}{ms}. A slow leader does not propose until the end of its view duration. Figures~\ref{fig:fault}(a)–(d) present the results.

Slow leaders degrade throughput and client latency in all protocols except \HSOne{} with \emph{slotting}. In \HSOne{}, each leader can propose multiple slots, eliminating delays associated with leader slowness. Moreover, the larger the timeout period, the more batches a leader can propose, further improving performance. For example, with a timeout of \SI{10}{ms}, \HSOne{} (with slotting) experiences only a $1.8\%$ and $28.7\%$ drop in throughput and a $0.9\%$ and $18.5\%$ increase in latency with $1$ and $\f=10$ slow leaders, respectively. In contrast, other protocols suffer $14.5\%$ and $63.5\%$ lower throughput and $18.7\%$ and $2.8\times$ higher latency under the same conditions.
Similarly, with a timeout of \SI{100}{ms}, \HSOne{} incurs $3.9\%$ and $34.4\%$ lower throughput and $5.7\%$ and $27.1\%$ higher latency with $1$ and $\f=10$ slow leaders, while other protocols see throughput drop by $63.4\%$ and $94.5\%$, and latency increase by $2.81\times$ and $19\times$, respectively.

{\bf Tail-forking attack.}
Similar to the leader slowness phenomenon, the tail-forking attack seeks to increase system latency by preventing proposals from correct leaders from being committed (\S\ref{s:slotting}). In this experiment, we vary the number of faulty leaders from $0$ to $\f$, using $\n=32$ replicas and a batch size of $100$. As shown in Figures~\ref{fig:fault}(e) and (f), a faulty leader in view $v$ ignores the certificate from the proposal in view $v{-}1$ and instead extends its proposal from the certificate of view $v{-}2$.

As before, faulty leaders degrade the performance of all protocols except \HSOne{} with \emph{slotting}. In \HSOne{} with \emph{slotting}, each leader can propose multiple batches, and a faulty leader can at most suppress the final slot, mitigating the impact of the attack. \sysname{} with \emph{slotting} demonstrates greater resilience, particularly with longer timeout periods. For instance, with $\f=10$ faulty leaders, \HSOne{} (with slotting) shows only a $4.1\%$ and $1.4\%$ reduction in throughput under timeout settings of \SI{10}{ms} and \SI{100}{ms}, respectively—compared to the no-failure case. In contrast, other protocols suffer a $31.6\%$ drop in throughput under the same conditions.
Similarly, \HSOne{} experiences minimal change in latency, while other protocols exhibit up to a $45.3\%$ increase in client latency with $\f=10$ faulty leaders.

{\bf Rollback.} 
In \HSOne{}, speculation on uncommitted transactions may require replicas to roll back speculated transactions. In Figures~\ref{fig:fault}(g)–(h), we vary the number of faulty leaders from $0$ to $\f$ and allow each to force up to $\f$ correct replicas to roll back transactions, using $\n=32$ replicas and a batch size of $100$. Notably, in \HSOne{} with \emph{slotting}, a faulty leader $\Primary{v}$ can only force rollbacks of the last slot in the preceding view $v{-}1$; it cannot skip any slot in its own view.
Faulty leaders degrade throughput and latency in \HSOne{} without slotting. With $\f=10$, \HSOne{} without slotting suffers a $38.1\%$ drop in throughput and a $35.8\%$ increase in latency relative to the no-failure case. In contrast, rollback attacks have minimal impact on \HSOne{} with slotting.

\section{Related Work}\label{sec:related}

Extensive literature exists on consensus, with numerous studies 
(e.g.,~\cite{wild,scaling,untangle,leaderless-consensus,next700bft,zyzzyvaj,sbft,ardagna2020blockchain,xft,
loghin2022blockchain,blockchain-info-share,lineagechain,bft-forensics,ftn-consensus,gem2tree,bedrock,rbft,serverlessbft,sharper,honeybadger}) 
focused on enhancing consensus systems~\cite{kogias2016enhancing,borealis,shadoweth,occlum,experimental-bft-improv,bft-to-cft,prestigebft,geobft,kuhring2021streamchain,disth,disc-mbft,resdb-icdcs,ringbft,ccf-heidi,eesmr-bft,chemistry,orthrus,ava}. 

\textbf{Speculation.}
Protocols belonging to the \PBFT{} family~\cite{zyzzyvaj,zyzzyva-unsafe,sbft} have explored an \textit{optimistic fast-path} approach to speculation. 
Unfortunately, it works only in fault-free runs and requires a quadratic fallback mechanism. 
Several papers try to eliminate the dependence on the fast-path, but under leader failures, 
they also require quadratic fallback mechanisms~\cite{poe, poe-cc}. 
Exposing the \strongspec{} dilemma and suggesting a rule to resolve it may benefit all of these.

\textbf{Rotational Leader.}
The \HS{} family of protocols reduces leader-replacement communication costs to linear, 
enabling regular leader replacement at no additional communication cost or drop in system 
throughput. 
\HSTwo{}~\cite{hs2} achieves two-phase latency while maintaining linearity; the published \HSTwo{} algorithm is not streamlined, 
and streamlined \HSOne{} contributes a streamlined variant (as well as \early). 
Several other protocols have aimed for two-phase streamlined and linear latency.
However, Fast-HotStuff~\cite{fasths} and Jolteon~\cite{jolteon} have quadratic complexity in view-change; 
AAR~\cite{aar} employs expensive zero-knowledge proofs; Wendy~\cite{nocommit} relies  
on a new aggregate signature construction (and is super-linear); Marlin~\cite{marlin} introduces an additional \emph{virtual block}, offering leaders one more chance to propose a block extending the highest certificate that is supported by all correct replicas.

\textbf{Parallel Dissemination.}
Slotting is complementary to the prior multi-leader protocols like RCC~\cite{rcc,suyash-thesis}, MirBFT~\cite{mirbft}, and SpotLess~\cite{spotless}. 
These protocols focus mostly on increasing {\em throughput}, and a majority of them have a \HS-core. Thus, their designs are orthogonal to this paper. Any reduction in latency, the elimination of leader slowness phenomena, and tail-forking attacks will improve them.
Autobahn~\cite{autobahn} presents a data dissemination protocol that separates the task of disseminating client requests from consensus. It allows all replicas, in parallel, to batch and broadcast client requests. However, after dissemination, Autobahn employs PBFT to reach consensus on the execution order for all requests. Thus, Autobahn is orthogonal to the design of \HSOne{}; the PBFT consensus in Autobahn can be replaced with \HSOne{} to yield lower latency.
DAG-based consensus protocols~\cite{dagrider,narwhal,bullshark,shoal,cordial-miners,bbca-chain,fides,thunderbolt} decouple data dissemination from consensus by leveraging \emph{reliable broadcast (RBC)} mechanisms \cite{brachabroadcast}. These protocols construct a \emph{Directed Acyclic Graph (DAG)} of blocks generated by distinct replicas, enabling high throughput. However, this comes at the cost of increased latency introduced by RBC. Recent works~\cite{mysticeti,shoalpp,sailfish} have focused on reducing the latency of DAG-based consensus protocols. In this context, we posit that speculative execution offers a promising approach to further reduce latency.


\textbf{View Synchronization.}
The view-by-view paradigm of BFT protocols relies on view synchronization mechanisms to coordinate the replicas and to guarantee progress. 
Several solutions to the view synchronization problem have been proposed. 
Prior works~\cite{cogsworth,naor,boltdumbo,fastsync} have $O(n^3)$ worst-case message complexity. 
RareSync\cite{raresync} and Lewis-Pye~\cite{lewis} reduce the worst-case message complexity to $O(n^3)$ 
but face $O(n\Delta)$ latency in the presence of faulty leaders. Fever~\cite{fever} removes the $O(n\Delta)$ 
latency but assumes a synchronous start of replicas. Lumiere~\cite{lumiere} eliminates the need for the 
assumption and maintains all other properties of Fever. SpotLess~\cite{spotless} adopts a rapid view 
synchronization mechanism similar to FastSync~\cite{fastsync}, but embeds view synchronization into 
the \BFT{} consensus workflow, eliminating the need for a separate sub-protocol. 

{\bf Leader Slowness.}
The leader-slowness attack is a well-known problem in blockchains~\cite{time-role-mev,pbs2024,Daian2019FlashB2}.
Prior work has illustrated that in Ethereum, for $59\%$ of blocks, proposers have earned higher MEV rewards than block rewards~\cite{time-role-mev}, and 
any additional delay in proposing can help maximize their MEVs~\cite{time-is-money}.
There are two popular solutions to tackle leader slowness:
(i) Exclude any block that misses a set deadline to the main blockchain. However, a clever proposer can still delay proposing until the deadline~\cite{fork-based}.
(ii) Assign block rewards proportional to the number of attestations; a delayed block will receive fewer attestations and thus reduced block rewards~\cite{reduced-block-reward}. 
However, if MEV rewards exceed total block rewards, the proposer makes a profit despite losing any block reward.

{\bf Tail-forking attack.} 
As described earlier, BeeGees~\cite{beegees} describes the problem of tail-forking. 
They present an elegant solution to this problem by requiring replicas to store the proposal sent by the leader and forwarding that proposal in the future rounds. 
Unfortunately, resending these proposals over the network incurs additional bandwidth overhead.

{\bf Real-World Deployments.}
Several deployed blockchain systems, such as Espresso Systems HotShot~\cite{hotshot}, Flow Networks~\cite{flow-networks}, Meter~\cite{meter} have expressed a latency-over-everything emphasis.  
Early adopters of \HS{}, DiemBFT~\cite{diembft-hotstuff}, and Aptos that uses a two-phase variant of DiemBFT, Ditto~\cite{jolteon}, 
demonstrate the importance of latency. 
Recently, Spacecoin~\cite{spacecoin-bluepaper} unveiled plans to launch a trust platform operating within satellite-cubes in orbit, where latency is paramount because the link from Earth to satellites is slow.
All of these systems may benefit from incorporating \HSOne{}.

\section{Conclusion}
\label{sec:conclusion}
The principal goal of this work has been latency reduction for client finality confirmations in streamlined \BFT{} consensus protocols. We demonstrated that \HSOne{} successfully lowers latency algorithmically via speculation,
and furthermore, tackles leader-slowness and tail-forking attacks via slotting.
Additionally, we exposed and resolved the \textit{prefix speculation dilemma} that exists in the context of \BFT{} protocols that employ speculation.

\begin{acks}
This work is partially funded by NSF Award Number 2245373.
\end{acks}


\bibliographystyle{ACM-Reference-Format}
\bibliography{sources}

      \appendix
      \newpage
\section{Appendix}

\subsection{Speculation Safety in Basic-HotStuff-1} \label{app:safe-spec1}

Allowing replicas to speculatively execute transactions in a proposal $m$ upon receiving a certificate for $m$ is insufficient to guarantee safety for clients, i.e., a client mistakenly considers a transaction as committed after receiving $\n-\f$ responses for it.
The following examples demonstrate how speculative execution after observing a prepare certificate may violate safety unless both the \textbf{Prefix Speculation Rule} and the \textbf{No Gap Rule} are strictly followed.

\textbf{Prefix Speculation Rule}: We first present a scenario where the Prefix Speculation Rule is violated.
Assume the highest certificate across all replicas is $\Certificate{0}$, and the total number of replicas is $\n = 3\f+1$.
Partition the $2\f+1$ correct replicas into three disjoint sets: $A$, $A'$, and $A^*$, such that $\abs{A} = \abs{A'} = \f$ and $\abs{A^*} = 1$.
Suppose the first four leaders are Byzantine.
\begin{itemize}[wide]
    \item In view $1$, the leader $\Primary{1}$ proposes block $B_1$ extending $\Certificate{0}$. A quorum of $\n-\f$ replicas supports this proposal by sending threshold signature shares for $B_1$, allowing $\Primary{1}$ to form the prepare certificate $\Certificate{1}$. However, $\Primary{1}$ forwards $\Certificate{1}$ only to the $\f$ replicas in set $A$, who speculatively execute $B_1$ and respond to the client.
    
    \item In view $2$, leader $\Primary{2}$ disregards $\Certificate{1}$ and instead proposes a new block $B_2$ extending $\Certificate{0}$ to all replicas. Replicas in $A'$ and $A^*$ support $B_2$, allowing $\Primary{2}$ to form $\Certificate{2}$, which is forwarded only to $A'$. The $A'$ replicas then speculatively execute $B_2$ and respond to clients.
    
    \item In view $3$, $\Primary{3}$ ignores $\Certificate{2}$ and proposes $B_3$ extending $\Certificate{1}$ to all replicas. Replicas in $A$ and $A^*$ support $B_3$, enabling the formation of $\Certificate{3}$, which is sent only to $A^*$. Upon receiving it, $A^*$ replicas speculatively execute both $B_3$ and its ancestor $B_1$ (not following the Prefix Speculation Rule).
    
    \item In view $4$, $\Primary{4}$ disregards $\Certificate{3}$ and proposes $B_4$ extending $\Certificate{2}$ to all replicas. Replicas in $A$ and $A'$ support $B_4$, leading to the formation of $\Certificate{4}$. Although $\Certificate{2}$ conflicts with the highest known certificate $\Certificate{1}$ known to replicas in $A$, they are required to support $B_4$ due to the higher view number of $\Certificate{2}$. $\Primary{4}$ then broadcasts $\Certificate{4}$ to all replicas.

    Ultimately, $B_4$ becomes the highest known certificate across all replicas and will eventually be committed.
    
    \item However, an unsafe scenario for clients arises: the client for transactions in $B_1$ may have received $\n - \f$ responses from replicas in $A$, $A^*$, and $\f$ faulty replicas, even though $B_1$ will not be committed.
\end{itemize}

This example illustrates the \strongspec{} dilemma: replicas vote to commit a block $B_v$ along with its prefix, but cannot safely speculate on the prefix unless specific conditions are met.
According to the \strongspec{} rule (Definition~\ref{def:strongspec}), speculative execution is safe only when the prefix of $B_v$ is already committed. In this example, replicas in $A^*$ vote to commit a block $B_3$ along with its prefix $B_1$, but also speculate on the prefix $B_1$, violating the \strongspec{} rule. Thus, the client forms a commit-vote quorum for $B_1$ consisting of $A$, $A^*$, and $\f$ faulty replicas. However, the replicas in $A$, which voted for $B_1$ in view $1$, will switch to support a higher conflicting block $B_2$ after receiving $\Certificate{2}$ in view $4$, which makes the quorum invalid.

From prior literature on speculative consensus protocols, we note that Zyzzyva adopts a useful approach: replicas attach a view number to their speculative results, and clients are required not to aggregate responses from different views. This practice can help mitigate the risk of inconsistency caused by speculative execution across views.

\newpage
\textbf{No Gap Rule}: Secondly, we present a scenario where the No Gap Rule is violated, with the same assumption as we had in the previous scenario.

\begin{itemize}[wide]
    \item In view $1$, the leader $\Primary{1}$ proposes block $B_1$ extending $\Certificate{0}$. A quorum of $\n-\f$ replicas supports this proposal by sending threshold signature shares for $B_1$, allowing $\Primary{1}$ to form the prepare certificate $\Certificate{1}$. However, $\Primary{1}$ forwards $\Certificate{1}$ only to the $\f$ replicas in set $A$, who speculatively execute $B_1$ and respond to the client.
    
    \item In view $2$, leader $\Primary{2}$ disregards $\Certificate{1}$ and instead proposes a new block $B_2$ extending $\Certificate{0}$ to all replicas. Replicas in $A'$ and $A^*$ support $B_2$, allowing $\Primary{2}$ to form $\Certificate{2}$, which is forwarded only to $A'$. The $A'$ replicas then speculatively execute $B_2$ and respond to clients.

    \item In view $3$, $\Primary{3}$ ignores $\Certificate{2}$ and proposes $B_3$ extending $\Certificate{1}$ to $A^*$ only. Upon receiving $B_3$ extending $\Certificate{1}$, $A^*$ replicas  speculatively execute $B_1$ (not following the No Gap Rule). $\Primary{3}$ does not collect votes and then no certificate is formed.

    \item In view $4$, $\Primary{4}$ proposes $B_4$ extending $\Certificate{2}$ to all replicas. All replicas support $B_4$ because it extends the highest certificate $\Certificate{2}$. $\Primary{4}$ then broadcasts $\Certificate{4}$ to all replicas.
    Ultimately, $B_4$ becomes the highest known certificate across all replicas and will eventually be committed.
    
    \item However, an unsafe scenario for client arises: the client for transactions in $B_1$ may have received $\n - \f$ responses from replicas in $A$, $A^*$, and $\f$ faulty replica, even though $B_1$ will not be committed.
\end{itemize}

This example highlights the critical requirement that when a replica $R$ wishes to speculate on a block $B_v$, it must ensure there is \textbf{no view gap} between the view in which the prepare-certificate was formed and its current view. This is necessary to prevent speculative execution on a proposal that may be superseded by a higher certificate formed during the gap—one that remains unknown to $R$. In this example, for replica $A^*$ in view $3$, a view gap exists between its current view and the view in which $\Primary{1}$ formed its certificate; meanwhile, a higher certificate $\Certificate{2}$, formed by $\Primary{2}$ in the gap, will later supersede it.

According to the No Gap Rule (Definition~\ref{def:nogap}), in \BHSOne{}, if $R$ wishes to speculatively execute a block $B_w$, it is safe only if $w=v$ and $\Certificate{w}$ is formed in view $v$.

In summary, the general intuition behind both the \strongspec{} Rule and the No-Gap Rule is the same: \emph{a replica should not speculate on a block if there may exist a higher conflicting certificate that could supersede it}. The \strongspec{} Rule emphasizes this principle for blocks in the uncommitted prefix, while the No-Gap Rule focuses on the block of the latest received certificate.

With the two rules in place, receiving $\n - \f$ responses implies that at least $\f + 1$ correct replicas have speculatively executed the block. This, in turn, implies that $\f + 1$ correct replicas are locked on the certificate of the speculated block, which ensures that no higher conflicting certificate can be formed, thereby guaranteeing safe speculation.

\subsection{Rollback is Necessary} \label{app:rollbackscenes}

Providing \early{} responses is inherently speculative. If a conflicting certificate is later formed at a higher view, replicas must roll back their local-ledger state to maintain safety. We illustrate this necessity with the following scenario.

Assume the system starts in the initial state $\bot$. In view $1$, the leader $\Primary{1}$ proposes a message $B_1$ that extends $\Certificate{\bot}$. A quorum of $\n - \f$ replicas supports the proposal by sending threshold signature shares, allowing $\Primary{1}$ to form the prepare certificate $\Certificate{1}$. This certificate is forwarded to a subset of $\f$ correct replicas, denoted by set $A$. The replicas in $A$ speculatively execute the transactions in $B_1$ and respond to the client.

Now suppose the leader of view $2$, $\Primary{2}$, is also faulty and ignores the highest known certificate $\Certificate{1}$. It proposes a conflicting message $B_2$ extending $\Certificate{\bot}$ and broadcasts it to all replicas. A distinct set of $\n-\f$ replicas, disjoint from $A$, support $B_2$, allowing $\Primary{2}$ to form the conflicting certificate $\Certificate{2}$. $\Primary{2}$ then broadcasts $\Certificate{2}$ to all replicas.

Upon receiving $\Certificate{2}$, the replicas in set $A$ detect that it was formed at a higher view than $\Certificate{1}$, and consequently roll back their local ledger state. After the rollback, all correct replicas speculatively execute transactions in $B_2$ and respond to the client. Once the client receives responses from $\n - \f$ replicas, the transactions in $B_2$ are considered committed by the clients.

\subsection{Speculation Safety in Streamlined-HotStuff-1}
\label{app:safe-spec2}

The following examples demonstrate, in \PHSOne{}, how speculative execution after observing a prepare certificate may violate safety unless both the \textbf{Prefix Speculation Rule} and the \textbf{No Gap Rule} are strictly followed.

\textbf{Prefix Speculation Rule}: We first present a scenario where the Prefix Speculation Rule is violated.
Assume the initial highest certificate across all replicas is $\Certificate{0}$, and the total number of replicas is $\n = 3\f+1$.
Partition the $2\f+1$ correct replicas into three disjoint sets: $A$, $A'$, and $A^*$, such that $\abs{A} = \abs{A'} = \f$ and $\abs{A^*} = 1$.
Suppose the first eight leaders are Byzantine.

\begin{itemize}[wide]
    \item In view $1$, the leader $\Primary{1}$ proposes block $B_1$ that extends $\Certificate{0}$. A set of $\n - \f$ replicas support this proposal by sending their threshold signature shares for $B_1$ to the leader of view $2$, $\Primary{2}$, enabling it to form the prepare certificate $\Certificate{1}$. Assume that $\Primary{2}$ proposes block $B_2$ extending $\Certificate{1}$, but only forwards this certificate to a subset $A$ of $\f$ correct replicas. These replicas in $A$ speculatively execute $B_1$ and reply to the client.
    
    \item In view $3$, $\Primary{3}$ proposes block $B_3$ extending $\Certificate{0}$ and sends it to replicas in sets $A'$ and $A^*$. These replicas support $B_3$, enabling the leader of view $4$, $\Primary{4}$, to form a certificate $\Certificate{3}$. Assume $\Primary{4}$ then proposes block $B_4$ extending $\Certificate{3}$ and forwards it to set $A'$. The replicas in $A'$ speculatively execute $B_3$ and respond to the client.
    
    \item In view $5$, the leader $\Primary{5}$ ignores the higher certificate $\Certificate{3}$ and proposes $B_5$, which extends the lower certificate $\Certificate{1}$, to all replicas. Sets $A$ and $A^*$ support $B_5$, allowing $\Primary{6}$ (view $6$) to form a new certificate $\Certificate{5}$. $\Primary{6}$ forwards $\Certificate{5}$ only to set $A^*$, whose replicas speculatively execute $B_5$ and its prefix $B_1$ (not following the Prefix Speculation Rule).
    
    \item In view $7$, $\Primary{7}$ disregards the highest known certificate $\Certificate{5}$ and proposes $B_7$ extending $\Certificate{3}$ to all replicas. Sets $A$ and $A'$ support $B_7$, enabling $\Primary{8}$ to form certificate $\Certificate{7}$. Although $\Certificate{3}$ conflicts with the highest known certificate $\Certificate{1}$ known to replicas in $A$, they must support $B_7$ due to the higher view number of $\Certificate{3}$. $\Primary{8}$ broadcasts $\Certificate{7}$ to all replicas. 
    
    Eventually, $\Certificate{7}$ becomes the highest known certificate of all correct replicas and will be committed.
    
    \item However, an unsafe scenario for clients arises: the client for transactions in $B_1$ may have received $\n - \f$ responses from replicas in $A$, $A^*$, and $\f$ faulty replicas, even though $B_1$ will never be committed.
\end{itemize}

This example illustrates the \strongspec{} dilemma: replicas vote to commit a block $B_v$ along with its prefix, but cannot safely speculate on the prefix unless specific conditions are met.
According to the \strongspec{} rule (Definition~\ref{def:strongspec}), speculative execution is safe only when the prefix of $B_v$ is already committed. In this example, replicas in $A'$ vote to commit a block $B_5$ along with its prefix $B_1$, but also speculate on the prefix $B_1$, violating the \strongspec{} rule. Thus, the client forms a commit-vote quorum for $B_1$ consisting of $A$, $A^*$, and $\f$ faulty replicas. However, the replicas in $A$, which voted for $B_1$ in view $2$, will switch to support a higher conflicting block $B_3$ after receiving $\Certificate{3}$ in view $7$, which makes the quorum invalid.

\newpage
\textbf{No Gap Rule}: Secondly, we present a scenario where the No Gap Rule is violated, with the same assumption as we had in the previous scenario.

\begin{itemize}[wide]
    \item In view $1$, the leader $\Primary{1}$ proposes block $B_1$ that extends $\Certificate{0}$. A set of $\n - \f$ replicas support this proposal by sending their threshold signature shares for $B_1$ to the leader of view $2$, $\Primary{2}$, enabling it to form the prepare certificate $\Certificate{1}$. Assume that $\Primary{2}$ proposes block $B_2$ extending $\Certificate{1}$, but only forwards this certificate to a subset $A$ of $\f$ correct replicas. These replicas in $A$ speculatively execute $B_1$ and reply to the client.
    
    \item In view $3$, $\Primary{3}$ proposes block $B_3$ extending $\Certificate{0}$ and sends it to replicas in sets $A'$ and $A^*$. These replicas support $B_3$, enabling the leader of view $4$, $\Primary{4}$, to form a certificate $\Certificate{3}$. Assume $\Primary{4}$ then proposes block $B_4$ extending $\Certificate{3}$ and forwards it to set $A'$. The replicas in $A'$ speculatively execute $B_3$ and respond to the client.
    
    \item In view $5$, $\Primary{5}$ proposes $B_5$, which extends $\Certificate{1}$, to only set $A^*$, whose replicas speculatively execute $B_1$ (not following the No Gap Rule).
    
    \item In view $6$, $\Primary{6}$ proposes $B_6$ extending $\Certificate{3}$ to all replicas. All replicas support $B_7$, enabling $\Primary{7}$ to form certificate $\Certificate{6}$. Although $\Certificate{3}$ conflicts with the highest known certificate $\Certificate{1}$ known to replicas in $A$ and $A^*$, they must support $B_6$ due to the higher view number of $\Certificate{3}$. $\Primary{7}$ broadcasts $\Certificate{6}$ to all replicas.

    Eventually, $\Certificate{7}$ becomes the highest known certificate across all correct replicas and will be committed.
    
    \item However, an unsafe scenario for clients arises: the client for transactions in $B_1$ may have received $\n - \f$ responses from replicas in $A$, $A^*$, and $\f$ faulty replicas, even though $B_1$ will never be committed.
\end{itemize}

This example highlights the critical requirement that when a replica $R$ wishes to speculate on a block $B_v$, it must ensure there is \textbf{no view gap} between the view in which the prepare-certificate was formed and its current view. This is necessary to prevent speculative execution on a proposal that may be superseded by a higher certificate formed during the gap—one that remains unknown to $R$. In this example, for replica $A^*$ in view $5$, a view gap exists between its current view and the view in which $\Primary{1}$ formed its certificate; meanwhile, a higher certificate $\Certificate{3}$, formed by $\Primary{4}$ in the gap, will later supersede it.

According to the No Gap Rule (Definition~\ref{def:nogap}), in \PHSOne{}, if $R$ wishes to speculatively execute a block $B_w$, it is safe only if $w=v-1$ and $\Certificate{w}$ is formed in view $v$.

      \section{Correctness Proofs}
\label{app:proof}
In this Section, we prove the \emph{safety} and \emph{liveness} of \PHSOne{}. 
We first prove the {\em safety} guarantee.

\begin{lemma}\label{prop:non_divergent} Let $R_1$ and $R_2$ be two correct replicas that execute blocks $B_{v}^{1}$ and $B_{v}^{2}$ of view $v$. If $\n = 3\f + 1$, then $B_{v}^{1} = B_{v}^{2}$. \end{lemma}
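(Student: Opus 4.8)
The plan is to reduce Lemma~\ref{prop:non_divergent} to a uniqueness statement about prepare-certificates of a single view, and then to prove that uniqueness by the standard quorum-intersection argument, using the fact that a correct replica casts at most one non-empty vote per view in \PHSOne{}.

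\emph{Step 1: executing a view-$v$ block requires a validated view-$v$ certificate.} First I would argue, by inspecting the backup role of Figure~\ref{alg:steamlined-hs1}, that whenever a correct replica executes a block $B_v$ proposed in view $v$ --- whether via the commit rule, via the prefix-commit rule, or speculatively --- it has previously received and validated a prepare-certificate $\Certificate{v}$ for $B_v$: the commit path at Line~\ref{shs1:commit-rule} requires the replica to hold the certificate $\Certificate{w}$ together with the block it extends, and the speculative path at Line~\ref{shs1:spec-exec} is guarded by $w = v-1$ on the certificate $\Certificate{w}$ carried inside the incoming \MName{Propose} message, i.e.\ a genuine $\Certificate{v-1}$ for the block being speculated on. Since correct replicas accept only well-formed messages and a certificate is a $(\n,\n-\f)$ threshold signature (in the implementation, a set of $\n-\f$ distinct replica signatures), any such $\Certificate{v}$ is backed by threshold signature-shares of $\n-\f$ distinct replicas on the tuple $(\,\cdot\,,\,v,\,H(B_v))$.

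\emph{Step 2: at most one certified block per view.} Next I would prove: if $C_1$ and $C_2$ are prepare-certificates formed in the same view $v$, certifying blocks $B_v^{1}$ and $B_v^{2}$ respectively, then $B_v^{1} = B_v^{2}$. Let $Q_1$ and $Q_2$ be the sets of replicas whose shares compose $C_1$ and $C_2$; then $\abs{Q_1} = \abs{Q_2} = \n-\f$, and since $\n = 3\f+1$ we have $\abs{Q_1 \cap Q_2} \ge 2(\n-\f) - \n = \n - 2\f = \f+1 > \f$, so $Q_1 \cap Q_2$ contains some correct replica $R^{*}$. By the view-advancing discipline of the code, a correct replica processes at most one \MName{Propose} message while in view $v$ and then calls \MName{exitView}, which increments the view and disables further voting for $v$ (Line~\ref{shs1:switch-view}); on a timeout for view $v$ it emits only the empty share $\bot$. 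Hence $R^{*}$ produced a non-empty view-$v$ threshold share for at most one block, so its contributions to $C_1$ and $C_2$ sign the same digest, giving $H(B_v^{1}) = H(B_v^{2})$; collision resistance of $H$ then yields $B_v^{1} = B_v^{2}$.

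\emph{Step 3 and the main obstacle.} Combining the two steps: $R_1$'s execution of $B_v^{1}$ supplies a view-$v$ prepare-certificate for $B_v^{1}$ and $R_2$'s execution of $B_v^{2}$ supplies one for $B_v^{2}$, so Step 2 forces $B_v^{1} = B_v^{2}$, as claimed. I expect the real work to be in Step 1, namely making the pseudocode's invariants precise: (i) verifying that every path along which a correct replica executes a block of view $v$ passes through a validated view-$v$ prepare-certificate for exactly that block --- including speculative execution, which the replica may later roll back --- and (ii) stating explicitly the ``at most one non-empty vote per view'' invariant that Figure~\ref{alg:steamlined-hs1} encodes only implicitly through \MName{exitView} and the accompanying ``disable voting'' comments. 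Step 2 itself is the textbook quorum-intersection argument and should go through routinely once these invariants are in hand.
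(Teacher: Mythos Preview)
Your proposal is correct and follows essentially the same approach as the paper: both argue that execution of a view-$v$ block presupposes a valid view-$v$ prepare-certificate, and then use quorum intersection together with the ``one non-empty vote per view'' invariant to conclude uniqueness. The only cosmetic difference is that the paper counts the correct replicas in each quorum separately ($\ge \f+1$ each, disjoint if the blocks differ, hence $\ge 2\f+2$ correct replicas in total, a contradiction), whereas you bound $\abs{Q_1 \cap Q_2} \ge \f+1$ directly to exhibit a common correct voter; these are two phrasings of the same pigeonhole argument, and your more explicit treatment of the Step~1 invariants is, if anything, more careful than the paper's single-sentence assertion.
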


\begin{proof} A correct replica $R_i$ executes a block $B_{v}^{i}$ only after obtaining a prepare certificate for $B_{v}^{i}$, as specified in Figure~\ref{alg:steamlined-hs1}, which consists of threshold signature shares from $\n - \f$ replicas.

Let $S_i$ denote the set of replicas that voted for the proposal containing $B_{v}^{i}$, so $\abs{S_i} = \n - \f = 2\f + 1$.
Let $X_i = S_i \setminus \f$ represent the subset of correct replicas in $S_i$.
Since at most $\f$ replicas may be faulty, we have $\abs{X_i} \geq 2\f + 1 - \f = \f + 1$.

Assume, for the sake of contradiction, that $B_{v}^{1} \neq B_{v}^{2}$.
This implies that $X_1 \cap X_2 = \emptyset$, since any common correct replica would not vote for two distinct blocks in the same view.
Therefore, the combined set $X_1 \cup X_2$ would contain at least $2(\f + 1) = 2\f + 2$ correct replicas.

However, this contradicts the total number of correct replicas in the system, which is $\n - \f = 2\f + 1$.
Thus, our assumption must be false, and we conclude that $B_{v}^{1} = B_{v}^{2}$. \end{proof}

\begin{lemma} \label{lm:no-gap-rule}
    If a replica $R$ receives a certificate $\Certificate{v+1}$ that extends certificate $\Certificate{v}$, then for any view $w > v$, no certificate $\Certificate{w}$ that conflicts with $\Certificate{v}$ can exist.
\end{lemma}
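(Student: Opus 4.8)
The plan is to prove, by strong induction on $w$, the stronger statement: \emph{for every view $w$ with $w \ge v$ at which some certificate $\Certificate{w}$ is formed, $\Certificate{w}$ extends $\Certificate{v}$} (with the convention that $\Certificate{v}$ extends itself). This implies the lemma: for $w > v$ a certificate that extends $\Certificate{v}$ does not conflict with it by Definition~\ref{def:conflicting}, and $\Certificate{v}$ cannot extend a certificate of a strictly higher view. I will use two routine facts. First, \emph{at most one certificate is formed per view}: a certificate aggregates threshold shares from $\n - \f = 2\f + 1$ voters, so two certificates for distinct blocks of the same view would, by inclusion--exclusion, share at least $\f + 1$ common correct voters, forcing some correct replica to vote for two distinct blocks in one view --- impossible (this is the quorum-intersection argument already used in Lemma~\ref{prop:non_divergent}). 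Second, at every correct replica the current view and the highest-certificate view $v_{lp}$ are both monotone non-decreasing; the latter because $\Certificate{v_{lp}}$ is replaced by $\Certificate{w}$ only when $w \ge v_{lp}$ (Figure~\ref{alg:steamlined-hs1}, Line~\ref{shs1:highest-cert}).

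Next I would identify the ``locked'' quorum. Since $\Certificate{v+1}$ is formed and extends $\Certificate{v}$, the certified block $B_{v+1}$ carries $\Certificate{v}$, and $\Certificate{v+1}$ was assembled from $\n - \f$ threshold shares for $B_{v+1}$ sent in \MName{NewView} messages (Line~\ref{shs1:send-newview}). A correct replica emits such a share only after the test $w \ge v_{lp}$ succeeds with $w = v$, and it then sets $v_{lp} := v$. Hence there is a set $Q$ of at least $(\n - \f) - \f = \f + 1$ correct replicas, each of which holds $v_{lp} \ge v$ from the moment it processed the view-$(v+1)$ proposal onward (using monotonicity of $v_{lp}$). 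The base case $w = v + 1$ is then immediate: by the per-view uniqueness fact the only certificate of view $v+1$ is $\Certificate{v+1}$, which extends $\Certificate{v}$ by hypothesis.

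For the inductive step I would fix $w > v+1$ and assume every certificate of a view strictly between $v$ and $w$ extends $\Certificate{v}$. Suppose $\Certificate{w}$ is formed; it certifies a block $B_w$ that extends some certificate $\Certificate{x}$. The $\n - \f$ voters for $B_w$ and the $\f + 1$ replicas of $Q$ must intersect, since $(\n - \f) + (\f + 1) = 3\f + 2 > \n$; let $r$ be a correct replica in the intersection. Because a correct replica advances through views monotonically, $r$ was in view $v+1$ strictly before being in view $w$, so when $r$ voted for $B_w$ it already had $v_{lp} \ge v$; the vote required $x \ge v_{lp}$, hence $x \ge v$. If $x = v$, the uniqueness fact gives $\Certificate{x} = \Certificate{v}$, so $B_w$ extends $\Certificate{v}$; if $x > v$, the inductive hypothesis gives that $\Certificate{x}$ extends $\Certificate{v}$, so $B_w$ --- and therefore $\Certificate{w}$ --- extends $\Certificate{v}$ transitively. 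This closes the induction.

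The crux of the argument, and the step I expect to need the most care, is the bookkeeping around $Q$: one must pin down that being in $Q$ entails $v_{lp} \ge v$ \emph{already at the instant $r$ casts its vote for $B_w$}, which is exactly where monotonicity of the current view (so that the $B_{v+1}$-vote provably precedes the $B_w$-vote) is used; and one must invoke the per-view uniqueness fact precisely at view $v$ to upgrade ``$B_w$ extends some certificate of view $v$'' to ``$B_w$ extends $\Certificate{v}$''. Everything else is standard quorum intersection, and no synchrony assumption enters, consistent with this being a pure safety property.
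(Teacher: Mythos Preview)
Your proof is correct and, in fact, more careful than the paper's own argument. Both proofs rest on the same core idea: the $\f+1$ correct replicas that voted for $B_{v+1}$ have $v_{lp} \ge v$ thereafter, and by quorum intersection at least one of them must be among the voters for any later certificate, forcing that certificate's extended view to be at least $v$. The paper stops there, asserting that ``any correct replica that sets $\Certificate{v}$ as its highest known certificate will not vote for a conflicting block'' and deriving a contradiction from $|A| + |A'| > 2\f+1$. But that assertion is only immediate when the conflicting block extends a certificate of view \emph{strictly below} $v$; if it extends some $\Certificate{x}$ with $v < x < w$, one needs to know that $\Certificate{x}$ itself already extends $\Certificate{v}$ --- which is precisely the inductive hypothesis you invoke. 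The paper's argument is really a minimal-counterexample proof with the ``minimal'' left implicit; your explicit strong induction, together with the per-view uniqueness lemma and the monotonicity bookkeeping on $v_{lp}$ and the current view, makes the argument watertight. This is the standard HotStuff-style safety induction, and your identification of the two delicate points (ordering the vote for $B_{v+1}$ before the vote for $B_w$, and using uniqueness at view $v$) is exactly right.
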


\begin{proof}
    We know that a replica $R$ received $\Certificate{v+1}$ that extends $\Certificate{v}$, 
    which is only possible if $\n-\f = 2\f+1$ replicas that set $\Certificate{v}$ as their higher known certificate also voted for $\Certificate{v+1}$.
    Let's denote the $\f+1$ correct replicas from these $\n-\f$ replicas as $A$.
    Further, certificate $\Certificate{w}$ conflicts with $\Certificate{v}$, $w > v$, which 
    implies that $\Certificate{v}$ and $\Certificate{w}$ extend the same ancestor and $\Certificate{w}$ received support of $\n-\f = 2\f+1$ replicas.
    Let's denote the $\f+1$ correct replicas from these $\n-\f$ replicas as $A'$.
    As $w \neq v+1$, so $w > v+1$.
    Moreover, any correct replica that sets $\Certificate{v}$ as its highest known certificate will not vote for a conflicting block.
    Thus, $\abs{A}+\abs{A'} = 2\f+2$, which is more than the total number of correct replicas and is a contradiction. 
\end{proof}

\begin{corollary} \label{corollary:uniquebranch} If $\f + 1$ correct replicas speculatively execute a block $B_v$, then no higher-view block $B_w$ with $w > v$ that conflicts with $B_v$ can be committed. \end{corollary}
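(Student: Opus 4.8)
The plan is to exploit the fact that the $\f{+}1$ correct replicas that speculatively execute $B_v$ form a ``lock quorum'' for the certificate $\Certificate{v}$ of $B_v$, and then to re-run the quorum-intersection argument of Lemma~\ref{lm:no-gap-rule} off this lock quorum instead of off an already-formed next-view certificate. The first step is the locking claim. By the No-Gap Rule (Definition~\ref{def:nogap}) and the streamlined code of Figure~\ref{alg:steamlined-hs1}, a correct replica $R$ speculatively executes $B_v$ only while it is in view $v{+}1$ and is processing a view-$(v{+}1)$ proposal whose carried certificate is exactly $\Certificate{v}$ and whose block extends $\Certificate{v}$; in particular $\Certificate{v}$ exists, being a threshold signature from $\n{-}\f$ replicas. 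Since $v_{lp}$ is monotone non-decreasing and a correct replica contributes a (non-$\bot$) threshold share to a proposal only when the certificate that proposal carries has view at least $v_{lp}$, every replica in the set $A$ of these $\f{+}1$ correct replicas has $v_{lp}\ge v$ from view $v{+}1$ onward and hence, in every later view, votes only for blocks extending a certificate of view $\ge v$; moreover each such replica has already cast its view-$(v{+}1)$ vote for some block extending $\Certificate{v}$.

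Next I would show, by strong induction on $w>v$, that no certificate of view $w$ conflicts with $\Certificate{v}$. Base case $w=v{+}1$: since every replica in $A$ has already voted in view $v{+}1$ for a block extending $\Certificate{v}$, no view-$(v{+}1)$ block that fails to extend $\Certificate{v}$ can collect the $\n{-}\f$ votes needed for a certificate, so any view-$(v{+}1)$ certificate extends $\Certificate{v}$ and does not conflict with it. Inductive step: suppose a certificate $\Certificate{w}$ conflicting with $\Certificate{v}$ exists for some $w>v{+}1$; it is supported by $\n{-}\f$ votes, at least $\f{+}1$ of them from a set $A'$ of correct replicas. Since $\abs{A}+\abs{A'}\ge 2\f{+}2$ exceeds the number $2\f{+}1$ of correct replicas, some correct $R^{*}\in A\cap A'$ voted for the block $B_w$ certified by $\Certificate{w}$. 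As $R^{*}$ was locked on $\Certificate{v}$ while in view $v{+}1<w$, its $v_{lp}\ge v$ when it voted in view $w$, so $B_w$ extends some certificate $\Certificate{w'}$ with $v\le w'<w$. If $w'=v$ then $B_w$ extends $\Certificate{v}$, so $\Certificate{w}$ cannot conflict with $\Certificate{v}$; and if $w'>v$ but $\Certificate{w'}$ did not conflict with $\Certificate{v}$, then $\Certificate{w'}$ (hence $B_w$, hence $\Certificate{w}$) would extend $\Certificate{v}$ — either way a contradiction. Thus $\Certificate{w'}$ conflicts with $\Certificate{v}$ and has view in $(v,w)$, contradicting the induction hypothesis.

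Finally, streamlined \HSOne{} commits only via the Prefix Commit Rule (Definition~\ref{def:consecutive-commit-rule}): a correct replica commits $B_w$ only upon receiving a certificate $\Certificate{w+1}$ extending $\Certificate{w}$, which requires $\Certificate{w}$ itself to exist; if $B_w$ conflicts with $B_v$ then $\Certificate{w}$ conflicts with $\Certificate{v}$, which for $w>v$ is impossible by the induction above, so no such $B_w$ is ever committed. I expect the main obstacle to be the locking claim rather than the combinatorics: unlike in Lemma~\ref{lm:no-gap-rule} we cannot assume the certificate $\Certificate{v+1}$ on the speculated chain is ever formed (a faulty next leader may withhold it), so the whole argument must run on the weaker hypothesis that merely $\f{+}1$ correct replicas are locked on $\Certificate{v}$; and one must still handle the corner case in which a replica speculatively executes $B_v$ while its $v_{lp}$ already exceeds $v$, where the Prefix-Speculation Rule together with the ambient safety invariants must be invoked to argue that the higher certificate it holds itself extends $\Certificate{v}$, so the replica still never supports a block conflicting with $\Certificate{v}$.
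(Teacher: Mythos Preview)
Your proof is correct and takes essentially the same approach as the paper---both argue that the $\f{+}1$ correct speculating replicas hold $\Certificate{v}$ as their highest known certificate and then use quorum intersection to rule out any conflicting higher-view certificate---but you supply considerably more rigor: the paper's three-line proof simply asserts that locked replicas ``will refuse to vote for any certificate conflicting with $B_v$,'' whereas your explicit strong induction on $w$ is what actually justifies that claim (the voting guard is $w \ge v_{lp}$, not a conflict check, so a locked replica could in principle vote for a proposal extending a higher conflicting certificate unless one has already shown inductively that none exists). Your treatment of the base case $w=v{+}1$ and your flagging of the $v_{lp}>v$ corner case are both absent from the paper's terser argument.
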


\begin{proof} By Lemma~\ref{lm:no-gap-rule}, if a correct replica speculatively executes a block $B_v$, it must have observed a valid certificate $\Certificate{v}$ for $B_v$ and set it as its highest known certificate.

If $\f + 1$ correct replicas have speculatively executed $B_v$, then at least $\f + 1$ correct replicas have locked on $\Certificate{v}$. Since there are only $2\f + 1$ correct replicas in total, no quorum of $\n - \f = 2\f + 1$ votes can be collected for any conflicting block $B_w$ with $w > v$, as at least $\f + 1$ correct replicas will refuse to vote for any certificate conflicting with $B_v$.

Thus, no conflicting certificate can be formed at a higher view, and therefore no conflicting block can be committed. \end{proof}

\begin{lemma}\label{lm:certificate} If a correct replica $R$ commits a block $B_v$, then no conflicting block can be committed.\end{lemma}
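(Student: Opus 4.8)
The plan is to reduce the claim to two facts already proved: same‑view agreement (Lemma~\ref{prop:non_divergent}) and the impossibility of a higher‑view conflicting certificate once a consecutive pair of certificates exists (Lemma~\ref{lm:no-gap-rule}). Throughout I read ``$B_v$ is committed'' as ``committed by some correct replica''. I would first record a preliminary observation: a block extends another exactly when its certificate extends the other's (by the definition of extending certificates), and therefore two blocks conflict if and only if their associated certificates conflict.

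Next I would characterize commitment. In streamlined \HSOne{} the only commit rule is the prefix commit rule (Definition~\ref{def:consecutive-commit-rule}): a correct replica marks $B_v$ committed only upon receiving a certificate $\Certificate{v+1}$ that extends $\Certificate{v}$, at which step it also executes $B_v$ (``Execute all transactions up to (incl.) $B_v$'' in Figure~\ref{alg:steamlined-hs1}). Hence committing $B_v$ presupposes that both $\Certificate{v}$ and $\Certificate{v+1}$ exist with the latter extending the former.

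With this, I would argue by contradiction. Assume $R$ commits $B_v$ while some correct replica $R'$ commits a conflicting block $B_{v'}$. From the characterization, $\Certificate{v+1}$ extends $\Certificate{v}$, $\Certificate{v'+1}$ extends $\Certificate{v'}$, and by the observation $\Certificate{v'}$ conflicts with $\Certificate{v}$. I then split on $v$ versus $v'$. If $v = v'$: both $R$ and $R'$ execute a block of view $v$, so Lemma~\ref{prop:non_divergent} gives $B_v = B_{v'}$, contradicting that the two blocks conflict. If $v \neq v'$: take WLOG $v < v'$ (otherwise exchange the roles of $(R,v)$ and $(R',v')$; the argument is symmetric). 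Since $R$ received $\Certificate{v+1}$ extending $\Certificate{v}$, Lemma~\ref{lm:no-gap-rule} rules out any certificate $\Certificate{w}$ with $w > v$ conflicting with $\Certificate{v}$; but $\Certificate{v'}$ has $v' > v$ and conflicts with $\Certificate{v}$ --- a contradiction. So no conflicting block can be committed.

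The step I expect to require care is the preliminary observation linking conflicting blocks to conflicting certificates: it relies on the extension order of blocks matching that of certificates (and on each committed block having a well‑defined certificate via the commit rule), so I would spell it out explicitly. After that the proof is a two‑case argument --- one line from Lemma~\ref{prop:non_divergent} when $v=v'$, and a direct application of Lemma~\ref{lm:no-gap-rule} plus a one‑sentence symmetry remark when $v\neq v'$.
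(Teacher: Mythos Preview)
Your proposal is correct and follows essentially the same approach as the paper: both reduce commitment to the existence of a consecutive pair $\Certificate{v},\Certificate{v+1}$ via the prefix commit rule and then invoke Lemma~\ref{lm:no-gap-rule} to rule out any conflicting higher-view certificate. Your version is slightly more careful in that you explicitly dispatch the $v=v'$ case with Lemma~\ref{prop:non_divergent}, whereas the paper silently assumes $w>v$ without loss of generality; otherwise the arguments coincide.
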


\begin{proof} Assume, for contradiction, that there exists a block $B_w$ proposed in view $w > v$ that conflicts with $B_v$, and that another correct replica $R'$ has committed $B_w$. This would imply that the global ledgers at replicas $R$ and $R'$ have diverged, violating safety.

For both $B_v$ and $B_w$ to be committed, replicas $R$ and $R'$ must have followed the prefix commit rule described in \S\ref{ss:pipe-onephase}:
\begin{itemize} \item $R$ must have received a certificate $\Certificate{v+1}$ that extends $\Certificate{v}$, thereby committing $B_v$. \item $R'$ must have received a certificate $\Certificate{w+1}$ that extends $\Certificate{w}$, thereby committing $B_w$. \end{itemize}

Since $w > v$ and $w \ne v+1$, it follows that $w > v+1$.
However, by Lemma~\ref{lm:no-gap-rule}, once certificates $\Certificate{v}$ and $\Certificate{v+1}$ are formed, no conflicting certificate $\Certificate{w}$ for $w > v$ can be constructed.
Therefore, the assumption that $B_w$ was committed by $R'$ leads to a contradiction.

Hence, once a correct replica commits $B_v$, no conflicting block can be committed. \end{proof}

\begin{theorem}\label{thm:safety} \textbf{(Safety)} \PHSOne{} guarantees consensus safety in a system with $n \ge 3\f + 1$ replicas: if two correct replicas $R_1$ and $R_2$ commit blocks $B_1$ and $B_2$, respectively, at the same position $k$ in the global ledger, then $B_1 = B_2$. \end{theorem}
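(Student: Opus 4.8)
The plan is to derive the theorem as a short consequence of the two safety lemmas already established, namely Lemma~\ref{prop:non_divergent} (intra-view uniqueness of executed blocks) and Lemma~\ref{lm:certificate} (no correct replica ever commits a block conflicting with an already-committed one). Assume for contradiction that $R_1$ commits $B_1$ at position $k$, $R_2$ commits $B_2$ at position $k$, and $B_1 \ne B_2$.

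First I would pin down the relationship between a ledger position and the block-chain structure. In streamlined \PHSOne{} a correct replica only ever writes to the global-ledger via the prefix commit rule (Definition~\ref{def:consecutive-commit-rule}), and committing a block commits its \emph{entire} prefix, placing each ancestor at the position equal to its depth in the chain (i.e.\ its number of ancestors). Hence the position at which a correct replica commits a block is intrinsic to that block, and two \emph{distinct} blocks $B_1 \ne B_2$ committed at the same position $k$ cannot be related by the \emph{extends} relation---if, say, $B_1$ extended $B_2$, then $B_2$ would sit at a strictly smaller depth, contradicting that both are at depth $k$. Therefore $B_1$ and $B_2$ must be \emph{conflicting} in the sense of Definition~\ref{def:conflicting}.

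Next I would case-split on the views $v_1 = \view{B_1}$ and $v_2 = \view{B_2}$. Since committing a block entails having executed it, if $v_1 = v_2$ then Lemma~\ref{prop:non_divergent} applied to $R_1$ and $R_2$ yields $B_1 = B_2$, contradicting $B_1 \ne B_2$; hence $v_1 \ne v_2$, and without loss of generality $v_1 < v_2$ (otherwise swap the roles of $R_1$ and $R_2$). Now apply Lemma~\ref{lm:certificate} to the correct replica $R_1$ and the block $B_1$ it committed: no block conflicting with $B_1$ can be committed by any correct replica. But the previous paragraph shows $B_2$ conflicts with $B_1$, and the correct replica $R_2$ has committed $B_2$---a contradiction. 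Therefore $B_1 = B_2$, which is exactly the claimed safety property.

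The main obstacle is purely the bookkeeping of the second paragraph: formalizing that the global-ledger position of a committed block equals its depth in the block chain and that committing a block commits its whole prefix, so that ``same position $k$ but distinct'' forces ``conflicting.'' This follows directly from the shape of the prefix commit rule and the \emph{extends} relation, but stating it cleanly rather than hand-waving is where care is needed; once it is in place, the theorem is two lines from Lemmas~\ref{prop:non_divergent} and~\ref{lm:certificate}.
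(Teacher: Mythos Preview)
Your proposal is correct and takes essentially the same approach as the paper: both combine the position/depth bookkeeping (distinct blocks at the same ledger position cannot be related by \emph{extends}) with Lemma~\ref{lm:certificate}, merely running the implication in opposite directions---the paper first invokes Lemma~\ref{lm:certificate} to get non-conflict, hence extension, hence a position contradiction, whereas you first derive conflict from the position argument and then invoke the lemma. Your case split on $v_1 = v_2$ via Lemma~\ref{prop:non_divergent} is harmless but redundant, since two distinct same-view blocks already conflict by Definition~\ref{def:conflicting} and are handled directly by Lemma~\ref{lm:certificate}.
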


\begin{proof} By Lemma~\ref{lm:certificate}, once a correct replica commits a block, no conflicting block can be committed. Therefore, both $B_1$ and $B_2$ are permanently part of the respective global ledgers of $R_1$ and $R_2$, and must not conflict.

Suppose, for the sake of contradiction, that $B_1 \ne B_2$ and that $B_2$ extends $B_1$. Then, $B_1$ must occupy position $k$ in the global ledger of $R_1$, and also appear as part of the prefix of $B_2$. This would imply that the prefix of $B_2$ contains $k-1$ blocks (including $B_1$), while the prefix of $B_1$ contains at most $k-2$ blocks. However, this contradicts the assumption that $B_1$ is committed at position $k$ with $k-1$ blocks in its prefix.

Consequently, $B_1$ and $B_2$ must be of the same view and must be the same block, and we conclude that no two correct replicas can commit different blocks at the same position. Hence, \PHSOne{} guarantees consensus safety. \end{proof}

Next, we prove the liveness guarantee of streamlined \HSOne{}.
Following prior works~\cite{hotstuff,hs2}, we assume the existence of a Global Stabilization Time (GST) and an appropriately chosen view timer length $\tau$, such that correct replicas eventually overlap in the same view after view synchronization.
This assumption ensures that the timer is long enough for the leader to process \MName{NewView} messages, obtain the highest known certificate, propose a block, and for replicas to respond with votes.
We denote by $v_s$ the first synchronized view after GST.

\begin{lemma}\label{lm:supportedbyall}
    For the leader $\Primary{v}$ of view $v$, where $v \ge v_s$, its proposal will be supported by all correct replicas.
\end{lemma}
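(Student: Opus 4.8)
The plan is to argue that, after GST and after the first synchronized view $v_s$, the leader $\Primary{v}$ of any view $v \ge v_s$ can always assemble a well-formed proposal that extends the highest certificate held by any correct replica, and that any such proposal satisfies the voting predicate of every correct replica (Line~\ref{shs1:highest-cert} of Figure~\ref{alg:steamlined-hs1}, i.e. $w \ge v_{lp}$ at the receiving replica). The crux is therefore twofold: (i) $\Primary{v}$ learns the highest certificate known to the correct replicas before it proposes, and (ii) the certificate it chooses to extend is not lower than the $v_{lp}$ of any correct replica at the time that replica processes the proposal.

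First I would invoke the pacemaker guarantee stated at the end of \S\ref{ss:pacemaker}: for $v \ge v_s$, if some correct replica enters view $v$ at time $t$ and sets its timer to expire at $t'$, then all correct replicas enter view $v$ by $t+2\Delta$, and no correct replica times out before $t'-2\Delta$; moreover \MName{ShareTimer}($v$) returns at $t+3\Delta$. Since every correct replica, upon exiting view $v-1$, sends a \MName{NewView} message for view $v$ (carrying its highest known certificate $\Certificate{v_{lp}}$) to $\Primary{v+1}$ — and symmetrically $\Primary{v}$ receives the analogous messages for view $v$ — the leader $\Primary{v}$ receives \MName{NewView} messages from all $\ge \n-\f$ correct replicas by time $t+3\Delta$. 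By Line~\ref{shs1:wait} of Figure~\ref{alg:steamlined-hs1}, $\Primary{v}$ waits until it either forms $\Certificate{v-1}$, or receives $\n$ \MName{NewView} messages, or \MName{pacemaker.ShareTimer}($v$) fires; in all cases it will, by $t+3\Delta$, have received every correct replica's highest certificate, hence learned the maximum certificate $\Certificate{v_{lp}^{\max}}$ held by any correct replica and set its own $\Certificate{v_{lp}}$ to it. It then broadcasts $\MName{Propose}(B_v, v, \Certificate{v_{lp}})$ with $\Certificate{v_{lp}} = \Certificate{v_{lp}^{\max}}$.

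Next I would show that when an arbitrary correct replica $R$ processes this proposal, its local $v_{lp}$ is no larger than the view $w$ of the carried certificate. The key observation is a monotonicity argument: after GST and view synchronization, correct replicas do not learn a certificate that $\Primary{v}$ missed. A correct replica's $v_{lp}$ can only be updated via a certificate it receives in a \MName{Propose} or \MName{NewView} message; any certificate for a view $< v$ that a correct replica holds was already reported to $\Primary{v}$ in its \MName{NewView} message (since $R$ sent that message before $\Primary{v}$ finished waiting), and $\Primary{v}$ incorporated the maximum; and no certificate for a view $\ge v$ can have been formed yet because $\Primary{v}$'s proposal is the first proposal of view $v$ and $\ge\n-\f$ correct replicas are still in view $v$ (by the pacemaker, no correct replica has advanced to $v+1$). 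Hence $w \ge v_{lp}$ holds at every correct $R$, so by Line~\ref{shs1:highest-cert} every correct replica sends a \MName{NewView} (vote) message for $\Primary{v}$'s proposal. Finally, one should note that the timing budget works out: all correct replicas are in view $v$ by $t+2\Delta$, the leader proposes by $t+3\Delta$, the proposal reaches all correct replicas by $t+4\Delta$, and no correct replica times out before $t'-2\Delta$; choosing $\tau$ (and hence $t'-t$) large enough — as assumed before this lemma — ensures every correct replica receives and votes on the proposal before exiting view $v$.

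The main obstacle I anticipate is step (ii): carefully ruling out the possibility that some correct replica $R$ holds a certificate $\Certificate{w'}$ with $w' > w$ that $\Primary{v}$ did not see. This requires combining the pacemaker's "no early timeout / synchronized entry" property with the structural fact that $\Primary{v}$'s is the first view-$v$ proposal, so that no view-$\ge v$ certificate exists yet, and that every view-$<v$ certificate held by a correct replica was reported in a \MName{NewView} message that provably arrived at $\Primary{v}$ before it proposed. This is essentially an induction on views $\ge v_s$ (the base case being $v_s$ itself), and the delicate point is synchronizing the "arrived before proposing" claim with the $3\Delta$ \MName{ShareTimer} bound; everything else is a routine quorum-counting check using $\n = 3\f+1$.
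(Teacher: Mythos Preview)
Your proposal is correct and follows essentially the same route as the paper's own proof: invoke the pacemaker guarantee so that all correct replicas enter view $v$ by $t+2\Delta$, conclude that $\Primary{v}$ has all correct replicas' \MName{NewView} messages (hence their highest certificates) by $t+3\Delta$, then observe that the proposal reaches everyone by $t+4\Delta$ while all correct replicas are still in view $v$. The paper's proof is terser and does not spell out your step~(ii) (the monotonicity argument that no correct replica's $v_{lp}$ can exceed the certificate carried in the proposal at the moment of voting); you are right that this is the only non-automatic point, but it follows immediately once one has that $\Primary{v}$ received \emph{every} correct replica's highest certificate before proposing and that no view-$\ge v$ certificate can yet exist, so your extra care there is welcome rather than a deviation.
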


\begin{proof}
    Assume that $\Primary{v}$ enters view $v$ at time $t$. According to~\cite{raresync,lewis}, the PaceMaker ensures that all correct replicas enter view $v$ by $t + 2\Delta$. Thus, $\Primary{v}$ can receive \MName{NewView} messages from all correct replicas by $t + 3\Delta$. If $\Primary{v}$ forms a certificate $\Certificate{v-1}$, then it is guaranteed to be the highest certificate known to any correct replica; otherwise, the highest certificate can still be learned from the received \MName{NewView} messages.

    The proposal sent by $\Primary{v}$ will arrive at all correct replicas by $t + 4\Delta$. By setting a sufficiently long timer, all correct replicas remain in view $v$ upon receiving the proposal and will vote for it.
\end{proof}

\begin{lemma}\label{lm:3consecutive}
     Assume three consecutive correct leaders: $\Primary{v}$, $\Primary{v+1}$, and $\Primary{v+2}$, with $v \ge v_s$. If $\Primary{v}$ proposes a block $B_{v}$ in view $v$, then all correct replicas will commit $B_{v}$ in view $v+2$.
\end{lemma}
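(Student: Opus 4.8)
The plan is to chain the three correct leaders together: each correct leader forces the next certificate in the ``snake'' to be formed and extended, so that by the time $\Primary{v+2}$ broadcasts its proposal, every correct replica observes a certificate $\Certificate{v+1}$ that extends $\Certificate{v}$ and applies the prefix commit rule (Definition~\ref{def:consecutive-commit-rule}) to $B_v$. Throughout I rely on Lemma~\ref{lm:supportedbyall} (valid since each of $v$, $v+1$, $v+2$ is $\ge v_s$ and its leader is correct) and on quorum intersection in the style of Lemma~\ref{prop:non_divergent}.

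First I would argue the step from view $v$ to view $v+1$. Since $\Primary{v}$ is correct and $v \ge v_s$, Lemma~\ref{lm:supportedbyall} gives that $B_v$'s proposal is accepted and voted by every correct replica, each of which sends a \MName{NewView} message to $\Primary{v+1}$ carrying a threshold share over the view-$v$ proposal (Figure~\ref{alg:steamlined-hs1}). Because $\Primary{v+1}$ is correct and $v+1 \ge v_s$, the pacemaker guarantee underlying Lemma~\ref{lm:supportedbyall} ensures it collects at least $\n-\f$ such shares before its share timer fires, so it assembles the certificate $\Certificate{v}$ for $B_v$. I then need to show $\Certificate{v}$ is the highest certificate $\Primary{v+1}$ knows: no certificate for any view $>v$ can exist yet, since such a certificate would require $\n-\f$ votes cast in a view $>v$ while only views $\le v$ have completed; and $\Certificate{v}$ is the unique view-$v$ certificate, because two conflicting view-$v$ proposals cannot both gather $\n-\f$ votes (the argument of Lemma~\ref{prop:non_divergent}). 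Hence a correct $\Primary{v+1}$ proposes $B_{v+1}$ extending $\Certificate{v}$.

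Next I would repeat the argument one level up. By Lemma~\ref{lm:supportedbyall} applied to view $v+1$, all correct replicas accept and vote for $B_{v+1}$ — their highest known certificate has view $\le v$, so the ``$w \ge v_{lp}$'' check in Figure~\ref{alg:steamlined-hs1} passes — sending shares over the view-$(v+1)$ proposal to $\Primary{v+2}$. The correct $\Primary{v+2}$ forms $\Certificate{v+1}$, and since the construction of $B_{v+1}$'s proposal includes $\Certificate{v}$, by the definition of extending certificates $\Certificate{v+1}$ extends $\Certificate{v}$. As before, $\Certificate{v+1}$ is the highest certificate $\Primary{v+2}$ knows, so it broadcasts a \MName{Propose} message $m$ for $B_{v+2}$ carrying $\Certificate{v+1}$. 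Invoking Lemma~\ref{lm:supportedbyall} a third time (view $v+2 \ge v_s$), $m$ reaches every correct replica while it is still in view $v+2$; on receipt each correct replica evaluates the commit rule of Figure~\ref{alg:steamlined-hs1} with $w = v+1$, finds that $\Certificate{v+1}$ extends $\Certificate{v}$, and therefore executes and commits all blocks up to and including $B_{w-1} = B_v$, writing them to the global-ledger. Thus every correct replica commits $B_v$ during view $v+2$.

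The main obstacle I anticipate is the bookkeeping around ``highest known certificate'' at each of the two inductive steps: I must rule out any higher or conflicting certificate that a correct leader would otherwise be obliged to extend, and I must make the streamlining off-by-one explicit (the leader of view $v+k$ is the one that assembles $\Certificate{v+k-1}$ and republishes it inside its own proposal). Both follow from quorum intersection together with Lemma~\ref{lm:supportedbyall} and the non-divergence reasoning of Lemma~\ref{prop:non_divergent}, but they need to be spelled out carefully so that the relation ``$\Certificate{v+1}$ extends $\Certificate{v}$'' is genuinely established rather than merely asserted, and so that the commit rule is triggered with the correct indices at every correct replica.
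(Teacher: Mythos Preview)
Your proposal is correct and follows essentially the same approach as the paper: chain Lemma~\ref{lm:supportedbyall} across views $v$, $v+1$, $v+2$ so that $\Primary{v+1}$ forms $\Certificate{v}$, $\Primary{v+2}$ forms $\Certificate{v+1}$ extending $\Certificate{v}$, and every correct replica fires the prefix commit rule on $B_v$ upon receiving the view-$(v{+}2)$ proposal. You are in fact more careful than the paper's own proof about the ``highest known certificate'' bookkeeping and the off-by-one in who assembles which certificate, and you avoid the paper's typo (it writes ``view $v+3$'' where it means $v+2$).
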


\begin{proof}
    By Lemma~\ref{lm:supportedbyall}, in view $v$, block $B_{v}$ will be supported by all correct replicas, allowing $\Primary{v+1}$ to form a certificate $\Certificate{v}$.

    Similarly, in view $v+1$, block $B_{v+1}$ extending $\Certificate{v}$ will be supported by all correct replicas, enabling $\Primary{v+2}$ to form $\Certificate{v+1}$.

    Then, in view $v+2$, all correct replicas will receive block $B_{v+2}$ extending $\Certificate{v+1}$, which satisfies the prefix commit rule for $B_{v}$. Hence, $B_{v}$ will be committed by all correct replicas in view $v+3$.
\end{proof}

\begin{theorem}\label{thm:liveness}
    (\textbf{Liveness}) All correct replicas eventually commit a transaction $T$.
\end{theorem}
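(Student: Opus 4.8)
The plan is to reduce liveness to Lemma~\ref{lm:3consecutive} by showing that after GST there is always a window of three consecutive views whose leaders are all correct, and then letting that lemma do the work. First I would recall the pacemaker guarantee from \S\ref{ss:pacemaker}: after GST, once the first synchronization completes at view $v_s$, every correct replica reaches each view $v \ge v_s$, all correct replicas enter $v$ within $2\Delta$ of one another, and (with the view timer $\tau$ chosen large enough, as assumed) no correct replica exits $v$ before its leader has had time to collect \MName{NewView} messages, learn the highest certificate, propose a block, and receive votes. This is precisely the regime in which Lemma~\ref{lm:supportedbyall}, and hence Lemma~\ref{lm:3consecutive}, applies; note this holds for \emph{every} $v \ge v_s$, so it is insensitive to whether a chosen window straddles an epoch boundary, since resynchronization still occurs at that boundary without disturbing per-view progress.

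Second I would establish that three consecutive views $v, v+1, v+2$ with $v \ge v_s$ all have correct leaders. Since the leader of view $v$ is the replica with identifier $v \bmod \n$, the correct/faulty pattern of leaders over $0,1,2,\dots$ is periodic with period $\n$ and coincides with the cyclic arrangement of the $\n \ge 3\f+1$ replicas, of which at most $\f$ are faulty. Removing the faulty positions from this cycle leaves at most $\f$ maximal arcs of correct replicas containing at least $\n-\f \ge 2\f+1$ replicas in total, so by pigeonhole one arc has length at least $\lceil (2\f+1)/\f \rceil \ge 3$ (the case $\f=0$ being trivial). This arc of three or more consecutive correct leaders recurs once per period of the schedule, hence occurs at infinitely many windows, in particular at some $v, v+1, v+2$ with $v \ge v_s$.

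Third I would apply Lemma~\ref{lm:3consecutive} to such a window: the correct leader $\Primary{v}$, which by the standing assumption of an ongoing stream of client requests has a nonempty batch of pending transactions, proposes a block $B_v$ in view $v$, and the lemma gives that all correct replicas commit $B_v$ (the prefix-commit rule of Definition~\ref{def:consecutive-commit-rule} fires once they receive the view-$(v{+}2)$ proposal carrying $\Certificate{v+1}$, which extends $\Certificate{v}$). Since $B_v$ contains at least one client transaction, every correct replica commits a transaction, which is exactly the liveness claim; iterating over the infinitely many such windows shows correct replicas keep making progress. I expect the main obstacle to be the second step—arguing rigorously that a window of three consecutive correct leaders must exist and recur after $v_s$—which rests on recognizing the schedule's periodicity and applying the cyclic pigeonhole bound, together with a careful check of the degenerate cases ($\f=0$, adjacent faulty replicas, and windows crossing epoch boundaries) so that the pacemaker's synchronization guarantee is not compromised; the remaining steps are routine given Lemmas~\ref{lm:supportedbyall} and~\ref{lm:3consecutive}.
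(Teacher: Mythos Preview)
Your proposal is correct and follows essentially the same approach as the paper: reduce to Lemma~\ref{lm:3consecutive} by exhibiting three consecutive correct leaders after $v_s$, then invoke that lemma. The paper's own proof simply asserts the existence of such a window from round-robin rotation with $\n=3\f+1$ and $\le\f$ faults, whereas you supply the explicit cyclic pigeonhole argument; this is added rigor, not a different route.
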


\begin{proof}
    Since there are $\n = 3\f + 1$ replicas and \HSOne{} rotates leaders in a round-robin manner, there must exist a set of three consecutive correct leaders: $\Primary{v}$, $\Primary{v+1}$, and $\Primary{v+2}$, with $v \ge v_s$. By Lemma~\ref{lm:3consecutive}, any transaction $T$ contained in the block proposed in view $v$ will eventually be committed by all correct replicas.
\end{proof}

\begin{corollary}\label{corl:2consecutive}
    Assume two consecutive correct leaders: $\Primary{v}$ and $\Primary{v+1}$, with $v \ge v_s$. If $\Primary{v}$ proposes a block $B_{v}$ in view $v$, then $B_{v}$ will eventually be committed.
\end{corollary}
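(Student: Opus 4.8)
The plan is to bootstrap from the two consecutive correct leaders to a certificate that every correct replica locks onto, and then invoke the liveness machinery already established (Lemma~\ref{lm:3consecutive} and Theorem~\ref{thm:liveness}) to commit some later block whose prefix is forced to contain $B_v$.

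First I would extract the ``locking'' consequence of having $\Primary{v}$ and $\Primary{v+1}$ both correct with $v \ge v_s$. By Lemma~\ref{lm:supportedbyall}, $B_v$ is voted for by all correct replicas, so the (correct) leader $\Primary{v+1}$ forms the certificate $\Certificate{v}$ and embeds it in its own proposal $B_{v+1}$, which extends $\Certificate{v}$. Applying Lemma~\ref{lm:supportedbyall} once more, every correct replica receives and votes for $B_{v+1}$, and in doing so sets its highest known certificate $\Certificate{v_{lp}}$ to a certificate that extends $\Certificate{v}$. Hence, after GST, there is a point from which all $\n-\f = 2\f+1$ correct replicas hold a highest known certificate extending $\Certificate{v}$.

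Next I would show that no certificate conflicting with $\Certificate{v}$ can ever be formed at any view $w > v$; this is the crux and the step I expect to be the main obstacle, since a correct replica's highest known certificate keeps advancing, so being ``locked on $\Certificate{v}$'' is not literally permanent. I would prove it by induction on $w$: assuming no certificate conflicting with $\Certificate{v}$ exists at views in $(v, w)$, every correct replica's highest known certificate still extends $\Certificate{v}$ (it only adopts certificates of view $\ge v$; the one at view $v$ is $\Certificate{v}$ by Lemma~\ref{prop:non_divergent}; any higher one extends $\Certificate{v}$ by the induction hypothesis), so in view $w$ at most the $\f$ faulty replicas can vote for a block not extending $\Certificate{v}$, which is short of the $\n-\f$ votes a certificate requires --- exactly the counting argument used in Corollary~\ref{corollary:uniquebranch} and Lemma~\ref{lm:no-gap-rule}. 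Consequently every certificate of view $> v$, and every block that gathers a quorum of votes at such a view, extends $\Certificate{v}$.

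Finally I would close using progress. Because leaders rotate round-robin and $2\f+1$ of $3\f+1$ replicas are correct, there are infinitely many triples of consecutive correct leaders; fix one, $\Primary{w}, \Primary{w+1}, \Primary{w+2}$, with $w > v+1$ (hence $w \ge v_s$). The correct leader $\Primary{w}$ extends its highest known certificate, which by the previous step extends $\Certificate{v}$, so $B_v$ lies in the prefix of $B_w$. By Lemma~\ref{lm:3consecutive}, all correct replicas commit $B_w$, and under the prefix commit rule committing $B_w$ commits its whole prefix, in particular $B_v$; alternatively one may appeal directly to Theorem~\ref{thm:liveness} to obtain some committed block of view exceeding $v+1$, which likewise extends $\Certificate{v}$. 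Either way $B_v$ is eventually committed, as claimed.
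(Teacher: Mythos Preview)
Your proposal is correct and follows essentially the same approach as the paper: establish that after views $v$ and $v+1$ all correct replicas lock on (a certificate extending) $\Certificate{v}$, argue that no conflicting certificate can thereafter form, and then invoke the existence of a later run of three consecutive correct leaders (Lemma~\ref{lm:3consecutive}/Theorem~\ref{thm:liveness}) to commit a block whose prefix contains $B_v$. Your inductive argument for the ``no conflicting certificate at any $w>v$'' step is in fact more explicit than the paper's, which simply asserts this consequence; the extra rigor is welcome and the overall route is the same.
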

     
\begin{proof}
    Consider the setting in Lemma~\ref{lm:3consecutive}, stopping at two consecutive correct leaders: $\Primary{v}$ and $\Primary{v+1}$. In view $v+2$, all correct replicas will eventually receive a block $B_{v+1}$ that extends $\Certificate{v}$ and will adopt $\Certificate{v}$ as their highest known certificate. This ensures that in any future view, no conflicting certificate with $\Certificate{v}$ can be formed. 
    
    By Theorem~\ref{thm:liveness}, we know that there will eventually be a set of three consecutive correct leaders—say $\Primary{w}$, $\Primary{w+1}$, and $\Primary{w+2}$, with $w \ge v+1$—who commit a block $B_{w}$ that extends the chain containing $B_{v}$, because $B_{v}$ is a lower-view non-conflicting block of $B_{w}$. Since $B_{v}$ is in the prefix of that chain, all correct replicas will eventually commit $B_{v}$.
\end{proof}

\begin{corollary} \label{corl:noconflictcommit} (\textbf{Client Safety}) If a client receives $\n - \f$ matching responses for transactions in a block $B_v$, then $B_v$ will eventually be committed by all correct replicas. \end{corollary}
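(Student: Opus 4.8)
The plan is to show that a quorum of matching responses permanently fixes a chain of blocks through $B_v$, and then to invoke the established liveness machinery to conclude that this chain is eventually committed everywhere; in effect this re-runs the proof of Corollary~\ref{corl:2consecutive}, but starting from a speculatively-certified block rather than from a leader proposal by consecutive correct leaders.

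First I would unpack the $\n-\f$ matching responses. Since at most $\f$ of the responders are faulty, at least $\f+1$ responses come from correct replicas. By the code of streamlined \HSOne{} (Figure~\ref{alg:steamlined-hs1}), a correct replica emits a response for the transactions in $B_v$ only after either (i) speculatively executing $B_v$ — which, by the No-Gap and \strongspec{} rules, requires it to have received a view-$(v{+}1)$ proposal carrying $\Certificate{v}$, after which $\Certificate{v}$ is (at least) its highest known certificate — or (ii) committing $B_v$ via the prefix commit rule, which requires it to hold a certificate $\Certificate{v+1}$ extending $\Certificate{v}$. In either case the responder's highest known certificate is $\Certificate{v}$ or a certificate extending it, so at least $\f+1$ correct replicas are locked on the chain through $B_v$.

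Next I would establish that this chain is permanent: no certificate $\Certificate{w}$ with $w>v$ conflicting with $\Certificate{v}$ can ever be formed. If some correct responder committed $B_v$, this is immediate from Lemma~\ref{lm:no-gap-rule}; otherwise all $\f+1$ correct responders speculatively executed $B_v$, and the claim is exactly the quorum-intersection argument in the proof of Corollary~\ref{corollary:uniquebranch} (a conflicting certificate needs $\n-\f=2\f+1$ votes, but the $\f+1$ locked correct replicas will not contribute them). A consequence I will record is that none of these $\f+1$ replicas ever rolls back $B_v$, since the rollback condition (Definition~\ref{def:rollback}) fires only on a conflicting higher certificate, which by the above does not exist; hence their locked state can only advance to certificates extending $\Certificate{v}$.

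Finally I would run the liveness argument. By Theorem~\ref{thm:liveness} there is, after \emph{GST}, a run of three consecutive correct leaders $\Primary{w},\Primary{w+1},\Primary{w+2}$ with $w$ large enough that everything above has taken effect. When $\Primary{w}$ enters view $w$, the pacemaker (as in Lemma~\ref{lm:supportedbyall}) guarantees it collects \MName{NewView} messages from all correct replicas, including the $\f+1$ locked on the chain through $B_v$; since no higher conflicting certificate exists, the highest certificate $\Primary{w}$ learns across correct replicas is $\Certificate{v}$ itself or an extension of it, so $\Primary{w}$'s proposal $B_w$ extends the chain through $B_v$. By Lemma~\ref{lm:3consecutive}, $B_w$ is committed by all correct replicas, and since $B_v$ lies in the committed prefix of $B_w$, every correct replica commits $B_v$. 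I expect the main obstacle to be the second step: pinning down the exact ``locked'' invariant a correct replica maintains after issuing a speculative response — in particular that its highest known certificate can only move to certificates extending $\Certificate{v}$ — and checking that this argument does not become circular, given that rollback is itself gated on the very conflicting higher certificate we are ruling out.
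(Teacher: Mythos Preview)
Your proposal is correct and follows essentially the same approach as the paper's own proof: both case-split on whether a correct responder committed $B_v$ or merely speculated on it, invoke Corollary~\ref{corollary:uniquebranch} (respectively Lemma~\ref{lm:no-gap-rule}/Lemma~\ref{lm:certificate}) to rule out any higher conflicting certificate, and then appeal to liveness (Theorem~\ref{thm:liveness}, three consecutive correct leaders) to conclude that some $B_w$ extending $B_v$ is eventually committed, dragging $B_v$ into every correct replica's global ledger. Your treatment is a bit more careful than the paper's about the locked invariant and the potential circularity with rollback, but the structure is identical.
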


\begin{proof} By Lemmas~\ref{lm:no-gap-rule} and~\ref{lm:certificate}, we derive this result as follows:

If a client receives $\n - \f$ responses for transactions in block $B_v$, then at least $\f + 1$ of these responses must have come from correct replicas. There are two cases to consider:

\begin{enumerate} \item \textbf{Speculative execution:} At least $\n - \f$ replicas speculatively executed $B_v$ and replied to the client. This quorum must include at least $\f + 1$ correct replicas. By Corollary~\ref{corollary:uniquebranch}, this implies that no higher-view certificate conflicting with $B_v$ can be formed, and thus no conflicting block can be committed.
\item \textbf{Committed execution:} At least one correct replica committed and executed $B_v$. By Lemma~\ref{lm:certificate}, no conflicting block can be committed.

\end{enumerate}

In either case, once the client receives $\n - \f$ matching responses for $B_v$, it is guaranteed that no conflicting block can later be committed. 

By Theorem~\ref{thm:liveness}, we know that there will eventually be a set of three consecutive correct leaders—say $\Primary{w}$, $\Primary{w+1}$, and $\Primary{w+2}$, with $w \ge v+1$—who commit a block $B_{w}$ that extends the chain containing $B_{v}$, because $B_{v}$ is a lower-view non-conflicting block of $B_{w}$. Since $B_{v}$ is in the prefix of that chain, all correct replicas will eventually commit $B_{v}$.
\end{proof}


\end{document}